\def\savespace{}
\newcommand{\restatablelemma}[2]{
  \expandafter\newcommand\csname statelemma#1\endcsname{#2}
  \begin{lemma}\label{#1}
    \expandafter\csname statelemma#1\endcsname{}
  \end{lemma}
}
\newcommand{\restatelemma}[1]{
  \begin{claim}[\cref{#1}]
    \expandafter\csname statelemma#1\endcsname{}
  \end{claim}
}
\newcommand{\restatableproposition}[2]{
  \expandafter\newcommand\csname stateproposition#1\endcsname{#2}
  \begin{proposition}\label{#1}
    \expandafter\csname stateproposition#1\endcsname{}
  \end{proposition}
}
\newcommand{\restateproposition}[1]{
  \begin{claim}[\cref{#1}]
    \expandafter\csname stateproposition#1\endcsname{}
  \end{claim}
}
\newcommand{\restatabletheorem}[2]{
  \expandafter\newcommand\csname statetheorem#1\endcsname{#2}
  \begin{theorem}\label{#1}
    \expandafter\csname statetheorem#1\endcsname{}
  \end{theorem}
}
\newcommand{\restatetheorem}[1]{
  \begin{claim}[\cref{#1}]
    \expandafter\csname statetheorem#1\endcsname{}
  \end{claim}
}
\newcommand{\TODO}[1]{\textcolor{red}{[TODO #1]}}
\renewcommand{\TODO}[1]{}
\long\def\hide#1\endhide{}
\newcommand{\psd}[3]{(#1,#2,#3)}
\renewcommand{\C}[1]{\mathcal{#1}}
\newcommand{\TT}[1]{\mathtt{#1}}
\newcommand{\dbot}{\raisebox{0.05em}{\rotatebox[origin=c]{90}{$\models$}}}
\newcommand{\Aax}{\mathsf{A}}
\newcommand{\Ccut}{\mathsf{C}}
\newcommand{\Wwait}{\mathsf{W}}
\def\prem#1{\mathsf{premiss}(#1)}
\newcommand{\bigpole}{%
  \mathop{\mathpalette\bigp@rp\relax}%
  \displaylimits
}
\newcommand{\bigp@rp}[2]{%
  \vcenter{
    \m@th\hbox{\scalebox{\ifx#1\displaystyle2.1\else1.5\fi}{$#1\independent$}}
  }%
}
\newcommand\independent{\protect\mathpalette{\protect\independenT}{\perp}}
\def\independenT#1#2{\mathrel{\rlap{$#1#2$}\mkern2mu{#1#2}}}
    \newcommand\todoAS[2][]{\todo[color=blue!60,#1]{Alexis: #2}} 
    \newcommand\todoDB[2][]{\todo[color=yellow,#1]{David: #2}} 
\renewcommand{\todo}[2][]{}
\newcommand{\orig}[2]{\begin{tikzpicture}[remember picture]\node[inner sep=0pt,outer sep=1pt] (#1)  {\ensuremath{#2}};\end{tikzpicture}}
\tikzset{%
  remember picture with id/.style={%
    remember picture,
    overlay,
    save picture id=#1,
  },
  save picture id/.code={%
    \edef\pgf@temp{#1}%
    \immediate\write\pgfutil@auxout{%
      \noexpand\savepointas{\pgf@temp}{\pgfpictureid}}%
  },
  if picture id/.code args={#1#2#3}{%
    \@ifundefined{save@pt@#1}{%
      \pgfkeysalso{#3}%
    }{
      \pgfkeysalso{#2}%
    }
  }
}
\def\savepointas#1#2{%
  \expandafter\gdef\csname save@pt@#1\endcsname{#2}%
}
\def\tmk@labeldef#1,#2\@nil{%
  \def\tmk@label{#1}%
  \def\tmk@def{#2}%
}
\newcommand{\FA}{\mathcal{A}_{\mathit{fresh}}}
\newcommand{\Addr}{\mathit{Addr}}
\newcommand{\interp}[1]{\llbracket #1 \rrbracket}
\newcommand{\weight}[1]{\mathsf{w}(#1)}
\newcommand{\visP}[1]{\mathsf{vp}(#1)}
\newcommand{\hiddenP}[1]{\mathsf{hp}(#1)}
\newcommand{\defname}[1]
{\textbf{\emph{#1}}}
\newcommand{\ie}{i.e.,\xspace}
\newcommand{\muMALL}{\ensuremath{\mu\mathsf{MALL}}\xspace}
\newcommand{\muMALLi}{\ensuremath{\mu\mathsf{MALL}^\infty}\xspace}
\newcommand{\muSMALLi}{\ensuremath{\mu\mathsf{SMALL}^\infty}\xspace}
\newcommand{\mupSMALLi}{\ensuremath{\mu\mathsf{pSMALL}^\infty}\xspace}
\newcommand{\muMLLi}{\ensuremath{\mu\mathsf{MLL}^\infty}\xspace}
\newcommand{\muMLLo}{\ensuremath{\mu\mathsf{MLL}^\omega}\xspace}
\newcommand{\muMLLim}{\ensuremath{\mu\mathsf{MLL}_\mathsf{m}^\infty}\xspace}
\newcommand{\muMALLit}{\ensuremath{\mu\mathsf{MALL^\infty_\tau}}\xspace}
\newcommand{\muMLLit}{\ensuremath{\mu\mathsf{MLL^\infty_\tau}}\xspace}
\newcommand{\muMALLim}{\ensuremath{\mu\mathsf{MALL}_\mathsf{m}^\infty}\xspace}
\newcommand{\with}{\binampersand}
\newcommand{\parr}{\bindnasrepma}
\newcommand{\tensor}{{\otimes}}
\newcommand{\plus}{{\oplus}}
\newcommand{\llzero}{\mathbf{0}}
\newcommand{\llone}{\mathbf{1}}
\newcommand{\dai}{{\footnotesize\maltese}}
\newcommand{\trans}[2]{\xrightarrow[\,#2\,]{\,#1\,}}
\newcommand{\dom}{\TT{Dom}}
\newcommand{\Ai}{\mathtt{i}}
\newcommand{\Al}{\mathtt{l}}
\newcommand{\Ar}{\mathtt{r}}
\renewcommand{\phi}{\varphi}
\newcommand{\mkrule}[1]{{\scriptsize\ensuremath{\mathsf{(#1)}}}\xspace}
\newcommand\rweaknatl{\mkrule{WNat_l}}
\newcommand\rweakbooll{\mkrule{WBool_l}}
\newcommand\rax{\mkrule{Ax}}
\newcommand\rcut{\mkrule{Cut}}
\newcommand\rweak{\mkrule{W}}
\newcommand\rbot{\mkrule{\bot}}
\newcommand\rtop{\mkrule{\top}}
\newcommand\rsigma{\mkrule{\sigma}}
\newcommand\rmul{\mkrule{\mu_{l}}}
\newcommand\rmur{\mkrule{\mu_{r}}}
\newcommand\rmu{\mkrule{\mu}}
\newcommand\rnul{\mkrule{\nu_{l}}}
\newcommand\rnur{\mkrule{\nu_{r}}}
\newcommand\rnu{\mkrule{\nu}}
\newcommand\rt{\mkrule{\tau}}
\newcommand\rtp{\mkrule{\tau'}}
\newcommand\rwith{\mkrule{\with}}
\newcommand\rwithl{\mkrule{\with_1}}
\newcommand\rwithr{\mkrule{\with_2}}
\newcommand\rwithi{\mkrule{\with_i}}
\newcommand\rparr{\mkrule{\parr}}
\newcommand\rotimes{\mkrule{\otimes}}
\newcommand\rotimesl{\mkrule{\otimes_l}}
\newcommand\rotimesr{\mkrule{\otimes_r}}
\newcommand\rtensor{\mkrule{\otimes}}
\newcommand\rtensorl{\mkrule{\otimes_l}}
\newcommand\rtensorr{\mkrule{\otimes_r}}
\newcommand\roplus{\mkrule{\oplus}}
\newcommand\roplusl{\mkrule{\oplus_1}}
\newcommand\roplusrr{\mkrule{\oplus_{r2}}}
\newcommand\roplusr{\mkrule{\oplus_2}}
\newcommand\roplusi{\mkrule{\oplus_i}}
\newcommand\roplusj{\mkrule{\oplus_j}}
\newcommand\rone{\mkrule{\llone}}
\newcommand\ronel{\mkrule{\llone_l}}
\newcommand\rdai{\mkrule{\dai}}
\newcommand\rmcut{\mkrule{mcut}}
\newcommand\rmcutone{\mkrule{mcut_1}}
\newcommand\rmcuttwo{\mkrule{mcut_2}}
\newcommand\rmcutpar{{\scriptsize\ensuremath{\mathsf{mcut(\iota, \perp\!\!\!\perp)}}}\xspace}
\newcommand{\pop}{\mathrm{pop}}
\newcommand{\push}{\mathrm{push}}
\newcommand{\N}{\mathbb N}
\newcommand{\A}{\mathcal A}
\newcommand{\Athread}{\A_\mathit{thread}}
\newcommand{\short}{\mathit{short}}
\newcommand{\effect}{\mathit{effect}}
\newcommand{\gray}[1]{\textcolor{gray}{#1}}
\newcommand{\lc}{\mathbf{l}}
\newcommand{\rc}{\mathbf{r}}
\newcommand{\muvee}{\mkrule{\mu,\parr}}
\newcommand{\nuvee}{\mkrule{\nu,\parr}}
\newcommand{\mm}{\mkrule{\mu}}
\newcommand{\vv}{\mkrule{\parr}}
\newcommand{\mnv}{\mkrule{\mu,\nu,\parr}}
\newcommand{\mnvw}{\mkrule{\mu,\nu, \parr, \tensor}}
\newcommand{\mnvwgray}{\mkrule{\gray{\mu},\nu, \parr, \gray{\tensor}}}
\newcommand{\mnvnv}{\mkrule{\mu,(\nu,\parr)^2}}
\newcommand{\mw}{\mkrule{\mu,\tensor}}
\newcommand{\mvw}{\mkrule{\mu,\parr,\tensor}}
\newcommand{\we}{\mkrule{\tensor}}
\newcommand{\rinf}{\mkrule{\infty}}
\newcommand{\rexp}{\mkrule{\exp}}
\newcommand{\rci}{\rc_{\mathsf{i}}}
\newcommand{\lci}{\lc_{\mathsf{i}}}
\newcommand{\mr}{\mkrule{\rc}}
\newcommand{\mri}{\mkrule{\rci}}
\newcommand{\ml}{\mkrule{\lc}}
\newcommand{\mli}{\mkrule{\lci}}
\newcommand{\mrip}{\mkrule{\rci'}}
\newcommand{\mlip}{\mkrule{\lci'}}
\newcommand{\Acut}{\mathrm{Acut}}
\newcommand{\rAcut}{\mkrule{\Acut}}
\newcommand{\rAA}{\mkrule{\parr_A}}
\newcommand{\piaux}{\pi_{\mathsf{aux}}}
\newcommand{\gA}{\gray{A}}
\newcommand{\gB}{\gray{B}}
\newcommand{\gG}{\gray{G}}
\newcommand{\Fl}{F_\lc}
\newcommand{\Fr}{F_\rc}
\newcommand{\Gl}{G_\lc}
\newcommand{\Gr}{G_\rc}
\newcommand{\Fll}{F_{\lc\lc}}
\newcommand{\Flr}{F_{\lc\rc}}
\newcommand{\Frl}{F_{\rc\lc}}
\newcommand{\Frr}{F_{\rc\rc}}
\newcommand{\Gll}{G_{\lc\lc}}
\newcommand{\Glr}{G_{\lc\rc}}
\newcommand{\Grl}{G_{\rc\lc}}
\newcommand{\Grr}{G_{\rc\rc}}
\newcommand{\Inc}{\mathrm{Inc}}
\newcommand{\Dec}{\mathrm{Dec}}
\newcommand{\T}{\mathrm{Test}}
\newcommand{\Act}{\mathrm{Act}}
\newtheorem{remark}{Remark}
\newtheorem{notation}{Notation}
\begin{document}

\title{Bouncing threads for infinitary and circular proofs}         
\titlenote{with title note}             


\author{David Baelde}
\affiliation{
  \institution{ENS Paris-Saclay}            
}
\email{david.baelde@lsv.fr}          

\author{Amina Doumane}
\affiliation{
  \institution{CNRS, LIP, ENS Lyon}            
}
\email{amina.doumane@ens-lyon.fr}          

\author{Denis Kuperberg}
\affiliation{
  \institution{CNRS, LIP, ENS Lyon}            
}
\email{denis.kuperberg@ens-lyon.fr}          

\author{Alexis Saurin }
\affiliation{
  \institution{CNRS, IRIF, Université Paris Diderot}            
}
\email{alexis.saurin@irif.fr}          

\begin{abstract}
We generalize the validity criterion for the infinitary proof system of the multiplicative additive linear logic with fixed points. Our criterion is designed to take into account axioms and cuts. We show that it is sound and enjoys the cut elimination property. We finally study its decidability properties, and prove that it is undecidable in general but becomes decidable under some restrictions.

%
%

\end{abstract}

\begin{CCSXML}
<ccs2012>
<concept>
<concept_id>10011007.10011006.10011008</concept_id>
<concept_desc>Software and its engineering~General programming languages</concept_desc>
<concept_significance>500</concept_significance>
</concept>
<concept>
<concept_id>10003456.10003457.10003521.10003525</concept_id>
<concept_desc>Social and professional topics~History of programming languages</concept_desc>
<concept_significance>300</concept_significance>
</concept>
</ccs2012>
\end{CCSXML}

\ccsdesc[500]{Software and its engineering~General programming languages}
\ccsdesc[300]{Social and professional topics~History of programming languages}

\keywords{Circular Proofs, Linear Logic, Cut Elimination, Decidability.}

\maketitle

\newcommand{\Section}[1]{\vspace{-.4cm}\section{#1}\vspace{-.3cm}}
\renewcommand{\Section}[1]{\section{#1}}
\section{Introduction}\label{sec:Intro}
\todoAS[inline]{
  -  add reference to Mints continuous cut elimination\\
  - take examples into account\\
- develop motivation}
Fixed point theory has proved to be a valuable tool in computer
science, in particular for reasoning formally about software systems.
It is pervasive in programming language semantics, 
concurrency, automata theory and software verification techniques.

In the setting of fixed-point logics, infinitary ({\it ie.} non-well\-founded) and circular 
proof systems~\cite{dax06fsttcs,janin95mfcs,santocanale02fossacs,brotherston11jlc} have received much 
attention in recent years.
Such proof systems allow non-wellfounded proof trees and impose some global 
validity condition in order to ensure soundness. Typically, it requires that
every infinite branch is supported by some \emph{thread} tracing some 
formula in a bottom-up manner and witnessing infinitely many progress points of 
a coinductive property.

\todoAS[inline]{Fixed points are also present in most programming languages as
recursive types. More interestingly, the Curry-Howard correspondence,
which allows to view proofs as programs and formulas as types,
has been extended in various ways to encompass fixed point types,
e.g.\ in System F extended with least and greatest fixed point 
types~\cite{mendler91apal}, 
in Coq's calculus of (co)\-inductive constructions~\cite{gimenez98icalp},
and in typing disciplines for functional reactive programming
based on the linear-time $\mu$-calculus~\cite{cave2014popl}.}

On the programming side of the Curry-Howard correspondence, 
fixed-point formulas correspond to inductive and coinductive types:
one finds programming languages equipped with 
(co)\-recursion constructs whose typing naturally reflects the Kozen-Park 
(co)\-induction rules~\cite{mendler91apal,cave2014popl}.
Writing programs in these systems may be difficult, as it involves
coming up with complex (co)\-invariants. These difficulties are only
partially lifted through the use of guarded 
(co)\-recursion or sized types~\cite{gimenez98icalp,abel2007mixed}
in Coq or Agda respectively.
Furthermore, (co)\-recursion involves a suspended computation which makes it 
difficult to analyze the behavior of a program.
As an alternative, one could naturally consider infinitary (or circular)
programs, equipped with a global validity condition ensuring that
they behave well -- in particular that they are terminating, or productive
for inhabitants of coinductive types.
There is surprisingly little work following this approach\footnote{
  We note the work of Hyvernat~\cite{hyvernat2014lmcs} whose use of 
  size-change termination can be seen as a form of validity checking.
  It would be interesting to
  precisely compare it with our style of circular
  proof systems.
}, and
foundations are missing.

This lack of studies can be understood from the fact that the aforementioned
infinitary proof systems for fixed point logics are all cut-free; hence,
the role of the validity condition in (syntactic) cut-elimination remains
unclear from these works. This shortcoming has been addressed first by Santocanale
and Fortier: in~\cite{fortier13csl} they consider an infinitary sequent
calculus for purely additive logic, featuring cuts and an extended
notion of validity, and they show that cuts can be eliminated from valid
proofs -- in that setting, cut-elimination is not terminating but productive,
and converges to a (valid) cut-free derivation.
A key insight of this work is that the same validity condition that ensures
soundness in infinitary proof systems also ensures the productivity of 
cut-elimination.
The result has been generalized later to the multiplicative and additive linear
logic, at the cost of a more complex argument, by Baelde, Doumane and Saurin
~\cite{BaeldeDS16}.
Through these syntactic cut-elimination results, infinitary proofs for
the multiplicative-additive $\mu$-calculus \muMALL are given a computational
content, which is an important first step towards an interesting Curry-Howard
correspondence for that logic.


Unfortunately, existing notions of validity impose a quite limited use of cuts 
in non-wellfounded proofs and many proofs that could be accepted as valid
are rejected.
In particular, this prevents writing circular proofs in a compositional manner, as exemplified in the following (supported by Figure~\ref{fig:drop}):

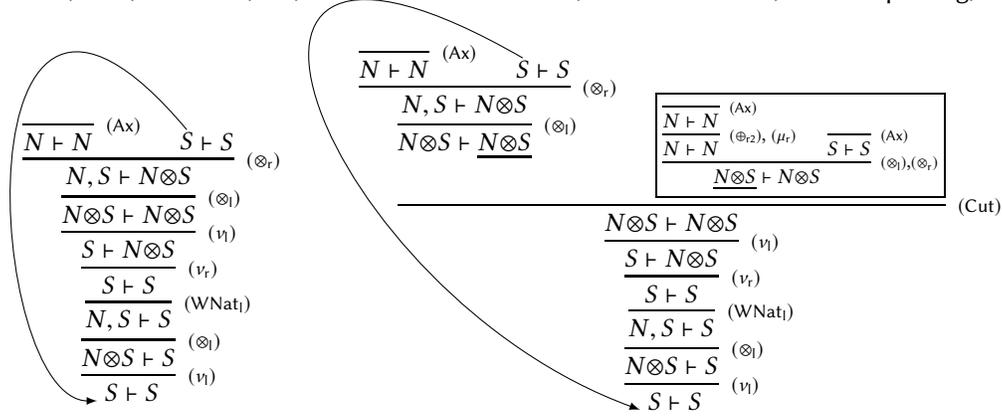
\begin{figure*}[t]
  $\scalebox{1}{
    $\begin{array}{cc}
      \begin{prooftree}
        \vspace{1cm}
        \Hypo{\begin{array}{c}
            ~\\
            ~\\
        \end{array}}
    \Infer{1}[\rax]{N \vdash N}
    \Hypo{
     \orig{introa}S \vdash S}
    \Infer{2}[\rotimesr]{N,S\vdash N\tensor S}
     \Infer{1}[\rotimesl]{N \tensor S\vdash N\tensor S}
    \Infer{1}[\rnul]{S\vdash N\tensor S}
    \Infer{1}[\rnur]{S\vdash S}
    \Infer{1}[\rweaknatl]{N, S\vdash S}
    \Infer{1}[\rotimesl]{N \tensor S\vdash S}
    \Infer{1}[\rnul]{
        \begin{tikzpicture}[remember picture,overlay]
      \node[inner sep=0pt,outer sep=0pt] (introb) {\ensuremath{}};
      \draw [->,>=latex] (introa.north) .. controls +(130:4.8cm) and +(180:1.5cm) .. (introb.west);
        \end{tikzpicture}~ S\vdash S
    }
      \end{prooftree}
&  \qquad
  \begin{prooftree}
    \Infer{0}[\rax]{N \vdash N}
    \Hypo{
      \orig{introc}S \vdash S}
    \Infer{2}[\rotimesr]{N,S\vdash N\tensor S}
    \Infer{1}[\rotimesl]{N \tensor S\vdash \underline{N\tensor S}}
\Hypo{\scalebox{.8}{\fbox{\begin{prooftree}\Infer{0}[\rax]{N\vdash N}
    \Infer{1}[\roplusrr, \rmur]{N\vdash N}
    \Infer{0}[\rax]{S\vdash S}
    \Infer{2}[\rotimesl,\rotimesr]{\underline{N\tensor S}\vdash N\tensor S}
  \end{prooftree}}}}
  \Infer{2}[\rcut]{N\tensor S\vdash N\tensor S}
    \Infer{1}[\rnul]{S\vdash N\tensor S}
    \Infer{1}[\rnur]{S\vdash S}
    \Infer{1}[\rweaknatl]{N, S\vdash S}
     \Infer{1}[\rotimesl]{N \tensor S\vdash S}
     \Infer{1}[\rnul]{
        \begin{tikzpicture}[remember picture,overlay]
      \node[inner sep=0pt,outer sep=0pt] (introd) {\ensuremath{}};
      \draw [->,>=latex] (introc.north) .. controls +(150:5.8cm) and +(160:4.5cm) .. (introd.west);
        \end{tikzpicture}~S\vdash S}
  
  \end{prooftree}
\end{array}$}
  $
  \caption{\label{fig:drop}Example of a valid and an invalid circular pre-proof.}
  \end{figure*}

\vspace{-.3cm}
\begin{example}\label{ex:streams}
  Consider formulas $N =\mu X. 1 \oplus X$ and $S = \nu Y. N\otimes Y$ respectively encoding natural numbers and streams of natural numbers in \muMALL.
Figure~\ref{fig:drop} presents two circular derivation trees, in the two-sided version of \muMALLi sequent calculus\footnote{We follow this convention for the example in order to exhibit more clearly the computational interpretation, even though the rest of the paper will be developed in the one-sided sequent calculus which is more concise.}. The computational interpretation of left-hand derivation is that of a function from streams of nats to streams of nats which drops its elements in odd position, keeping half of its elements only\footnote{Notice that natural numbers are erasable and duplicable as inputs in \muMALLi, hence the use of the $\rweaknatl$ admissible rule.}.
The rightmost proof has a slightly different computational interpretation: it drops one element every two but also increments the element it keeps in the output streams. This is achieved by using a cut with a proof precisely doing this increment (depicted in the box).
Although the leftmost proof is valid, the rightmost one is not for the validity condition introduced in~\cite{BaeldeDS16}: indeed, the cut that is introduced belongs to a cycle and as a result, no valid thread inhabits the infinite branch.

These examples correspond (at a somehow informal level) to the Coq coinductive terms {\tt drop} and {\tt incdrop} of Fig.~\ref{fig:guard}.
\begin{figure}[b]
  {\scriptsize\begin{lstlisting}[language=Coq]
CoInductive Stream := Cons : nat -> Stream -> Stream.
CoFixpoint drop (s : Stream) : Stream := match s with 
  | Cons a (Cons b s') => Cons b (drop s') end.
Definition hdinc (s: Stream) : Stream := match s with
  | Cons a s' => Cons (S a) s' end.
CoFixpoint incdrop (s : Stream) : Stream := match s with
  | Cons a (Cons b s') => hdinc (Cons a (incdrop s')) end.
                   (***********)
CoInductive BStream := BCons : bool -> BStream -> BStream.
Definition neghd (s: BStream) : BStream := match s with 
  | BCons a s' => BCons (negb a) s' end.
Definition k := 3.
CoFixpoint filter1everyk (m : nat) (s : BStream) : BStream :=
  match (m,s) with 
  | (0,BCons a s') => BCons a (filter1everyk k s')
  | (S m',BCons a s') => neghd (filter1everyk m' s')  end.
\end{lstlisting}}
  \caption{\label{fig:guard}Guarded and non-guarded examples of coinductive definitions.}\end{figure}
\end{example}

The present paper contributes to a line of research aiming at
providing and analyzing the computational content of circular
and non-wellfounded proofs, and improving their compositionality
of circular proofs.
From the Curry-Howard perspective, considering more relaxed validity
criteria is an interesting and important challenge as it affords a more
flexible way in constructing
circular proofs that, through the lense of Curry-Howard, more
flexibility to write valid programs on coinductive types. 

Indeed, while our previous cut-elimination result~\cite{BaeldeDS16}
is a significant step it goes only half-way due to
strong restrictions on the use of cuts along non-wellfounded
branches (or cycles in proofs) as described above. 
We  introduce here a new validity condition for \muMALLi,
the infinitary proof system for multiplicative additive linear logic with fixed 
points. Taking inspiration from Geometry of Interaction \cite{girardGoI89},  this criterion generalizes the existing one by enriching the structure of threads and relaxing their 
geometry: bouncing threads can leave the branch they validate and ``bounce''
({\it i.e.}\ change direction, moving upward but also downward along proof 
branches) on axioms and cut rules.

We define a new \emph{bouncing} validity condition in the multiplicative 
fragment and show that the obtained proofs enjoy cut elimination and soundness.
This new validity condition is undecidable but can be decomposed into an
infinite hierarchy of decidable conditions, via a parameter called ``height''. Moreover, it naturally extends
to a validity condition ensuring cut elimination for the multiplicative
and additive fragment.

The following example illustrates with a simpler example the intuitive idea
behind validation by  bouncing threads: it is not valid according to straight threads since its only infinite branch conains no infinite thread at all. On the other hand, one can trace the unfolding of the coinductive by following formulas $\nu X.X$ upwards and formulas $\mu X.X$ downwards while changing directions and moving from a formula to its dual when entering axioms and cuts, as represented on the portion of the proof that is represented. Indeed, after reducing twice the cut in each repetition of the cycle,
it yields a cut-free proof which is validated by a straight thread, which can be viewed as the ``straightened'' version of the above mentioned {\it bouncing thread}:
\vspace{-.2cm}
$$\scalebox{.8}{
  $\pi_0 =  \begin{prooftree}
  \Infer{0}[\rax]{\vdash \tikzmark{ad3}\nu X.X, \mu X.X\tikzmark{ad4}}
  \Infer{1}[\rmu]{\vdash \tikzmark{ad2}\nu X.X, \mu X.X\tikzmark{ad5}}
  \Hypo{\vdash \nu X.\orig{introA}X\tikzmark{ad7}}
  \Infer{1}[\rnu${\times}2$]{\vdash \nu X.X\tikzmark{ad6}}
  \Infer{2}[\rcut]{
    \vdash \tikzmark{ad1}\nu X.X
    \begin{tikzpicture}[remember picture,overlay]
      \node[inner sep=0pt,outer sep=0pt] (introB) {\ensuremath{}};
      \draw [->,>=latex] (introA.north) .. controls +(0:2.3cm) and +(-8:3cm) .. (introB.west);
    \end{tikzpicture}
  }
  \end{prooftree}
  \begin{tikzpicture}[overlay,remember picture,-,line cap=round,line width=0.1cm]
    \draw[rounded corners, smooth=2,red, opacity=.25] ($(pic cs:ad1)+(-.1cm,.1cm)$) to ($(pic cs:ad2)+(.1cm,-.2cm)$)to ($(pic cs:ad3)+(-.1,.2cm)$) to($(pic cs:ad4)+(-.2cm,.2cm)$) to($(pic cs:ad5)+(-.2cm,0cm)$) to($(pic cs:ad6)+(-.1cm,0cm)$) to($(pic cs:ad7)+(-.1cm,0cm)$) ;
  \end{tikzpicture}
  \;\;
  \begin{array}{cc}
    \rightarrow_{\mathsf{cut}}^2~
  &\begin{prooftree}
    \Hypo{\pi_0}
    \Infer{1}[\rnu]{\vdash \nu X.X}
  \end{prooftree}
  \\[16pt]~\rightarrow_{\mathsf{cut}}^\omega~
   & \begin{prooftree}
  \Hypo{\vdash \nu X.\orig{introa}X}
  \Infer{1}[\rnu]{
    \vdash \nu X.X
    \begin{tikzpicture}[remember picture,overlay]
      \node[inner sep=0pt,outer sep=0pt] (introb) {\ensuremath{}};
      \draw [->,>=latex] (introa.north) .. controls +(15:.8cm) and +(-5:.8cm) .. (introb.west);
    \end{tikzpicture}}
    \end{prooftree}
\end{array}      $}
$$

\emph{Summary of the contributions.}
We introduce a new decidability criterion for \muMALLi preproofs, based on bouncing threads. We prove that it guarantees soundness and productivity of the cut-elimination process. The criterion is compatible with simple compositions with cuts, as shown in \cref{ex:streams}. Contrarily to \cite{BaeldeDS16}, we also provide a treatment of axioms, which introduces substantial additional difficulties in the proof of soundness and cut-elimination.
We finally show that a parameter can be associated to our criterion (the ``height'' of the bouncing thread) and that every valid circular proofs is validated with a bounded height. moreover, we show that the criterion with fixed height is decidable in the multiplicative case, but that without bounding the height, the criterion becomes undecidable already in the multiplicative case.

\emph{Organization of the contributions.}
In Section~\ref{sec:ProofSystemMuMLL} we recall the basic definitions for the non-wellfounded proof system \muMALLi.
We then define, in \cref{sec:CutElimRules} the cut-elimination procedure. In \cref{sec:ValiditeMuMLL}, we introduce our new bouncing validity 
condition, and show in \cref{sec:muMLLCutElimThm} that it guarantees soundness of the system and productivity of the cut-elimination procedure. We finally study in \cref{sec:decid-body} the decidability of our criterion in the multiplicative case. Proofs and developments omitted due to space constraints can be found in appendix.

\Section{The pre-proofs of \muMALLi }\label{sec:ProofSystemMuMLL}

In this section we introduce the multiplicative additive 
linear logic extended with least and greatest fixed point operators,
and a system of infinitary (pre-)proofs for that logic.

\begin{definition}\label{def:formulas}
Given infinite sets of \defname{atoms} $\C{A}=\{a,b,\dots \}$ and of \defname{fixed-point variables} $\C{V}=\{X,Y,\dots \}$,
\defname{\muMALLi-formulas} are built over the following syntax:
\renewcommand{\|}{\mathrel{|}}
$$
\begin{array}{rrll}   \null\hfill
  \phi,\psi &::=&
  a \| a^\bot \quad a\in {\C A}, &\quad \text{(atoms)}\\
  &\|& \mu X. \phi \| \nu X. \phi \| X\quad  X \in {\C V} &\quad \text{(fixed points)}\\
  &\|&  \bot \| \llone \| \phi \parr \psi \| \phi \tensor \psi &\quad \text{(multiplicatives)}\\
   &\| &\llzero \| \top \| \phi \oplus \psi \| \phi \with \psi. &\quad \text{(additives)}
\end{array}   $$
The connectives $\mu$ and $\nu$ bind the variable $X$ in $\phi$.
The formulas of \defname{\muMLLi}, the multiplicative fragment, are those \muMALLi formulas which do not contain $\with, \oplus, \top$ nor $\llzero$ ({\it ie.} restricting the grammar to the first two lines).
\end{definition}

\begin{definition}[Negation]
  $(\_)^\bot$
  is the involution on formulas satisfying:
  $$\begin{array}{c}a^{\bot\bot} = a;\  X^\bot = X;\  (\nu X. \phi)^\bot = \mu X. \phi^\bot;\\
   \bot^\bot = \llone;\ (\phi\parr\psi)^\bot =  \phi^\bot\tensor \psi^\bot;\ (\phi\oplus\psi)^\bot =  \phi^\bot\with \psi^\bot.
   \end{array}$$
%
\end{definition}

Setting $X^\bot=X$ would be incorrect when considering formulas with free
variables, but it yields the proper dualization for closed formulas,
e.g.\ $(\mu X.X)^\bot=\nu X.X$.
Note that, since negation is not a connective, our formulas enjoy the
positivity condition by construction: all fixed-point expressions are
monotonic.

There are several presentations of sequents in the literature: a sequent can 
be defined as a set of formulas, a multiset of formulas, a list of formulas or 
a set of named formulas. The first two presentations  (sets and multisets of 
formulas) are not suitable in a Curry-Howard perspective as they identify 
proofs having completely different computational behaviours. 
The last two presentations are the most used in the proofs-as-programs 
framework. Considering sequents as lists of formulas requires a constant use 
of the exchange rule, which is very heavy. In this paper, we made the choice 
to work with sequents as sets of named formulas, also called \emph{formula 
occurrences}.
We recall next their formal definition, in the style of \cite{BaeldeDS16}.

A formula occurrence is a formula together with an address. In a
derivation, all the conclusion (and cut) formula occurrences will have
pairwise distinct addresses.
When a rule is applied to a formula occurrence, the 
addresses of its sub-occurrences will be extended by $\{\Al,\Ar, \Ai\}$
(standing for left, right and inside respectively)
in order to record their provenance. 
This is of  great importance for our developments: our validity criterion 
traces the evolution of formulas, and this evolution is completely explicit in 
their addresses. 

\begin{example}\label{ex:ChoiceSequents} We show in the following an example of an application of the $\parr$ rule in the framework of sequents as sets, as multisets and as set of formula occurrences respectively: 
$$\begin{array}{ccc}
\scalebox{.9}{\begin{prooftree}
\Hypo{\vdash \phi}
\Infer{1}[\rparr]{\vdash \phi\parr\phi}
\end{prooftree}}
\quad&\quad
\begin{prooftree}
\Hypo{\vdash \phi, \phi}
\Infer{1}[\rparr]{\vdash \phi\parr\phi}
\end{prooftree}
\quad&\quad
\scalebox{.9}{\begin{prooftree}
\Hypo{\vdash \phi_{\alpha\Al}, \phi_{\alpha\Ar}}
\Infer{1}[\rparr]{\vdash (\phi\parr\phi)_\alpha}
\end{prooftree}}
\end{array}
$$
In  the first case, the two subformulas of $\phi\parr\phi$ collapse into one formula, in the second framework we keep track of the multiplicity but we cannot distinguish between the formula coming from the right and the one coming from the left. In the framework of formula occurrences, we can do this thanks to the tags $\Al$ and $\Ar$ in their addresses.
\end{example}

\begin{definition}
  Let $\FA$ be an infinite set of \defname{atomic addresses}, $\FA^\perp=\{\alpha^\perp\mid \alpha \in \FA\}$, and $\Sigma=\{\Al,\Ar,\Ai\}$.
  An \defname{address} is a word of the form $\alpha.w$, where $\alpha\in \FA\cup\FA^\perp$ and $w\in\Sigma^*$. Let us call $\Addr$ the set of addresses.
  We say that $\alpha'$ is a \defname{sub-address} of $\alpha$
  when $\alpha$ is a prefix of $\alpha'$,
  written $\alpha \sqsubseteq \alpha'$.
  We say that $\alpha$ and $\beta$ are \defname{disjoint} when
  $\alpha$ and $\beta$ are incomparable wrt.\ ${\sqsubseteq}$.
\end{definition}
The intuition is that atomic addresses and their duals will be assigned to the conclusions and cut formulas, and all the addresses appearing in our proofs will be sub-addresses of these addresses. 

\begin{definition}
  A \defname{formula occurrence}, or simply \defname{occurrence},
  is given by a formula $\phi$ and an address $\alpha$, and written 
  $\phi_\alpha$.
  Occurrences will be denoted by $F$, $G$, $H$.
  Occurrences are \defname{disjoint} when their addresses
  are.
  The occurrences $\phi_\alpha$ and $\psi_\beta$ are
  \defname{structurally equivalent},
  written $\phi_\alpha \equiv \psi_\beta$, if $\phi=\psi$.
\end{definition}


Note that the relation of sub-address is the inverse of the prefix relation. 
This may seem surprising, but it is coherent with the sub-formula relation. 
For instance, in \cref{ex:ChoiceSequents}, $\phi$ is a sub-formula of $\phi\parr\phi$, but its address is $\alpha\Ar$ while the address of $\phi\parr\phi$ is $\alpha$. 

We now define the rules of linear logic with fixed points in the framework of sequents as sets of occurrences.
 As seen in \cref{ex:ChoiceSequents}, a rule will look at the structure of the 
 formula underlying an occurrence, decompose it following a standard \muMALL 
 rule, then assign addresses to its subformulas in the obvious way. This necessitates always making the addresses explicit. In order to lighten 
 notation, we will have the syntax of \muMALLi operate directly on 
 occurrences:

\begin{definition}
  Logical connectives are lifted to operations on occurrences as:
\begin{itemize}
\item For any $\star\in\{\parr,\tensor,\oplus,\with\}$, if  $F = \phi_{\alpha 
  \Al}$ and  $G = \psi_{\alpha \Ar}$ then $F\star G = (\phi\star\psi)_\alpha$.
\item For any $\sigma\in\{\mu,\nu\}$, if $F = \phi_{\alpha\Ai}$ then 
  $\sigma X. F = (\sigma X. \phi)_\alpha$.
\end{itemize} 
 \end{definition}

%
%

\begin{definition}
  We define a duality over $\Addr$ by setting $(\alpha.w)^\perp=\alpha^\perp.w$ and $(\alpha^\perp.w)^\perp=\alpha.w$ for all $\alpha\in\FA$ and $w\in\Sigma^*$.
 We then define $(\phi_\alpha)^\bot = (\phi^\bot)_{\alpha^\bot}$,
 and write $F\bot G$ when $F^\bot=G$.
 We define substitution over occurrences as follows:
  $(\phi_\alpha)[\psi_\beta/X] = (\phi[\psi/X])_\alpha$.
\end{definition}





\begin{figure*}[t]
\def\infparr{\begin{prooftree}
\Hypo{\vdash F, G, \Gamma}
\Infer{1}[\rparr]{\vdash F\parr G, \Gamma}
\end{prooftree}}
\def\inftensor{\begin{prooftree}
\Hypo{\vdash F,\Gamma}
\Hypo{\vdash G,\Delta}
\Infer{2}[\rotimes]{\vdash F\tensor G, \Gamma, \Delta}
\end{prooftree}}
\def\infbot{\begin{prooftree}
\Hypo{\vdash \Gamma}
\Infer{1}[\rbot]{\vdash \bot, \Gamma}
\end{prooftree}}
\def\infone{\begin{prooftree}
\Hypo{}
\Infer{1}[\rone]{\vdash \llone}
\end{prooftree}}
\def\infmu{\begin{prooftree}
\Hypo{ \vdash F[\mu X.F/X], \Gamma}
\Infer{1}[\rmu]{ \vdash \mu X.F, \Gamma}
\end{prooftree}}
\def\infnu{\begin{prooftree}
\Hypo{\vdash G[\nu X.G/X], \Gamma}
\Infer{1}[\rnu]{\vdash \nu X.G, \Gamma}
\end{prooftree}}
\def\infax{
\begin{prooftree}
\Hypo{F \equiv G}
\Infer{1}[\rax]{\vdash F, G^\perp}
\end{prooftree}}
\def\infcut{
  \begin{prooftree}
  \Hypo{\vdash \Gamma, F}
  \Hypo{\vdash F^\perp, \Delta}
\Infer{2}[{\rcut}]{\vdash \Gamma, \Delta}
\end{prooftree}}
\def\infwith{\begin{prooftree}
\Hypo{\vdash F, \Gamma}
\Hypo{\vdash G, \Gamma}
\Infer{2}[\rwith]{ \vdash F\with G, \Gamma}
\end{prooftree}} 
\def\infplus{\begin{prooftree}
\Hypo{\vdash F_i, \Gamma}
\Infer{1}[\roplusi, $i\in\{1,2\}$]{\vdash F_1\oplus F_2, \Gamma}
\end{prooftree}}
\def\inftop{\begin{prooftree}
\Hypo{}
\Infer{1}[\rtop]{\vdash \top, \Gamma}
\end{prooftree}}
\def\infzero{\text{(no rule for $\llzero$)}}
$$\begin{array}{cccc|c|c}
   \infparr  & \infbot  & \infwith  & \inftop & \infnu\quad  &\infax \\[12pt]
 \inftensor  & \infone & \infplus  & \infzero\quad  & \infmu\quad  & \infcut\\[12pt]

 \end{array}$$
\vspace{-.5cm}  \caption{Rules of the proof system \muMALLi}  
\label{fig:muMALLi}
\label{fig:AdditiveRules}
\end{figure*}

We are now ready to introduce our infinitary sequent calculus.

\begin{definition}
  A \defname{sequent} is a set of disjoint occurrences.
  A \defname{\muMALLi pre-proof} is a possibly infinite tree, coinductively
  generated by the rules of Fig.~\ref{fig:muMALLi}.
%
%
  Given a sequent $s$ in a pre-proof $\pi$, we denote by $\prem s$
  the set of sequents which are premisses of the rule of conclusion $s$
  in $\pi$.
  Rules other than $\rax$ and $\rcut$ are called \defname{logical rules}.
  For every instance of one such rule
  we call \defname{principal occurrence} the occurrence in its conclusion
  sequent that is decomposed to obtain the premisses.
\end{definition}

The infinite derivations of \muMALLi may be quite complex trees,
possibly not even computable. In practical uses one would turn to sub-systems, 
typically the fragment of circular pre-proofs
\cite{fortier13csl,santocanale02fossacs,brotherston11jlc,doumane16lics}.
In a nutshell, a circular derivation is an infinite derivation which
has only finitely many distinct sub-trees up to renaming of 
addresses~\cite{DoumanePhD}. 

\begin{notation}[Two-sided notation]
  While it is proof-theoretical-ly convenient to work with one-sided sequents
  as in the previous definition, it is more illustrating for some examples,
  especially when we shall aim at illustrating computational interpretation
  of some proofs, to allow to use the usual two-sided sequent calculi. 
  In the following (and in the examples of the introduction), two sided
  sequents may be used:
\begin{center}   $F_1,\ldots,F_n\vdash\Gamma$ should be read as
  $\vdash \Gamma, F_1^\bot,\ldots,F_n^\bot$.
\end{center}
  Regarding the labelling of inference rules, we allow ourselves two conventions: either the inference rules are written with the labels introduced in Fig.~\ref{fig:muMALLi} or, as in the introductory example, we use their two-sided names, for instance $\rtensorl$ and $\rtensorr$, in which case this is a notation for the corresponding rule in the one-sided sequent calculus, respectively $\rparr$ and $\rtensor$ in this case.

  \end{notation}
\begin{example}
We show in Fig.\ref{fig:pidup} two examples of pre-proof,  $\pi_\mathsf{succ}$ and  $\pi_\mathsf{dup}$. (The reader can check their validity and cut-elimination behaviours once those concepts will have been introduced in the following sections.)
\begin{figure*}[t]
  $$
  ~\hspace{5cm}
  \pi_\mathsf{succ}\quad= \quad\begin{prooftree}
    \Infer{0}[\rax]{N \vdash N''}
    \Infer{1}[\roplusr]{N \vdash \llone\plus N''}
    \Infer{1}[\rmu]{N \vdash N'}
\end{prooftree} $$
  
  \vspace{.7cm}
  $$
  ~\hspace{3cm}
  \pi_{\mathsf{dup}} \quad=\quad   \begin{prooftree}
      \Infer{0}[\rmu,\roplusl,\rone]{\vdash N_1}
      \Infer{0}[\rmu,\roplusl,\rone]{\vdash N_2}
      \Infer{2}[\rbot,\rtensor]{\llone \vdash N_1\tensor N_2}
      \Hypo{
        \orig{opidup}{N' \vdash N'_1\tensor N'_2}}
      \Hypo{\pi_\mathsf{succ}}
      \Hypo{\pi_\mathsf{succ}}
      \Infer{2}[\rparr,\rtensor]{N'_1\tensor N'_2 \vdash N_1\tensor N_2}
      \Infer{2}[\rcut]{N' \vdash N_1\tensor N_2}
      \Infer{2}[\rnu,\rwith]{
        N \vdash N_1\tensor N_2
        \begin{tikzpicture}[remember picture,overlay]
        \node[inner sep=0pt,outer sep=0pt] (tpidup)  {\ensuremath{}};
        \draw [->,>=latex] (opidup.north east) .. controls +(40:6cm) and +(-5:10cm) .. (tpidup.east);
      \end{tikzpicture}\;}
    \end{prooftree}
  $$
  \caption{\label{fig:pidup}Examples of pre-proofs  $\pi_\mathsf{succ}$ and  $\pi_\mathsf{dup}$.}
\end{figure*}
  \end{example}

Pre-proofs are obviously
unsound: it is easy to derive the empty sequent. Hence, a validity
condition shall be required for a pre-proof to be called a proof.



\Section{The cut elimination process}\label{sec:CutElimRules}

In this section we introduce the cut-elimination rules for \muMALLi pre-proofs. 
In general, the cut-elimination procedure is not productive. 
However, we will show in Section~\ref{sec:muMLLCutElimThm} that when we restrict to valid pre-proofs (that will be defined in Section~\ref{sec:ValiditeMuMLL}), the process is productive and outputs a valid pre-proof.


\savespace
\subsection{The multicut rule}

In finitary proof theory, cut elimination may proceed by reducing topmost cuts. In the infinitary setting however, by non-wellfoundedness, there is no such thing, in general, as a topmost cut inference. 
In \cite{fortier13csl,BaeldeDS16}, this issue is dealt with by reducing 
\defname{bottom-most cuts}, and when encountering during the reduction a cut which is immediately above another one, instead of permuting two consecutive cuts, merging them into a new rule called 
\defname{multicut} and noted \rmcut. A multicut can be seen as a metarule to represent a finite tree of cuts.

We will also use this multicut approach\footnote{Note that there are various approach to cut-elimination in infinitary settings, for non-wellfounded derivations or for logics including an $\Omega$-rule, in paritcular Mints continuous cut-elimination~\cite{mints1978finite}.}, but we now have to deal with axiom/cut reductions. This leads us to enrich the structure of multicuts, by allowing those to perform a renaming.

A multicut is a rule written as: 
$$\scalebox{.9}{\begin{prooftree}
\Hypo{\vdash\Gamma_1\quad \dots\quad  \vdash\Gamma_n}
\Infer{1}[\rmcutpar]{\vdash\Gamma}
\end{prooftree}}$$
and comes with a function $\iota$ which shows how the occurrences of the conclusion are 
distributed over the premisses (modulo renaming), and a relation 
$\dbot$ specifiying which occurrences are cut-connected.
Below is an example of a multicut rule: the function $\iota$ is represented by the red lines, the relation bottom  is represented by the blue ones.
$$\scalebox{.9}{\begin{prooftree}
\Hypo{\vdash F'\tikzmark{PC1}, G}
\Hypo{\vdash G^\bot\tikzmark{PC2},H}
\Hypo{\vdash H^\bot\tikzmark{PC3},K}
\Infer{3}[\rmcutpar]{\vdash F\tikzmark{CC},K}
\end{prooftree}
\begin{tikzpicture}[overlay,remember picture,-,line cap=round,line width=0.1cm]
   \draw[rounded corners, smooth=2,red, opacity=.25] ($(pic cs:CC)+(-.2cm,.1cm)$)to ($(pic cs:PC1)+(-.2cm,-.1cm)$)to ($(pic cs:PC1)+(-.2cm,.1cm)$); 
   \draw[rounded corners, smooth=2,red, opacity=.25] ($(pic cs:CC)+(.3cm,.1cm)$)to ($(pic cs:PC3)+(.3cm,-.1cm)$)to ($(pic cs:PC3)+(.3cm,.1cm)$); 
  \draw[rounded corners, smooth=2,cyan, opacity=.25] ($(pic cs:PC1)+(.25cm,.1cm)$)to ($(pic cs:PC1)+(.25cm,-.1cm)$)to ($(pic cs:PC2)+(-.4cm,-.1cm)$) to ($(pic cs:PC2)+(-.4cm,.1cm)$);
    \draw[rounded corners, smooth=2,cyan, opacity=.25] ($(pic cs:PC2)+(.3cm,.1cm)$)to ($(pic cs:PC2)+(.3cm,-.1cm)$)to ($(pic cs:PC3)+(-.4cm,-.1cm)$) to ($(pic cs:PC3)+(-.4cm,.1cm)$);  
   \end{tikzpicture}
   }$$ 
Precise definitions and more explanations are given in 
\cref{app:multiCutRule}. Later, if clear from the context, we omit to specify $\iota$ and $\dbot$ in the rule name.

Now, we add the multicut rule to our proof system in order to perform cut-elimination.
  
\begin{definition}
We call \muMALLim the infinitary proof system obtained from \muMALLi by adding the multicut rule.
\end{definition}

%


\savespace
\subsection{Reduction rules and strategy}


The reduction rules are the same as in \cite{BaeldeDS16,fortier13csl}, adapting them in a straighforward way to account for the extra labellings $\iota,\dbot$ in multicut rules. We give examples of such reductions in this section.

There are two kinds of cut reductions: \emph{external}
ones that push the multicut deeper in the pre-proof (Example in \cref{fig:ExInternal}.a),
and \emph{internal} ones, that keep the multicut at the same level, and are not productive (Example in \cref{fig:ExInternal}.b). 
The rules in the first category are said to be \emph{productive}, since they contribute to the output of the process. Intuitively, the cut-elimination process succeeds if infinitely many productive rules occur on each branch of the proof.
%
%
An exhaustive description of the \muMALLim cut-reduction rules is given in
\cref{app:CutElimRules}.

\begin{figure*}[t]
(a) Example of an external reduction rule:
  \begin{center}
\scalebox{.9}{$\begin{array}{clc}
\begin{prooftree}
\Hypo{\C C}
\Hypo{\vdash \Delta, F'[\mu X. F'/X]}
\Infer{1}[\rmu]{\vdash \Delta, \mu X.\tikzmark{IRPl} F'} 
\Infer{2}[\rmcut]{\vdash \Sigma, \mu X.\tikzmark{IRCl} F}
\end{prooftree}
\quad&\quad\underset{}{\longrightarrow}\quad&\quad 
\begin{prooftree}
\Hypo{\C C}
\Hypo{\vdash \Delta,F'[\mu X.\tikzmark{IRPr} F'/X]}
\Infer{2}[\rmcut]{\vdash \Sigma, F[\mu X.\tikzmark{IRCr} F/X]}
\Infer{1}[\rmu]{\vdash \Sigma, \mu X. F}
\end{prooftree}   
\end{array}$}
\\[10pt]
%
\end{center}
\label{fig:ExExternal}
%
(b) Examples of internal reduction rules:
  \begin{center}

$$\begin{array}{ccc}
 \scalebox{.9}{ \begin{prooftree}
      \Hypo{\C C}
      \Hypo{\vdash \Delta,F}
      \Hypo{\vdash \Gamma,F^\bot}
\Infer{2}[\rcut]{\vdash \Delta,\Gamma}     
\Infer{2}[\rmcut]{\vdash \Sigma}
  \end{prooftree}}
  \quad&\quad
  \underset{}{\longrightarrow}
  \quad&\quad
  \begin{array}{c}
	\scalebox{.9}{    
    \begin{prooftree}
        \Hypo{\C C}
             \Hypo{\vdash \Delta,\tikzmark{MPC2}F}
      \Hypo{\vdash F^\bot,\tikzmark{MPC3}\Gamma}
\Infer{3}[\rmcut]{\vdash \Sigma}
    \end{prooftree}
        }

    \\
    \end{array}
\\
  \scalebox{.9}{
  \begin{prooftree}
    \Hypo{\C C}
        \Hypo{\vdash \Delta,F[\mu X.F/X]}
        \Infer{1}[\rmu]{\vdash \Delta,\mu X.\tikzmark{MPl1}F}
        \Hypo{\vdash F'^\bot[\nu X.F'^\bot/X],\Gamma}
        \Infer{1}[\rnu]{\vdash \nu X.F'^\bot,\tikzmark{MPl2}\Gamma} 
        \Infer{3}[\rmcut]{\vdash \Sigma}
  \end{prooftree}} 
  \quad&\quad
  \underset{}{\longrightarrow}
  \quad&\quad
  \scalebox{.9}{
  \begin{prooftree}
    \Hypo{\C C}
    \Hypo{\vdash \Delta,\tikzmark{MPr1}F[\mu X.F/X]}
    \Hypo{\vdash F'^\bot[\nu X.\tikzmark{MPr2}F'^\bot/X],\Gamma}
    \Infer{3}[\rmcut]{
      \vdash \Sigma}
  \end{prooftree}
  }
\end{array}  $$

  \end{center}
%
%
%
%
%
\caption{Examples of external and internal reduction rules.}
\label{fig:ExInternal}
\end{figure*}




%
We now describe a procedure to eliminate cuts from \muMALLi proofs, using as an intermediary framework the system with multicuts.
We start by embedding \muMALLi in \muMALLim by adding a unary multicut at the root of the pre-proof, with the identity as $\iota$ and $\dbot=\emptyset$. We then apply internal and external reduction rules to this multicut. 
We will require reduction sequences to be \defname{fair}, in the sense that every redex is eventually fired.

We introduce in the following section the validity condition, that will guarantee productivity of this cut elimination process.

\Section{Bouncing threads and pre-proof validity}\label{sec:ValiditeMuMLL}

We now formally introduce our bouncing threads and the corresponding
notion of validity for pre-proofs.
Given an alphabet $A$, we denote by $A^\omega$ the set of infinite
words over $A$, and define $A^\infty$ to be $A^* \cup A^\omega$.
We will make use of the letter $\lambda$ to denote ordinals in $\omega+1$,
i.e.\ either $\omega$ or a finite ordinal in $\mathbb{N}$.
For such an ordinal, recall that $1+\lambda=\lambda$ iff $\lambda=\omega$.
Finally, we will make use of a special concatenation:
given $u=(u_i)_{i\leq n<\omega}$ and $v=(v_i)_{i\in\lambda}$
such that $u_n = v_0$, we define $u\odot v$ as the standard concatenation
of $u$ and $v$ without its first element, i.e.\ $u \cdot 
(v_i)_{i\in\lambda\setminus\{0\}}$.
For example $aba\odot aab= abaab$.

\savespace\subsection{Threads}

We start with a naive notion of pre-thread,
defined as a sequence of pointed sequents (i.e. sequents with a marked formula) together with a direction:
a pre-thread follows occurrences in consecutive sequents,
travelling up- or downwards.

\begin{definition}\label{def:pre-thread}
  A \textbf{pre-thread} is a sequence $\psd{F_i}{s_i}{d_i}_{i\in\lambda}$ of tuples of a formula, a sequent and a direction, such that for all $i\in\lambda$,  $F_i \in s_i$, $d_i \in 
  \{\uparrow,\downarrow\}$ and if $i+1\in\lambda$ one of the following
  clauses holds:
  \begin{itemize}
    \item
      $d_i = d_{i+1} = {\uparrow}$,
      $s_{i+1} \in\prem {s_i}$, and $F_{i+1} \sqsubseteq F_i$;
    \item
      $d_i = d_{i+1} = {\downarrow}$,
      $s_i \in\prem{s_{i+1}}$,
      and $F_i \sqsubseteq F_{i+1}$;
      \item
        $d_i = {\downarrow}$, $d_{i+1} = {\uparrow}$,
      $s_i$ and $s_{i+1}$ are the two premisses of the same cut rule,
      and $F_i = F^\bot_{i+1}$;
    \item
      $d_i = {\uparrow}$, $d_{i+1} = {\downarrow}$ and
      $s_i = s_{i+1} = \{ F_i, F_{i+1} \}$ is the conclusion of an axiom rule (so that $F_i \equiv F^\bot_{i+1})$.

  \end{itemize}
  If $\lambda=n+1$ is finite we call $F_0$ and $F_n$ the \textbf{endpoints}
  of the pre-thread.
\end{definition}

\begin{example}\label{ex:pre-thread}
  Consider the formulas $\phi=\nu X. X$, $F=\phi_\alpha$, $F'=\phi_\beta$, $F''=\phi_{\beta.\Ai}$ where $\alpha$ and $\beta$ are disjoint addresses. Let $G, G'$ be two disjoint occurrences such that $G\equiv G'$. In the following pre-proof, the red and blue lines are two pre-threads\footnote{... which respectively correspond to the following sequences:\\
    $t_r = (F\parr G; \vdash F\parr G; \uparrow~\!)\cdot
 (F\parr G; \vdash F\parr G, F'^\perp\tensor G'^\perp; \uparrow~\!)\cdot
  (F; \vdash F, F'^\perp;\uparrow~\!)\cdot
  (F'^\perp; \vdash F, F'^\perp;\downarrow~\!)\cdot
  (F'^\perp\tensor G'^\perp; \vdash F\parr G, F'^\perp\tensor G'^\perp; \downarrow~\!)\cdot
  (F'\parr G'; \vdash F'\parr G'; \uparrow~\!)\cdot
  (F'; \vdash F', G'; \uparrow~\!)\cdot
  (F''; \vdash F'', G'; \uparrow~\!)$ and\\
  $t_b = (F\parr G; \vdash F\parr G; \uparrow~\!)\cdot
 (F\parr G; \vdash F\parr G, F'^\perp\tensor G'^\perp; \uparrow~\!)\cdot
  (G; \vdash G, G'^\perp;\uparrow~\!)\cdot
  (G'^\perp; \vdash G,G'^\perp;\downarrow~\!)\cdot
  (F'^\perp\tensor G'^\perp; \vdash F\parr G, F'^\perp\tensor G'^\perp; \downarrow~\!)\cdot
  (F'\parr G'; \vdash F'\parr G'; \uparrow~\!)\cdot
  (F'; \vdash F', G'; \uparrow~\!)\cdot
  (F''; \vdash F'', G'; \uparrow~\!)$.}:

\vspace{-.7cm}$$\scalebox{.9}{\begin{prooftree}
\Infer{0}[\rax]{\vdash \tikzmark{r3}F, F'^\bot\tikzmark{r4}}
\Infer{0}[\rax]{\vdash \tikzmark{b3}G, G'^\bot\tikzmark{b4}}
\Infer{2}[\mkrule{\parr,\tensor}]{\vdash F\tikzmark{r2}\parr G, F'^\bot\otimes\tikzmark{r5} G'^\bot}
\Hypo{\vdots}
\Infer{1}[]{\vdash \tikzmark{r8}F'', G'}
\Infer{1}[\rnu]{\vdash \tikzmark{r7}F', G'}
\Infer{1}[\rparr]{\vdash F'\tikzmark{r6}\parr G'}
\Infer{2}[\rcut]{\vdash F\tikzmark{r1}\parr G}  
  \end{prooftree}
  \begin{tikzpicture}[overlay,remember picture,-,line cap=round,line width=0.1cm]
    \draw[rounded corners, smooth=2,cyan, opacity=.25] ($(pic cs:r1)+(-.1cm,.1cm)$) to ($(pic cs:r2)+(.1cm,-.2cm)$)to ($(pic cs:r2)+(-.1,.2cm)$) to($(pic cs:r3)+(0,.2cm)$) to($(pic cs:r4)+(0,.2cm)$) to($(pic cs:r5)+(0,.1cm)$) to($(pic cs:r6)+(.1cm,.1cm)$) to($(pic cs:r7)+(0,.2cm)$) to($(pic cs:r8)+(0,.2cm)$);
  \draw[rounded corners, smooth=2,red, opacity=.25] ($(pic cs:r1)+(.2cm,.2cm)$) to ($(pic cs:r2)+(.2cm,-.15cm)$) to ($(pic cs:r2)+(.2cm,.2cm)$) to($(pic cs:b3)+(0,.2cm)$) to($(pic cs:b4)+(0,.2cm)$) to($(pic cs:r5)+(0,.2cm)$) to($(pic cs:r6)+(.2cm,.2cm)$) to($(pic cs:r7)+(.1cm,.2cm)$) to($(pic cs:r8)+(.1cm,.2cm)$);
  \end{tikzpicture}
}$$
\end{example}


 
We shall define threads as pre-threads satisfying a particular condition that will make them compatible with cut reduction, in the sense that they will have residuals after cut-elimination steps. In Example~\ref{ex:pre-thread}, the red thread has no residual
if one performs a cut elimination step on $F'\parr G'$, because it
comes from the right-hand subformula of  $F'^\perp \tensor G'^\perp$
and goes
to the left-hand subformula of $F' \parr G'$. 
In contrast,
the blue thread can meaningfully be simplified to persist over cut elimination
steps: its residual is well-defined. Geometry of Interaction \cite{girardGoI89} provides a formalization of
these notions, assigning weights to pre-threads and determining which weights
correspond to meaningful computations.
We follow this inspiration, adapting it to our framework.

\begin{definition}
  Let $t=\psd{F_i}{s_i}{d_i}_{i\in 1+\lambda}$ be a pre-thread.
  The \textbf{weight} of $t$
  is a word $(w_i)_{i\in \lambda}\in
  \{\Al, \Ar, \Ai, \bar{\Al}, \bar{\Ar}, \bar{\Ai}, \Wwait, \Aax, \Ccut\}^\infty$, written $\weight{t}$ and  defined as follows.
  For every $i\in\lambda$ one of the following clauses holds:
  \begin{itemize}
    \item  $w_i = x$ if $F_{i} = \phi_\alpha$ and $F_{i+1} = \psi_{\alpha x}$
      for $x \in \{\Al,\Ar,\Ai\}$;
    \item $w_i = \bar{x}$ if $F_{i} = \phi_{\alpha x}$ and $F_{i+1} = \psi_{\alpha}$
      for $x \in \{\Al,\Ar,\Ai\}$;
    \item $w_i = \Aax$
      if $d_{i} = {\uparrow}$ and $d_{i+1} = {\downarrow}$
      (corresponding to bouncing on an axiom rule);
    \item $w_i = \Ccut$
      if $d_{i} = {\downarrow}$ and $d_{i+1} = {\uparrow}$
      (corresponding to bouncing on a cut rule);
    \item $w_i = \Wwait$ if $F_{i} = F_{i+1}$.
  \end{itemize}
\end{definition}

The weight should be seen as a bracketed expression, where each symbol
$x\in \{{\Al},{\Ar},{\Ai}\}$ is an opening bracket with matching closing
bracket $\bar{x}$.
When defining threads from pre-threads,
we will be particularly interested in the following classes of
well-bracketed words:

\begin{definition}\label{def:thread}
  Let $\C{B}$ and $\C{H}$ be the set of words defined inductively as follows:
  $$
  \C{B}:= \Ccut \ \mid \ \C{B}\Wwait^*\Aax\Wwait^*\C{B} \ \mid
          \ \bar{x}\Wwait^*\C{B}\Wwait^*x
  \quad\quad\quad
  \C{H}:= \epsilon\ \mid \ \Aax\Wwait^*\C{B}
  $$
  A (finite) pre-thread is called a \textbf{$b$-path} if $\weight{t}\in\C{B}$.
  It is called an \textbf{$h$-path} if $\weight{t}\in\C{H}$.
\end{definition}

The $b$-paths start downwards and end upwards: they consist
of a series of U-shapes centered around cuts, glued together by axioms.
The endpoints of $b$-paths are negations of each other (up to renaming).
The $h$-paths start and end going upwards, and their endpoints are
structurally equivalent (up to renaming).
Intuitively, $h$-paths will be simplified during cut elimination,
and eventually disappear completely.

\begin{definition}
  A pre-thread $t$ is a thread when it can be written
  $\odot_{i\in 1+\lambda} (H_i\odot V_i)$ where for all $i\in 1+\lambda$:
  \begin{itemize}
  \item $\weight{V_i}\in \{\Al,\Ar, \Ai, \Wwait\}^{\infty}$
    and it is non-empty if $i\neq\lambda$;
  \item $\weight{H_i}\in \C{H}$ and it is non-empty if $i\neq 0$.
  \end{itemize}
  Notice that such a decomposition is unique.
  We call $(V_i)_{i\in 1+\lambda}$ the \defname{visible part} of $t$,
  and we denote it by $\visP{t}$,
  and $(H_i)_{i\in 1+\lambda}$ its \defname{hidden part}
  and we denote it by $\hiddenP{t}$.
  A thread is \defname{stationary} when its visible part is a finite sequence (of finite words), or when there exists $k\in 1+\lambda$
  such that $\weight{V_i}\in\{\Wwait\}^{\infty}$ for all
  $k\leq i \in 1+\lambda$.
\end{definition}


For instance if a pre-thread $t=(F_i,s_i,\uparrow)_{i\in\lambda_t}$ of length $\lambda_t$ goes only 
upwards with $\weight{t}\in\{\Al,\Ar,\Ai,\Wwait\}^{\lambda_t}$, then the above decomposition is given by $\lambda=0$, $H_0=(F_0,s_0,\uparrow)$ and $V_0=t$. 

\begin{example}
Let us consider the blue pre-thread of Example~\ref{ex:pre-thread}. We can decompose it into 
a visible part (plain line) and a hidden part (dashed line) as shown below:
$$\scalebox{.9}{\begin{prooftree}
\Infer{0}[\rax]{\vdash \tikzmark{ar3}F, F'^\bot\tikzmark{ar4}}
\Infer{0}[\rax]{\vdash \tikzmark{ab3}G, G'^\bot\tikzmark{ab4}}
\Infer{2}[\mkrule{\parr,\tensor}]{\vdash F\tikzmark{ar2}\parr G, F'^\bot\otimes\tikzmark{ar5} G'^\bot}
\Hypo{\vdots}
\Infer{1}[]{\vdash \tikzmark{ar8}F'', G'}
\Infer{1}[\rnu]{\vdash \tikzmark{ar7}F', G'}
\Infer{1}[\rparr]{\vdash F'\tikzmark{ar6}\parr G'}
\Infer{2}[\rcut]{\vdash F\tikzmark{ar1}\parr G}  
  \end{prooftree}
  \begin{tikzpicture}[overlay,remember picture,-,line cap=round,line width=0.1cm]
  \draw[rounded corners, smooth=2,cyan, opacity=.25] ($(pic cs:ar1)+(.2cm,.1cm)$)to ($(pic cs:ar2)+(.3cm,-.3cm)$) to ($(pic cs:ar2)+(.2cm,.1cm)$) to($(pic cs:ar3)+(0,.1cm)$);
   \draw[rounded corners, smooth=2,cyan, opacity=.25,dash pattern=on 3pt off 6pt]($(pic cs:ar3)+(.1cm,.1cm)$) to($(pic cs:ar4)+(0,.1cm)$) to ($(pic cs:ar5)+(0,.1cm)$) to($(pic cs:ar6)+(.1cm,.1cm)$) to($(pic cs:ar7)+(.1cm,0cm)$);
    \draw[rounded corners, smooth=2,cyan, opacity=.25] ($(pic cs:ar7)+(.1cm,.1cm)$) to($(pic cs:ar8)+(.1cm,.1cm)$);
  \end{tikzpicture}}$$
The blue pre-thread is then indeed a thread. On the contrary,
the red pre-thread from \cref{ex:pre-thread} admits no such decomposition.
\end{example}

If we consider the sequence of formulas followed by a non-stationary thread
on its visible part, ignoring its hidden parts (which have equivalent 
formulas on their endpoints), and skipping the steps in the visible parts
corresponding to $\Wwait$ weights, we obtain an infinite sequence of formulas
as in \cite{BaeldeDS16} where each formula is an immediate subformula or an
unfolding of the previous formula.
It is then well known \cite{DoumanePhD} that the formulas appearing infinitely often in 
that sequence admit a minimum w.r.t.\ the subformula ordering. We call this 
formula the \defname{minimal formula of the thread}.

%
  
\begin{definition} 
  A non-stationary thread is \defname{valid} if its minimal
  formula is a $\nu$-formula.
\end{definition}

Consider for example the formula $F=\mu X.\nu Y.X$. The minimal formula obtained by unfolding $F$ infinitely many times is $F$ itself, a $\mu$-formula, so the corresponding thread would be invalid.



\savespace\subsection{Pre-proof validity: the multiplicative case} \label{subsec:validity}

The previous notion of valid thread suggests a first extension of the
notion of valid proof based on straight threads~\cite{BaeldeDS16}:
one might say that a branch $\beta$ is valid when there is a valid bouncing thread 
which meets $\beta$ infinitely often,
and declare a pre-proof valid when all its branches are. 
%
However, this notion of \emph{weak validity} turns out to allow
unsound proofs, as shown next.

\begin{example}\label{ex:unsound}
  We set $T:=\nu X. X$ and $F:=\mu X. X$. The following is a weakly valid proof of the empty sequent. The hidden part of the decomposition
\vspace{-.3cm}
$$
\scalebox{.85}{\begin{prooftree}
	\Infer 0[\rax]{\vdash F_{\alpha \Al}, T_{\beta \Al}}
	
    \Infer 0[\rax]{\vdash T\tikzmark{un5}_{\alpha \Ar\Ai}, F\tikzmark{un6}_{\beta \Ar}}    
    \Infer 1[\rnu]{\vdash T\tikzmark{un4}_{\alpha \Ar}, F\tikzmark{un7}_{\beta \Ar}}
    
    \Infer 2[\rtensor]{\vdash (F\tensor\tikzmark{un3}T)_\alpha, 
    T_{\beta \Al}, F\tikzmark{un8}_{\beta \Ar}}
    \Infer 1[\rparr]{\vdash (F\tensor\tikzmark{un2}T)_\alpha, 
    (T\parr\tikzmark{un9}F)_{\beta}}
    \Hypo{
      \vdash (F\tensor \orig{a}T)_{\beta^\perp}}
    \Infer 2 [\rcut]{\vdash F\tensor\tikzmark{un1}T_\alpha
    \begin{tikzpicture}[remember picture,overlay]
    \node[inner sep=0pt,outer sep=0pt] (b)  {\ensuremath{}};
    \draw [->,>=latex] (a.north east) .. controls +(10:1.5cm) and +(-5:5cm) .. (b.north east);
  \end{tikzpicture}\;}
    \Infer 0[\rax]{ \vdash T_{\alpha^\perp \Al}, F_{\alpha^\perp \Ar}}
  \Infer 1[\rparr]{ \vdash (T\parr F)_{\alpha^\perp}}
\Infer 2[\rcut] {\vdash }
  \end{prooftree}
   \begin{tikzpicture}[overlay,remember picture,-,line cap=round,line width=0.1cm]
   \draw[rounded corners, smooth=2,green, opacity=.4] ($(pic cs:un4)$) to($(pic cs:un5)+(0cm,.1cm)$); 
 \draw[rounded corners, smooth=2,cyan, opacity=.4,dash pattern=on 3pt off 5pt]
 ($(pic cs:un1)$) to ($(pic cs:un2)$) to ($(pic cs:un3)$) to($(pic cs:un4)$);
 \draw[rounded corners, smooth=2,cyan, opacity=.4,dash pattern=on 3pt off 6pt]
($(pic cs:un5)+(0cm,.1cm)$) to($(pic cs:un6)+(0cm,.1cm)$) to($(pic cs:un7)$) to($(pic cs:un8)+(.1cm,0cm)$)to($(pic cs:un9)+(-.1cm,.1cm)$)to ($(a)$) to ($(a)+(0cm,.1cm)$);
  \end{tikzpicture}
  }
$$
\end{example}

A proper notion of validity must therefore be more constraining.
We shall consider the following one, which requires that the visible
part of the valid thread $t$ is contained in the infinite branch $\beta$.


\begin{definition} Let $\pi$ be a \muMLLi pre-proof. An infinite branch  $\beta$ of $\pi$ is said to be \defname{valid} if there is a valid thread $t$ starting  from one of its sequents, whose visible part is contained in this branch. 
\noindent  A \defname{\muMLLi proof} is a \muMLLi pre-proof in which every infinite branch is valid.
\end{definition}

\begin{example}\label{ex:validproofs}
  We show below examples of valid and invalid pre-proofs:
$$
\scalebox{.85}{\begin{prooftree}
  \Infer{0}[\rax]{\vdash (\nu X\tikzmark{bb2}. X)_\alpha, (\mu X.\tikzmark{bb3} 
  X)_\beta}
\Hypo{
  \vdash (\nu X\tikzmark{bb5}. \orig{a} X)_{\beta^\bot\Ai}}
\Infer{1}[\rnu]{\vdash (\nu X\tikzmark{bb4}. X)_{\beta^\bot}}
\Infer{2}[\rcut]{
  \vdash (\nu X\tikzmark{bb1}. X)_\alpha \begin{tikzpicture}[remember picture,overlay]
    \node[inner sep=0pt,outer sep=0pt] (b)  {\ensuremath{}};
    \draw [->,>=latex] (a.north east) .. controls +(20:1.2cm) and +(-5:5.5cm) .. (b.east);
  \end{tikzpicture}\;}
\end{prooftree}
}$$
$$\qquad 
\scalebox{.85}{\begin{prooftree}
  \Infer{0}[\rax]{\vdash (\nu X\tikzmark{rr3}. X)_{\alpha\Ai}, (\mu 
  X.\tikzmark{rr4} X)_\beta}
  \Infer{1}[\rnu]{\vdash (\nu X\tikzmark{rr2}. X)_\alpha, (\mu X.\tikzmark{rr5} 
  X)_\beta}
\Hypo{
  \vdash (\nu X\tikzmark{rr6}. \orig{aa}X)_{\beta^\bot}}
\Infer{2}[\rcut]{
  \vdash (\nu X\tikzmark{rr1}. X)_\alpha
  \begin{tikzpicture}[remember picture,overlay]
    \node[inner sep=0pt,outer sep=0pt] (bb)  {\ensuremath{}};
    \draw [->,>=latex] (aa.north east) .. controls +(20:1.2cm) and +(-5:5cm) .. (bb.east);
  \end{tikzpicture}\;
}
\end{prooftree}
 \begin{tikzpicture}[overlay,remember picture,-,line cap=round,line width=0.1cm]
   \draw[rounded corners, smooth=2,cyan, opacity=.25] ($(pic cs:bb1)+(0cm,.1cm)$) to ($(pic cs:bb2)+(.2cm,-.4cm)$) to ($(pic cs:bb2)+(-.2cm,.1cm)$) to ($(pic cs:bb2)+(.5cm,.5cm)$) to($(pic cs:bb3)+(0,.1cm)$) to($(pic cs:bb4)+(.1cm,0cm)$) to($(pic cs:bb5)+(0cm,0cm)$)to($(pic cs:bb5)+(.5cm,0.3cm)$); 
   \draw[rounded corners, smooth=2,red, opacity=.25] ($(pic cs:rr1)+(-.1cm,.2cm)$) to ($(pic cs:rr2)+(.2,-.4cm)$) to ($(pic cs:rr2)+(-.1,.2cm)$) to($(pic cs:rr3)+(0,.2cm)$) to($(pic cs:rr4)+(0,.2cm)$) to($(pic cs:rr5)+(0,.1cm)$) to($(pic cs:rr6)+(0cm,0cm)$) to($(pic cs:rr6)+(.1cm,0cm)$)to($(pic cs:rr6)+(.5cm,.3cm)$);

  \end{tikzpicture}
  }
$$
  The pre-proof on the left is valid: its infinite
  branch is supported by the valid blue thread,
  whose visible part belongs to the infinite branch.
  The right pre-proof is not valid, because the red thread,
  though valid, has a visible part that is not contained in
  the infinite branch.
\end{example}

\subsection{Pre-proof validity: accomodating the additives}
\label{sec:Additives}

The previous definition of validity is too weak to ensure cut-elimination for \muMALLi,
%
%
which is not a strictly linear sequent calculus (as $\muMLLi$ is)
 since commutation/external reductions for the \rwith{} connective induce the duplication of a subproof. As a result, the extension of the validity condition in Section~\ref{subsec:validity} fails to ensure productivity and validity of cut-elimination as shown in figure~\ref{fig:additives}.(i).
\begin{figure*}[t]
$  \begin{array}{l|l|l}
    \hspace{-.6cm}    \pi_k =
    \scalebox{.85}{
      \begin{prooftree}
    \Hypo{}
    \Infer 1[\rax$\dagger$]{\vdash T\tikzmark{add16},T^\perp\tikzmark{add17}}
    \Infer 1[\rbot]{\vdash \bot,T\tikzmark{add15},T^\perp\tikzmark{add18}}
    \Hypo{}
    \Infer 1[\rax$\ddagger$]{\vdash T,T^\perp}
    \Infer 1[\rbot]{\vdash \bot,T,T^\perp}
    \Infer 1[\scriptsize{$\color{red}{\rmu^k}$}]{{\vdash \bot, T,T^\perp}}
    \Infer 2 [\rwith]{\vdash \bot\with\bot, T\tikzmark{add14},T^\perp\tikzmark{add19}}
    \Hypo{\vdash S,\orig{a}T\tikzmark{add20}}
    \Infer 2[\rcut]{\vdash \bot\with\bot, S,T\tikzmark{add13}}
    \Infer 1[\rmu,\rparr]{\vdash S,T\tikzmark{add12}}
    \Infer 1[\rnu]{\vdash S,T\tikzmark{add11}
          \begin{tikzpicture}[remember picture,overlay]
    \node[inner sep=0pt,outer sep=0pt] (b)  {\ensuremath{}};
    \draw [->,>=latex] (a.north east) .. controls +(20:1.2cm) and +(0:4.5cm) .. (b.east);
  \end{tikzpicture}\;}
  \end{prooftree} \begin{tikzpicture}[overlay,remember picture,-,line cap=round,line width=0.1cm]
   \draw[rounded corners, smooth=2,cyan, opacity=.25] ($(pic cs:add11)+(0cm,.1cm)$) to ($(pic cs:add12)+(-0.1cm,.1cm)$) to($(pic cs:add13)+(0,.1cm)$) to($(pic cs:add14)+(.1cm,-0.1cm)$) to($(pic cs:add15)+(0,-.1cm)$)to($(pic cs:add16)+(-0.1cm,.3cm)$)to($(pic cs:add17)+(-0.1cm,.3cm)$)to($(pic cs:add18)+(-.2cm,.1cm)$)to($(pic cs:add19)+(0,.1cm)$)to($(pic cs:add19)+(1cm,-.1cm)$)to($(pic cs:add20)+(-0.1cm,-.1cm)$)to($(pic cs:add20)+(0,.3cm)$); 
  \end{tikzpicture}}
\hspace{.2cm}    &
\qquad
    \scalebox{.85}{
\begin{prooftree}
    \Hypo{\orig{a}\vdash S,T}
\Infer 1[\rbot]{\vdash \bot,S,T}
    \Hypo{\vdash S,\orig{aa}T}
    \Infer 1[\rbot]{\vdash \bot, S,T}
    \Infer 2[\rwith]{\vdash \bot\with\bot, S,T}
    \Infer 1[\rmu,\rparr]{\vdash S,T
              \begin{tikzpicture}[remember picture,overlay]
    \node[inner sep=0pt,outer sep=0pt] (bb)  {\ensuremath{}};
    \draw [->,>=latex] (aa.north east) .. controls +(10:2cm) and +(0:4.5cm) .. (bb.east);
  \end{tikzpicture}\;}
    \Infer 1[\rnu]{          \begin{tikzpicture}[remember picture,overlay]
    \node[inner sep=0pt,outer sep=0pt] (b)  {\ensuremath{}};
    \draw [->,>=latex] (a.north east) .. controls +(160:2cm) and +(180:2.5cm) .. (b.east);
  \end{tikzpicture}\;\vdash S,T}
\end{prooftree}
} \hspace{1cm}
    &Sl(\pi_k) \ni
    \scalebox{.85}{
      \begin{prooftree}
    \Hypo{}
    \Infer 1[\rax]{\vdash T,T^\perp}
    \Infer 1[\rbot]{\vdash \bot,T,T^\perp}
    \Infer 1[\scriptsize{$\rmu^k$}]{{\vdash \bot, T,T^\perp}}
    \Infer 1
           [\rwithr]{\vdash \bot\with\bot, T,T^\perp}
    \Hypo{\vdash S,\orig{a}T}
    \Infer 2[\rcut]{\vdash \bot\with\bot, S,T}
    \Infer 1[\rmu,\rparr]{\vdash S,T}
    \Infer 1[\rnu]{\vdash S,T
          \begin{tikzpicture}[remember picture,overlay]
    \node[inner sep=0pt,outer sep=0pt] (b)  {\ensuremath{}};
    \draw [->,>=latex] (a.north east) .. controls +(20:1.2cm) and +(0:4.5cm) .. (b.east);
  \end{tikzpicture}\;}
  \end{prooftree}}    
    \\[-.5cm]
    (i) & (ii) & (iii)
  \end{array}
    $
  \caption{\label{fig:additives}
    (i) Pre-proof family $(\pi_k)_{k\in\mathbb{N}}$ with  $S= \mu Y.((\bot\with\bot)\parr Y), T=\nu X.X$. Note that we omit the occurrences and that $k$ is a parameter fixing how many times the $\mu$ rule (in red) should be applied to the sequent $\vdash \bot, T,T^\perp$. (ii) Result of applying (infinitary) cut-elimination to $\pi_1$. (iii) Example of a slice of $\pi_k$.}
  \end{figure*}
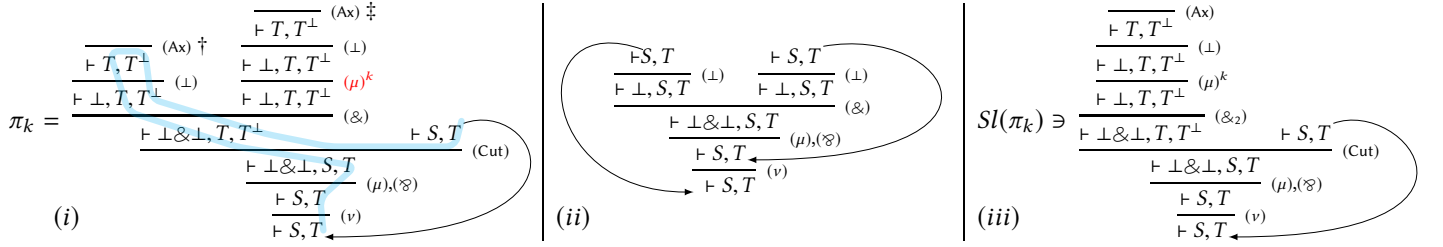
The result of cut-elimination on the proofs in the sequence $(\pi_k)_{k\geq 0}$ can be split into the following cases:
\\
(i) from $\pi_0$, cut-elimination is productive and produces a valid cut-free proof;\\
(ii) from $\pi_1$, cut-elimination  produces an \emph{invalid} pre-proof (see Figure~\ref{fig:additives}.(ii)): 
 any infinite branch following only finitely many times the left back-edge is invalid;
\\
(iii) from $\pi_k$, for $k\geq 2$ it is not even productive.
Indeed, in these examples, each $\pi_k$ contains exactly one infinite
branch which is supported by a thread on $T$ bouncing on the left-most
axiom and this thread is valid.
\hide
Still, this does not ensure productivity nor validity.
For instance, $\pi_1$ reduces to the following pre-proof:
$$  \begin{prooftree}
    \Hypo{\orig{a}\vdash S,T}
\Infer 1[\rbot]{\vdash \bot,S,T}
    \Hypo{\vdash S,\orig{aa}T}
    \Infer 1[\rbot]{\vdash \bot, S,T}
    \Infer 2[\rwith]{\vdash \bot\with\bot, S,T}
    \Infer 1[\rmu,\rparr]{\vdash S,T
              \begin{tikzpicture}[remember picture,overlay]
    \node[inner sep=0pt,outer sep=0pt] (bb)  {\ensuremath{}};
    \draw [->,>=latex] (aa.north east) .. controls +(10:2cm) and +(0:4.5cm) .. (bb.east);
  \end{tikzpicture}\;}
    \Infer 1[\rnu]{          \begin{tikzpicture}[remember picture,overlay]
    \node[inner sep=0pt,outer sep=0pt] (b)  {\ensuremath{}};
    \draw [->,>=latex] (a.north east) .. controls +(160:2cm) and +(180:2.5cm) .. (b.east);
  \end{tikzpicture}\;\vdash S,T}
  \end{prooftree}
  $$
which is not valid since any infinite branch which takes only finitely many times the left back-edge is invalid. 
\endhide

To understand the problem, consider the first step of cut-reduction (from $\pi_k$, for any $k$):
it is a $\rcut/\rwith$ commutation step, which copies the right-premiss of
the cut ({\it ie.} the
non-wellfounded part of the proof): after this step, the pre-proof contains
two infinite branches, but only one thread to validate them.
While the leftmost copy can be validated by the original thread, the
rightmost copy does not contain a residual of the original thread. Of course,
one might consider a thread originated in the cut inference, but that
will not suffice to ensure validity, nor productivity, as $\pi_2$ examplifies: its rightmost branch produces bottom rule.

\subsubsection{Sliced proof system and its cut-reduction}

This issue is solved by refining the criterion using slices~\cite{1987:Girard:LL,girard01locus,HvG,Terui11} and requiring that there exists
a supporting thread not only for every infinite branch of the proof, but also for every infinite branch \emph{of every persistent slice} of the pre-proof. 
In linear logic, an additive slice is a subtree of a sequent proof obtained by removing, for any of its $\rwith$ inference, the subtree rooted in one of its premisses (see Appendix~\ref{app:Additives} for details and precise definitions).

\begin{definition}
  \muSMALLi is obtained by extending \muMALLi with the following three inference rules:
  $$\begin{prooftree}
    \Hypo{\vdash A, \Gamma}
    \Infer 1[\rwithl]
           {\vdash A\with B, \Gamma}
  \end{prooftree}
  \qquad \begin{prooftree}
    \Hypo{\vdash B, \Gamma}
    \Infer 1[\rwithr]
           {\vdash A\with B, \Gamma}
  \end{prooftree}
  \qquad
  \begin{prooftree}
    \Hypo{}
    \Infer 1[$\rdai$]{\vdash \Gamma}
  \end{prooftree}
  $$
\end{definition}

\begin{definition}[Additive slice]
  \defname{Partially sliced} pre-proofs are the non-wellfounded  
  \muSMALLi pre-proofs. 
A(n additive) \defname{slice} is a $\rwith$-free, $\rdai$-free, \mupSMALLi-preproof. \end{definition}

    To a $\muMALLi$ sequent (pre-)proof, one can associate a set of slices
    by keeping, for each $\rwith$ inference, only one of its premisses and replacing the $\rwith$ with  the corresponding inference in $\rwithl,\rwithr$
    by applying corecursively one of the following two reductions:
$$    \scalebox{.9}{
$
    \begin{prooftree}
      \Hypo{\pi_1}
      \Infer 1{\vdash A_1, \Gamma}
      \Hypo{\pi_2}
      \Infer 1{\vdash A_2, \Gamma}
      \Infer 2[\rwith]{\vdash A_1\with A_2, \Gamma}
    \end{prooftree}
    ~\longrightarrow~
    \begin{prooftree}
      \Hypo{\pi_1}
      \Infer 1{\vdash A_1, \Gamma}
      \Infer 1[\rwithl]
             {\vdash A_1\with A_2, \Gamma}
        \end{prooftree}, \begin{prooftree}
      \Hypo{\pi_2}
      \Infer 1{\vdash A_2, \Gamma}
      \Infer 1[\rwithr]
             {\vdash A_1\with A_2, \Gamma}
        \end{prooftree}
$}$$
More precisely: 

\begin{definition}[Slicing of a pre-proof]
  The set of \defname{slices of $\pi$}, $Sl(\pi)$, is defined corecursively by
$$
\scalebox{.9}{$
Sl\left(\begin{prooftree}
      \Hypo{\pi_1}
      \Infer 1{\vdash A_1, \Gamma}
      \Hypo{\pi_2}
      \Infer 1{\vdash A_2, \Gamma}
      \Infer 2[\rwith]{\vdash A_1\with A_2, \Gamma}
    \end{prooftree}
\right) =
\left\{\begin{prooftree}
\Hypo{\pi'_i}
      \Infer 1{\vdash A_i, \Gamma}
      \Infer 1[\rwithi]
             {\vdash A_1\with A_2, \Gamma}
\end{prooftree},
\begin{array}{c}
  \pi'_i\in Sl(\pi_i),\\
    i\in\{1,2\}
\end{array}
\right\}
$}
$$        
(The other inferences are treated homomorphically.)
\end{definition}

\begin{example}
  Fig.~\ref{fig:additives}.(iii) gives an example of a slice.
\end{example}
  
\subsubsection{Cut-reductions for sliced proofs}
 
Cut-reduction rules for (partial) slices of $\muSMALLi$ extend those for
$\muMALLi$ with specific rules for sliced additives and $\rdai$.
Indeed, one may have a problematic situation, {\it e.g.} when a $\rwithl$ shall interact
 with a $\roplusr$: cut-elimination cannot be performed by relying on subproofs.  
%
%
%
%
%
%

 \begin{definition}[Cut reductions for slices]
   The sliced additive principal case is reduced as follows, 
   \hide
 \noindent\scalebox{.75}{
$\begin{prooftree}
\Hypo{\pi_i}
      \Infer 1{\vdash A_i, \Gamma}
      \Infer 1[\rwithi]
             {\vdash A_1\with A_2, \Gamma}
\Hypo{\pi'_j}
      \Infer 1{\vdash A^\perp_j, \Gamma}
      \Infer 1[\roplusj]
             {\vdash A^\perp_1\oplus A^\perp_2, \Delta}
             \Infer 2 [\rcut]{\vdash \Gamma, \Delta}
    \end{prooftree}
    $
    $\rightarrow \left\{
\begin{array}{cc}
   \begin{prooftree}\Hypo{}\Infer 1[$\rdai$]{\vdash \Gamma,\Delta}\end{prooftree} & \text{if }i\neq j\\
\begin{prooftree}
\Hypo{\pi_i}
      \Infer 1{\vdash A_i, \Gamma}
\Hypo{\pi'_i}
      \Infer 1{\vdash A^\perp_i, \Delta}
             \Infer 2 [\rcut]{\vdash \Gamma, \Delta}
    \end{prooftree}  & \text{if }i = j\\      
    \end{array}
    \right.$
 }
 \endhide
if $\{A^\perp_1\with A^\perp_2, {A'}_1\oplus {A'}_2\}\in\dbot$, with $r = (\mathsf{princ}, \{A^\perp_1\with A^\perp_2, {A'}_1\oplus {A'}_2\})$.
$$\begin{prooftree}
\Hypo{\C C}
\Hypo{\pi_i}
\Infer 1
{\vdash A^\perp_i, \Gamma}
\Infer 1[\rwithi]
{\vdash A^\perp_1\with A^\perp_2, \Gamma}
\Hypo{\pi'_j}
\Infer 1{\vdash {A'}_j, \Gamma}
\Infer 1[\roplusj]
{\vdash {A'}_1\oplus {A'}_2, \Delta}
\Infer 3 [{\scriptsize\ensuremath{\mathsf{mcut(\iota, \perp\!\!\!\perp)}}}\xspace]{\vdash \Sigma}
    \end{prooftree}
    $$

 $$\underset{r}{\longrightarrow}
 \left\{
\begin{array}{cc}
   \begin{prooftree}\Hypo{}\Infer 1[$\rdai$]{\vdash \Sigma}\end{prooftree} & \text{if }i\neq j\\[12pt]
\begin{prooftree}
\Hypo{\C C}
\Hypo{\pi_i}
      \Infer 1{\vdash A^\perp_i, \Gamma}
\Hypo{\pi'_i}
      \Infer 1{\vdash A'_i, \Delta}
             \Infer 3 [{\scriptsize\ensuremath{\mathsf{mcut(\iota, \perp\!\!\!\perp')}}}\xspace]{\vdash \Sigma}
\end{prooftree}  & \text{if }i = j\\
\text{where }\dbot'=\dbot\cup\{\{A^\perp_i,{A'}_i\}\}&
    \end{array}
    \right.$$
$$\text{or}\quad\begin{prooftree}
\Hypo{\C C}
\Hypo{}
\Infer 1[\rdai]
{\vdash \Gamma}
\Infer 2 [{\scriptsize\ensuremath{\mathsf{mcut(\iota, \perp\!\!\!\perp)}}}\xspace]{\vdash \Sigma}
    \end{prooftree}
\underset{r}{\longrightarrow}
\begin{prooftree}
\Hypo{}
\Infer 1[\rdai]
{\vdash \Sigma}
    \end{prooftree} \quad \text{with $r = (\mathsf{princ}, \dai)$.}$$
 \end{definition}

Notions of $b$-paths and $\epsilon$-paths can be naturally extended to additive slices.

\subsubsection{Persistent slices}

Persistent slices are introduced precisely as those in which no case of the above mismatch ever occurs:
 
 \begin{definition}[Persistent slice]
   Given a slice $\pi$, a $\rwithi$ rule of principal formula $A_1\with A_2$
   occurring in $\pi$ is said to be
   \defname{well-sliced} if
   no $b$-path starting down from the $A_1\with A_2$ occurrence of
   this sequent ends in a formula
   $A^\perp_1\oplus A^\perp_2$ that is the principal formula for a $\roplusj$ inference with $i\neq j$.
   A slice is \defname{persistent} if all its $\rwithi$ occurrences are well-sliced. \end{definition}

 \begin{example}
Pre-proof in Fig.~\ref{fig:additives}.(iii) is (obviously) a persistent slice.
 \end{example}

The following two properties of persistent slices are the key for the cut-elimination property:

 \begin{proposition}
   All reducts of a persistent slice are $\rdai$-free, and therefore are slices.
 \end{proposition}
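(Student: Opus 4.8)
The plan is to prove the slightly stronger invariant that every reduct of a persistent slice is again a \emph{persistent} slice, by induction on the length of the reduction sequence (taking limits for the productive, infinitary part); the proposition then follows since persistent slices are $\rdai$-free and $\rwith$-free by definition. So it suffices to treat a single step $\pi \to \pi'$ with $\pi$ persistent, and to check two things: that the step introduces no $\rdai$, and that $\pi'$ is still persistent.

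For the first point I would observe that, among all the cut-reduction rules, the only one that creates a fresh $\rdai$ is the mismatched sliced-additive principal case, applied to a $\dbot$-connected pair made of a $\rwithi$ premiss with principal occurrence $A^\perp_1 \with A^\perp_2$ and a $\roplusj$ premiss with principal occurrence $A'_1 \oplus A'_2$ with $i \neq j$ (the $\rdai$-principal rule instead requires a $\rdai$ already present, which the induction hypothesis excludes). I claim this rule can never fire on a persistent slice. Indeed, the two principal occurrences sit in premisses of the same multicut and are dual, so the length-two pre-thread $\psd{A^\perp_1 \with A^\perp_2}{s}{\downarrow}\cdot\psd{A'_1 \oplus A'_2}{s'}{\uparrow}$ that bounces on that cut is a $b$-path, its weight being $\Ccut \in \C{B}$. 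This is exactly a $b$-path starting down from the $\with$ occurrence and ending on the principal occurrence of a $\roplusj$ with $i \neq j$, contradicting the well-slicedness of the $\rwithi$ inference. Hence no step introduces $\rdai$; since no rule introduces $\rwith$ either, $\pi'$ is $\rwith$-free and $\rdai$-free, i.e.\ a slice.

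For the second point I would show that persistence is preserved. Every $\rwithi$ inference of $\pi'$ is the residual of a $\rwithi$ inference of $\pi$, since reduction only erases, duplicates or permutes inferences and never creates a new $\rwithi$. Suppose, toward a contradiction, that some $\rwithi$ inference of $\pi'$ failed to be well-sliced, witnessed by a $b$-path from its principal occurrence to a mismatched $\roplusj$ occurrence. The key lemma I would invoke is a residuation property for $b$-paths: every $b$-path of a reduct lifts back to a $b$-path of the source, with the same endpoints up to renaming of addresses, obtained by reading the reduction step backwards. Applying this to the witnessing path yields a $b$-path in $\pi$ from the corresponding $\rwithi$ occurrence to a mismatched $\roplusj$ occurrence, contradicting the persistence of $\pi$. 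Thus $\pi'$ is persistent, closing the induction.

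The main obstacle is precisely the residuation lemma used in the second point: one must verify, rule by rule, that a $b$-path surviving in the reduct always reflects back to a $b$-path in the source, with its weight still lying in $\C{B}$. The delicate cases are those in which the geometry of a bounce changes: a bounce on a cut that is being reduced (the internal $\rcut/\rcut$ merging, and the logical principal reduction of the $i=j$ additive case, which introduces the new pair $\{A^\perp_i,A'_i\}$ into $\dbot$ and thereby splices two U-shapes into one), and bounces on axioms rerouted by axiom/cut reductions. There one checks that a single spliced $\Ccut$-bounce in $\pi'$ corresponds in $\pi$ to a factor of the form $\bar{x}\Wwait^*\C{B}\Wwait^*x$, so the lifted weight remains in $\C{B}$ and the endpoints are preserved. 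Once this bookkeeping is in place the induction goes through, and every reduct of a persistent slice is a $\rdai$-free slice.
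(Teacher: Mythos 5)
Your proposal follows essentially the same route as the paper, whose proof consists of exactly your two observations: (i) a single cut-reduction step preserves persistence, and (ii) a step creating a $\rdai$ witnesses non-persistence of its source (your length-two $b$-path of weight $\Ccut$ bouncing on the mismatched $\dbot$-pair). The paper states these as "simple observations" without elaboration, so your fleshed-out version — in particular the backward residuation of $b$-paths needed for (i) — is a more detailed rendering of the same argument rather than a different one.
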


 \begin{proposition}[Pull-back property]
   If $\pi \rightarrow^* \pi'$ (resp. $\pi \rightarrow^\omega \pi'$) and 
   $S' \in Sl(\pi')$, then there is a
   $S\in Sl(\pi)$ 
   such that $S\rightarrow^* S'$ (resp. $S \rightarrow^\omega S'$). 
\end{proposition}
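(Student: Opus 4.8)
The plan is to reduce the infinitary statement to a single-step pull-back lemma and then lift it, first to $\rightarrow^*$ by induction and finally to $\rightarrow^\omega$ by a convergence argument. First I would prove the one-step version: \emph{if $\pi \rightarrow \pi'$ and $S' \in Sl(\pi')$, then there is $S \in Sl(\pi)$ with $S \rightarrow^{\leq 1} S'$} (i.e.\ in zero or one slice-reduction step). This is established by a case analysis on the \muMALLim reduction rule producing $\pi'$ from $\pi$. For every rule that does not involve a $\rwith$ inference --- the $\rcut$/$\rmcut$ merge, the multiplicative and fixed-point principal cases, the axiom cases with their renaming $\iota$, and all commutations of the multicut past a non-$\rwith$ logical rule --- slicing is homomorphic in the rule, so the choices recorded by $S'$ at each $\rwith$ occurrence of $\pi'$ transport back verbatim to a slice $S$ of $\pi$, and the very same rule, now read on slices, witnesses $S \rightarrow S'$ (or $S = S'$ when the fired redex lies in a branch that $S'$ has sliced away).

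The only genuinely additive cases are the commutation of the multicut past a $\rwith$ and the $\rwith$/$\oplus$ principal case, and in both the key point is that we are free to \emph{choose} $S$. In the commutation case, $S'$ resolves the residual $\rwith$ of $\pi'$ to some $\rwithi$ and retains only the $i$-th premiss together with one slicing of the context $\C C$; we pull this back by resolving the corresponding $\rwith$ of $\pi$ to the same $\rwithi$ and by slicing $\C C$ exactly as $S'$ does on its surviving copy, after which the sliced $\rwithi$-commutation gives $S \rightarrow S'$. In the principal case, where a $\rwith$ meets a $\roplusj$ and the full reduction keeps the $j$-th premiss, we choose the slice $S$ that resolves the $\rwith$ to $\rwithr$ or $\rwithl$ according to $j$; the sliced principal rule for the matching case then produces exactly $S'$, the spurious $\rdai$-producing slices (those resolving to $\rwithi$ with $i\neq j$) being simply not selected. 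This existential freedom is what makes the additive cases go through, and here the labels $\iota,\dbot$ merely track how the surviving occurrences are renamed and cut-connected.

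Lifting to $\rightarrow^*$ is then a routine induction on the length of the reduction: if $\pi \rightarrow \pi_1 \rightarrow^* \pi'$ and $S' \in Sl(\pi')$, the induction hypothesis yields $S_1 \in Sl(\pi_1)$ with $S_1 \rightarrow^* S'$, and the one-step lemma yields $S \in Sl(\pi)$ with $S \rightarrow^{\leq 1} S_1$, whence $S \rightarrow^* S'$; the base case is $S := S'$.

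The hard part is the infinitary case $\pi \rightarrow^\omega \pi'$. Here I would use that a productive reduction sequence $\pi = \pi_0 \rightarrow \pi_1 \rightarrow \cdots$ is strongly convergent: every finite prefix of the limit $\pi'$ is already stabilised after finitely many steps, and each $\rwith$ occurrence of $\pi'$ is created at some finite stage. Consequently the choice that $S'$ makes at each such $\rwith$ can be traced back, through the finitely many steps affecting the region where it was produced, to a well-defined choice in the $\pi_n$; performing the one-step pull-back coherently along these traces defines a slice $S_0 \in Sl(\pi)$ together with a reduction sequence $S_0 \rightarrow S_1 \rightarrow \cdots$ with $S_n \in Sl(\pi_n)$. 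The main obstacle is precisely to verify that this pulled-back slice reduction is itself strongly convergent with limit $S'$: since the slice steps are projections of the $\pi_n$-steps onto the retained branches, each root-directed (productive) step of the $\pi_n$-sequence acting on a branch kept by $S'$ induces a productive step on the slices, so the output of the slice reduction grows exactly as that of the original one and $S_0 \rightarrow^\omega S'$ follows. I would isolate this convergence bookkeeping as the technical heart of the proof, the additive one-step analysis above being its only non-routine ingredient.
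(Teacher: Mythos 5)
Your proposal is correct and follows essentially the same route as the paper: the paper dismisses the finitary case as "a well-known property of slices" (which your one-step lemma plus induction makes explicit, including the only non-trivial additive cases where the existential choice of the slice matters) and handles the infinitary case exactly as you do, by strong convergence of fair reductions and tracing each piece of $S'$ back to the finite stage at which it is produced. Your write-up simply supplies the details the paper's two-sentence proof elides.
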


\subsubsection{Additive validity}

Def~\ref{def:pre-thread} and~\ref{def:thread} of (pre-)threads directly adapt to the additives -- as they are not specific to \muMLLi -- and allow us to consider the following definition:

\begin{definition}
  A persistent slice is \defname{valid} if it is valid in the multiplicative sense\footnote{That is, every infinite branch of the slice is visited by a valid thread having its visible part contained in the branch.}.
  A \muMALLi pre-proof $\pi$ is \defname{valid} if all its persistent slicings are valid.
  \end{definition}

\begin{example}
\begin{figure*}[t]
  \vspace{.5cm}
  $$
\hspace{2cm}\pi_{\mathsf{filter}} = \quad
  \begin{prooftree}
    \Hypo{\pi_k}
    \Infer{1}{\vdash N}
    \Hypo{}
    \Infer{1}[\rax]{B \vdash B}
    \Hypo{\orig{adda} S \vdash S}
    \Infer{2}[\rnul,\rtensorl,\rnur,\rtensorr]{S\vdash S}
    \Infer{1}[\ronel]{1, S \vdash S}
    \Hypo{\orig{addb}{N, S \vdash S} }
    \Hypo{\pi_{neg}}
    \Infer{1}[]{B\vdash B}
    \Hypo{}
    \Infer{1}[\rax]{S\vdash S}
    \Infer{2}[\rnul,\rtensorl,\rnur,\rtensorr]{S\vdash S}
    \Infer{2}[\rcut]{N, S\vdash S}
    \Infer{1}[\rweakbooll]{N, B, S\vdash S}
    \Infer{1}[\rnul,\rtensorl]{N, S\vdash S}
    \Infer{2}[\rmul,\roplusl]{N, S\vdash S
      \begin{tikzpicture}[remember picture,overlay]
        \node[inner sep=0pt,outer sep=0pt] (addbb)  {\ensuremath{}};
        \draw [->,>=latex] (addb.north) .. controls +(50:7cm) and +(-5:11.5cm) .. (addbb.north east);
      \end{tikzpicture}\;}
    \Infer{2}[\rcut]{
      \begin{tikzpicture}[remember picture,overlay]
        \node[inner sep=0pt,outer sep=0pt] (addaa)  {\ensuremath{}};
        \draw [->,>=latex] (adda.north east) .. controls +(150:4cm) and +(190:5cm) .. (addaa.east);
      \end{tikzpicture}\;
      S\vdash S}
  \end{prooftree}
  $$
  \caption{\label{fig:pi-filter}Example of an additive circular proof}
\end{figure*}
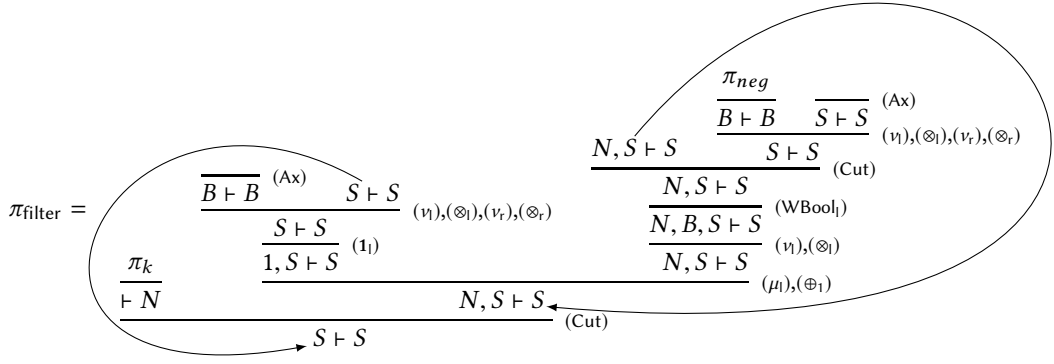

The circular pre-proof of Fig.~\ref{fig:pi-filter} is an example of a valid additive circular proof. It corresponds to the last program considered in the introduction.
\end{example}





\Section{Cut elimination theorem for \muMALLi}\label{sec:muMLLCutElimThm}

In this section, we shall establish our central result:

\restatabletheorem{th:mcutelim}{
Fair reduction sequences on \muMALLim proofs produce cut-free \muMALLi proofs.
}

For expository reasons, we focus on the multiplicative case here. The treatment of additives, while bringing new cases, is similar and can be found in Appendix~\ref{app:Additives}. 




The proof follows the same lines as the proof of cut elimination for straight threads \cite{BaeldeDS16}. 
We will only sketch it here and emphasize the new phenomena arising from the presence of axioms and bouncing threads.  The full proof can be found in Appendix~\ref{app:muMLLCutElimThm}.

The proof of Theorem~\ref{th:mcutelim} is in two parts.
We first prove that we cannot have a fair infinite reduction sequence made only of (unproductive) internal reductions.
Hence cut elimination is productive, \ie
reductions of \muMLLim proofs converge to cut-free \muMLLi pre-proofs.
We then establish that the obtained pre-proof is a valid proof.
In this section, we will only show productivity, validity of the resulting proof is shown in a similar way.

To show productivity, we proceed by contradiction, assuming that there exists a fair infinite sequence of internal reductions from a given proof $\pi$ of conclusion $\Gamma$. We will also assume w.l.o.g. that $\pi$ has only one multicut at the root. Note that since we perform only internal reduction rules, and since the latter do not duplicate multicuts, there is only one multicut progressing in the proof during this sequence of reductions. In the following, we refer to it as ``the'' multicut. 

\subsection{Trace of a reduction sequence} Let us start by introducing an important tool to analyse internal reduction sequences, called their \defname{trace}.  The trace of an internal reduction sequence is the set of sequents that occurred as a premise of the multicut rule during this reduction sequence. The conclusion sequent of the proof is additionally included in the trace.
By analyzing the reduction rules, it is easy to see that:

\begin{proposition}
  Given a $\muMLLi$ proof $\pi$, the trace of $\pi$ is a subtree (possibly with open leaves) of the original proof $\pi$: the trace of $\pi$ is the proof tree $\pi$ from which some branches may have been pruned and replaced by open leaves.
\end{proposition}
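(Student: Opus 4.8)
The plan is to track, throughout the (assumed infinite) internal reduction sequence, the set of sequents sitting immediately above the unique progressing multicut, and to show that this ``frontier'' only ever climbs upward inside $\pi$. Concretely, for each reduction step $t$ I would let $\Phi_t$ denote the set of premisses of the multicut at that step, each viewed as a node (sequent occurrence) of $\pi$. The central invariant to establish is: every element of $\Phi_t$ is, modulo the renaming recorded by $\iota$, a node of $\pi$; these nodes form an antichain for the ancestor ordering of $\pi$; and each reduction step replaces some of them by strict descendants (their children in $\pi$). This is proved by induction on $t$. The base case is the initial unary multicut, whose single premiss is the root of $\pi$, so $\Phi_0=\{\mathrm{root}(\pi)\}$, trivially an antichain of nodes of $\pi$; note $\mathrm{root}(\pi)$ is exactly the conclusion sequent that the definition adds to the trace.

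For the inductive step I would inspect each internal reduction rule (the exhaustive list being in \cref{app:CutElimRules}; two representative cases appear in \cref{fig:ExInternal}). In a cut-absorption step the premiss whose last rule in $\pi$ is a $\rcut$ is replaced by the two premisses of that cut, i.e.\ by its two children in $\pi$. In a principal logical reduction---the $\rmu$/$\rnu$ case displayed, and likewise the $\rtensor$/$\rparr$, the unit, and (in the additive setting) the sliced $\roplus$/$\rwith$ cases---the two cut-connected premisses carrying the principal occurrences are replaced by their respective premisses, again children in $\pi$. In an axiom reduction the premiss that is an $\rax$-leaf is erased and its cut-partner is renamed: since an axiom is a leaf of $\pi$ and the renaming does not alter the position denoted, this merely removes a leaf from the frontier without introducing any node outside $\pi$. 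In every case the new premisses lie strictly above the old ones in $\pi$ (or denote the same node up to renaming), so being-a-node-of-$\pi$ and the antichain property are preserved, and the frontier moves monotonically toward the leaves.

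With the invariant in hand I would set the trace to be $\{\mathrm{root}(\pi)\}\cup\bigcup_t \Phi_t$ and conclude by a downward-closure argument. Every node of the trace is a node of $\pi$ by the invariant. Moreover the trace is closed under taking parents: a node enters some $\Phi_t$ either as the root or as a child produced when its parent premiss was reduced, and in the latter case that parent belongs to an earlier frontier, hence to the trace. A set of nodes of $\pi$ that contains the root and is closed under parents is precisely a connected, downward-closed prefix of the proof tree---that is, $\pi$ with some branches pruned and replaced by open leaves, the open leaves being exactly the trace nodes not all of whose $\pi$-children lie in the trace. This is the asserted statement.

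The step I expect to be delicate is not the tree combinatorics but the bookkeeping around the renaming $\iota$ and the axiom reductions, which are new here compared to \cite{BaeldeDS16}: one must check that a renamed premiss still denotes the same node of $\pi$, so that ``premiss of the multicut'' and ``node of $\pi$'' can be soundly identified modulo renaming, and that \emph{every} rule in the full reduction system---including the unit and sliced-additive principal cases and the $\rax$/$\rcut$ interaction---indeed has this frontier-climbing shape rather than creating fresh sequents or moving the frontier back down.
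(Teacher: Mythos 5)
Your proof is correct and follows essentially the same route as the paper, which simply asserts the proposition ``by analyzing the reduction rules'' and, in the appendix, notes only that premisses of a common binary inference enter the multicut together; your frontier invariant $\Phi_t$ and the case analysis over internal reductions is a faithful, more detailed elaboration of exactly that inspection. The one delicate point you correctly flag --- that the $\rax/\rmcut$ step and the $\iota$-renaming do not move the frontier outside $\pi$ --- is handled in the paper precisely by keeping the renaming inside the multicut's $\iota$ rather than substituting addresses in the subderivation.
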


An example of a trace is shown below: sequents not in the trace are grayed.
\vspace{-.3cm}$$
\scalebox{.9}{
\begin{prooftree}
\Hypo{\vdots}
\Infer{1}[\rmu]{\vdash \mu X. X_{\beta\Ai}}
\Infer{1}[\rmu]{\vdash \mu X. X_\beta}

\Hypo{\vdots}
\Infer{1}[\rnu]{\vdash \mu X. X_{\beta^\bot\Ai\Ai}, \mu X. X_\gamma}
\Infer{1}[\rnu]{\vdash \mu X. X_{\beta^\bot\Ai}, \mu X. X_\gamma}

\Hypo{\textcolor{gray}{\vdots}}
\Infer{1}[\rnu]{\textcolor{gray}{\vdash \nu X. X_{\gamma^\bot\Ai}, \bot_\alpha}}
\Infer{1}[\rnu]{\vdash \nu X. X_{\gamma^\bot}, \bot_\alpha}

\Infer{2}[\rcut]{\vdash \nu X. X_{\beta^\bot\Ai}, \bot_\alpha}
\Infer{1}[\rnu]{\vdash \nu X. X_{\beta^\bot}, \bot_\alpha}

\Infer{2}[\rmcut]{\vdash \bot_\alpha}
\end{prooftree}
}$$


Before going further let us see how we will use the trace to get a contradiction:
\begin{itemize}
\item  We will define an extension of the proof system \muMLLi, and show that it is sound with respect to a boolean semantics. 
\item  Then we will show that the trace can be seen as a proof of a false sequent in this extended proof system. 
\item  This contradicts soundness and concludes the proof.
\end{itemize}

\savespace\subsection{The trace is almost a \muMLLi proof} As said above, we will need to see the trace as a genuine proof. In fact, it is almost a \muMLLi proof since it is a subtree of the original proof $\pi$, but is not completely a proof for the following reasons:
\begin{itemize}
\item The trace may have unjustified sequents:
this happens when a sequent $S$ enters the multicut during the reduction
sequence but never gets reduced. It will then be part of the trace but
the subtree of $\pi$ rooted in $S$ will not. This is for instance the case of the sequent $\vdash \nu X. X _{\gamma^\bot\Ai},  \bot_\alpha$ in the example above.
\item There is another reason why the trace might not
be a proof: its infinite branches may not be valid. The infinite 
branches of the trace are also infinite branches of the proof $\pi$, thus they 
are supported by valid bouncing threads of $\pi$. However, since the threads are bouncing, they might leave the branch and thus not be included in the trace.
\end{itemize}
We will show later how to handle the first problem of unjustified sequents. As for the second problem, we show that this actually never happens:
\restatableproposition{prop:ThreadsInTrace}{
Let $T$ be the trace of a reduction sequence starting from a proof $\pi$,
and let $\beta$ be an infinite branch of $T$. If $t$ is a bouncing thread of $\pi$ 
validating $\beta$, then $t$ is also a bouncing thread of $T$. 
}
This is one of the difficulties specific to the bouncing threads. This result is trivial with straight threads \cite{BaeldeDS16}, since threads belong to the branch they support.

\savespace\subsection{Truncated proof system}
To see the trace as a proof, we need to overcome the problem of unjustified sequents. For that, we will embed the trace in a proof system extending \muMLLi, called the \defname{truncated proof system}.

This proof system is parameterized by a partial function $\tau:\Addr \to \{\top, \llzero\}$ (from addresses to the formulas $\top, \llzero $) called a \defname{truncation}. To get a sound proof system, we impose a coherence condition on truncations: they should assign dual values to dual addresses. 
The rules of the truncated proof system are the same as those of $\muMLLi$, with an extra rule
which allows to replace an occurrence by its image in $\tau$. Pre-proofs and the validity condition are defined in the same way as \muMLLi.
The advantage of the truncated proof system is that it allows to close sequents easily:  if the address of an occurrence of the sequent is mapped to $\top$  by $\tau$, we can justify the sequent by a $\top$ rule.

The boolean semantics can be extended in the presence of truncations in a natural way: the occurrences whose addresses are in the domain of the truncation obtain as a boolean value their image by $\tau$. The rest of the boolean values are propagated through the connectives in the usual way.

 We show that the truncated proof system is sound for this semantics. 
Note that \muMLLi can be seen as a truncated proof system, where the truncation has empty domain. In this case the truncated boolean semantics coincides with the classical boolean semantics. Hence \muMLLi is sound for the boolean semantics.
\begin{theorem}
The proof system \muMLLi is sound for the boolean semantics. 
\end{theorem}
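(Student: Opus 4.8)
The plan is to prove the contrapositive by a semantic descent argument driven by the validity condition, following the lines of the straight-thread soundness proof of~\cite{BaeldeDS16} but taking the bouncing geometry into account. Fix a boolean model $M$ and call a sequent \emph{false} when all of its occurrences evaluate to $0$; soundness then amounts to showing that no valid $\muMLLi$ proof has a false conclusion. First I would check the elementary \emph{backward local soundness} of each rule: if the conclusion of a rule instance is false, then at least one premiss is false. This is immediate from the boolean reading of the connectives and the fixed-point equations $\interp{\mu X.F}=\interp{F[\mu X.F/X]}$ and $\interp{\nu X.F}=\interp{F[\nu X.F/X]}$; for $\rtensor$ and $\rcut$ the boolean conjunction resp.\ the duality $\interp{F}\vee\interp{F^\perp}=1$ forces one of the two premisses to be false, whereas the premiss-free rules $\rax$ and $\rone$ always have \emph{true} conclusions and hence never justify a false sequent.

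Assuming $\pi$ is a proof of a false sequent, I would then build an infinite branch $\beta$ all of whose sequents are false: starting from the false root and repeatedly selecting a false premiss is possible at each step by backward local soundness, and the descent can never halt, since the only premiss-free rules ($\rax$, $\rone$) produce true sequents and a proof has no open leaves. Applying the validity condition to $\beta$ yields a valid thread $t$ whose visible part $\visP{t}$ is contained in $\beta$ and whose minimal formula $H=\nu X.\psi$ is a $\nu$-formula. Since every sequent of $\beta$ is false and $\visP{t}\subseteq\beta$, every occurrence traced by $t$ on its visible part is false in $M$. Across each hidden part in $\hiddenP{t}$ the thread bounces on axioms and cuts, but the two endpoints of an $h$-path are structurally equivalent, and the boolean value of an occurrence depends only on its underlying formula (addresses being semantically irrelevant), so the traced formula and its truth value are preserved across hidden parts. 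Concatenating, I obtain an infinite sequence of \emph{false} formulas in which consecutive ones are related by an immediate-subformula step, a fixed-point unfolding, or a structural-equivalence jump, $H$ occurs infinitely often, and $H$ is the $\sqsubseteq$-minimal (dominating) fixed-point formula of the trace.

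The core of the argument, and the step I expect to be the main obstacle, is to derive a contradiction from this false $\nu$-trace through a well-founded ordinal measure. I would use the decreasing greatest-fixed-point approximants, in $M$, of the finitely many (by finiteness of $\fl{\Gamma}$) $\nu$-subformulas, assigning to each false occurrence of $H$ the least ordinal $\alpha$ such that its $\alpha$-th approximant already evaluates to $0$; this quantity depends only on the underlying formula and $M$, hence is preserved across hidden parts. Unfolding $H$ replaces $\nu X.\psi$ by $\psi[\nu X.\psi/X]$ and, via $\nu^{\beta+1}X.\psi=\psi[\nu^\beta X.\psi/X]$, strictly lowers the rank of the tracked copy of $H$. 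Because $H$ is minimal, every infinitely recurring formula has $H$ as a subformula and the higher-priority coordinate of a Niwi\'nski--Walukiewicz-style signature is exactly that of $H$, so between two successive unfoldings of $H$ the signature does not increase while at each of the infinitely many $H$-unfoldings it strictly decreases — yielding an infinite strictly descending sequence of ordinals, the desired contradiction. The two delicate points are (i) that bouncing through hidden parts neither creates nor destroys progress of the measure, which is guaranteed by the structural equivalence of $h$-path endpoints, and (ii) the correct interleaving of the approximants of $H$ with those of the $\mu$- and $\nu$-formulas traversed between consecutive unfoldings of $H$, which is precisely where minimality of $H$ is used.

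Finally, the stated theorem is the instance of the truncated soundness result with empty truncation: the only extra rule in the truncated system, which replaces an occurrence by its $\tau$-image, is handled by the coherence condition requiring $\tau$ to assign dual values to dual addresses, and it leaves the descent argument above unchanged.
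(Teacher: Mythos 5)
Your proposal is correct and follows essentially the same route as the paper: the paper obtains this theorem as the empty-truncation instance of the soundness of \muMLLit, whose proof (deferred to the style of~\cite{BaeldeDS16}) is exactly your contrapositive argument — descend along a branch of false sequents, invoke validity to obtain a thread whose visible part lies in that branch, transfer the marking across hidden parts using the structural equivalence of $h$-path endpoints, and contradict well-foundedness via decreasing $\nu$-approximant ordinals at the minimal formula. The two points you flag as delicate (transfer through $\epsilon$/$h$-paths and the signature bookkeeping) are precisely the ones the paper identifies as the new ingredients over the straight-thread case.
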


\savespace\subsection{Trace as a truncated proof} Let us see how to transform the trace into a proof in a truncated proof system. For this , we need to find a truncation $\tau$ that can allow us to close every unjustified sequent. In other words, we need to find a strategy for selecting an occurrence in each unjustified sequent, to which we will assign $\top$ by the truncation $\tau$. This strategy should define a coherent truncation in the sense that it should not assign $\top$ to two dual occurrences. 

In \cite{BaeldeDS16}, we have given such a strategy: we select the occurrence of the unjustified sequent which is principal in the proof $\pi$. The presence of axioms complicates the situation:  If $F$ is the occurrence that has been selected in an unjustified sequent, then its dual might appear in an axiom rule $\vdash F^\perp, G$.
 By coherence of the truncation, the address of $F^\perp$ must have been assigned $\llzero$ and the axiom rule cannot be soundly applied anymore. To justify the sequent $\vdash F^\perp, G$, we need to assign $\top$ to the address of $G$. Since the same can happen on the $G$ side, we need to show that it remains possible to define $\tau$ in a coherent way.

To get our desired contradiction, we need in addition for $\tau$ to assign $\llzero$ to the conclusion, thereby obtaining a proof of a false sequent. This needs to be done while still respecting the aforementioned constraints induced by axioms. 
\todoAS{préciser l'explication du paragraphe précédent: comment assigner 0.}

\savespace\subsection{Summary} To sum up, we have found a truncation $\tau$ i) which assigns $\llzero$ to the conclusion formula ii) for which the trace can be seen as a proof in the corresponding truncated proof system. Since the proof system is sound, we get a contradiction and this concludes the proof of productivity. 

\Section{Decidability properties of \muMLLo}\label{sec:decid-body}

\subsection{An operational approach to threads}\label{sec:stack}

In this section, we will explicate how threads can be recognized by a specific deterministic pushdown automaton reading only the weight of a pre-thread.
This will allow us to define the \emph{height} of a thread and the notion of \emph{constraint stack}.

Let $\Athread$ be the deterministic pushdown automaton described in Fig. \ref{fig:Athread}, on alphabet $\Sigma=\{\Al,\Ar,\Ai, \bar{\Al},\bar{\Ar},\bar{\Ai},\Aax,\Ccut, \Wwait\}$ and stack alphabet $\Gamma=\{\Al,\Ar,\Ai,\bot\}$ where $\bot$ is the empty stack symbol. 
The transitions are labelled ``$(a,\gamma) \mid \tau$'', where $a\in\Sigma$ is the input letter, $\gamma\in\Gamma$ is the topmost stack symbol, and $\tau$ is the action performed on the stack (no action if $\tau$ is not specified). If no stack symbol is specified, the stack is left unchanged. Symbol $x$ stands for an element in $\{\Al,\Ar,\Ai\}$. No acceptance condition is specified, meaning that any run is accepting. Only the absence of an available transition can cause the automaton to reject its input, for instance reading $\Al$ with topmost stack symbol $\Ar$ in state $\uparrow$.
The transition marked with a double arrow corresponds to the visible part of the thread.

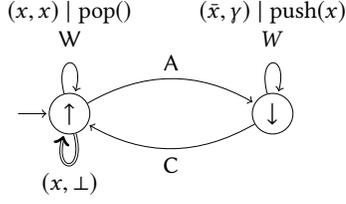
\begin{figure}[t]
\centering
\scalebox{.9}{
\begin{tikzpicture}[shorten >=1pt,node distance=3cm,on grid,auto,initial text=,
every state/.style={inner sep=0pt,minimum size=6mm}, every loop/.style={min distance=.5cm, looseness=10}]

\node[state,initial] (up) {$\uparrow$};
\node[state, right=of up] (down) {$\downarrow$};
\path[->]
	(up) edge[loop below,double] node{$(x,\bot)$} ()
	edge[loop above] node{$\begin{array}{c}(x,x)\mid\pop()\\\Wwait\end{array}$} ()
	edge[bend left] node {$\Aax$} (down)
	(down) 	edge[loop above] node{$\begin{array}{c}(\bar{x},\gamma)\mid\push(x)\\W\end{array}$} ()
	edge[bend left] node {$\Ccut$} (up)
	;
\end{tikzpicture}
}
\caption{The deterministic pushdown automaton $\Athread$}
\label{fig:Athread}
\end{figure}

\begin{lemma}
Let $t$ be a pre-thread. Then $t$ is a thread if and only if $\weight{t}$ is accepted by $\Athread$.
\end{lemma}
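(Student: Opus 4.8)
The plan is to establish the two directions of the equivalence by relating the structure of the automaton $\Athread$ to the inductive definitions of $\C{B}$ and $\C{H}$ from \cref{def:thread}, and to the decomposition of a thread into visible and hidden parts. First I would observe that the stack of $\Athread$ is precisely a bracket-matching device: reading an opening bracket $x\in\{\Al,\Ar,\Ai\}$ in state $\uparrow$ either leaves the stack untouched (the double-arrow transition, available only when the stack is empty, corresponding to a visible step) or — in state $\downarrow$ — a closing bracket $\bar{x}$ pushes $x$, while in state $\uparrow$ reading $x$ with matching top-of-stack $x$ pops it. The letters $\Aax$ and $\Ccut$ toggle the state between $\uparrow$ and $\downarrow$, exactly mirroring the U-shapes around axioms and cuts. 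Thus the key invariant to isolate is: a finite factor of $\weight{t}$ is accepted by a run from state $\downarrow$ back to state $\uparrow$ that starts and ends with empty stack if and only if that factor lies in $\C{B}$, and a factor accepted from $\uparrow$ to $\uparrow$ with empty stack starting with $\Aax$ lies in $\C{H}$.

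For the forward direction, I would take a thread $t$ with its canonical decomposition $\odot_{i\in 1+\lambda}(H_i\odot V_i)$, where $\weight{V_i}\in\{\Al,\Ar,\Ai,\Wwait\}^\infty$ and $\weight{H_i}\in\C{H}$. I would feed $\weight{t}$ to $\Athread$ and show by induction on the structure of $\C{H}$ (and the nested $\C{B}$) that each $H_i$ drives the automaton from state $\uparrow$ with empty stack back to state $\uparrow$ with empty stack, and that each $V_i$, consisting only of visible letters and waits, is read entirely via the double-arrow self-loop (legal precisely because the stack is empty throughout a visible part). The base cases $\C{B}:=\Ccut$ and $\C{H}:=\epsilon$ are immediate, and the inductive clauses $\C{B}\Wwait^*\Aax\Wwait^*\C{B}$ and $\bar{x}\Wwait^*\C{B}\Wwait^*x$ correspond to gluing two accepting $\downarrow\!\to\!\uparrow$ segments with an $\Aax$ in between, and to wrapping a $\C{B}$-segment in a matching push/pop pair; both preserve the empty-stack-at-endpoints invariant. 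Hence the whole run exists and, since there is no acceptance condition, $\weight{t}$ is accepted.

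For the converse, I would assume $\weight{t}$ is accepted, i.e.\ $\Athread$ never blocks, and reconstruct the decomposition. I would segment the run at the maximal stretches spent in state $\uparrow$ with empty stack that use only the double-arrow loop and $\Wwait$; these yield the $V_i$. Each remaining stretch begins with an $\Aax$ (the only way to leave $\uparrow$) and must return to $\uparrow$ with empty stack before the next visible stretch can begin; I would argue by induction on stack height that such a stretch has weight in $\C{H}$, reading off the $\C{B}$-structure from the balanced push/pop pattern and the $\Ccut$/$\Aax$ toggles. The nesting clause $\bar{x}\Wwait^*\C{B}\Wwait^*x$ is recovered from a push of $x$ (in $\downarrow$) eventually matched by the corresponding pop (in $\uparrow$), and the concatenation clause from consecutive empty-stack returns joined by $\Aax$. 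This reconstructs exactly the shape required by \cref{def:thread}, so $t$ is a thread.

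The main obstacle I expect is the converse direction, specifically matching the bracket-nesting performed by the stack to the rather rigid inductive grammar of $\C{B}$ and $\C{H}$: one must check that acceptance forbids spurious configurations such as entering a visible (double-arrow) step while the stack is nonempty, or closing a bracket that does not match the top of stack, and that the interleaving of $\Wwait$ letters (which appear at arbitrary points in the grammar but are absorbed by loops in the automaton) does not break the correspondence. Making the empty-stack-at-segment-boundaries invariant precise, and showing it is both necessary for acceptance and exactly what distinguishes visible from hidden portions, is the delicate bookkeeping at the heart of the proof.
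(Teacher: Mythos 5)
Your proposal is correct and follows essentially the same route as the paper, which simply asserts in one line that the stack constraints of $\Athread$ match the grammar of Definition~\ref{def:thread}; you have expanded exactly that correspondence (pushes/pops as brackets, $\Aax$/$\Ccut$ as state toggles, the empty-stack double-arrow loop as the visible part). The only refinement a full write-up would need is to strengthen your induction invariant from ``empty stack at the endpoints of a $\C{B}$-factor'' to ``stack-neutral and never dipping below the initial stack level,'' since the inner $\C{B}$ in the clause $\bar{x}\Wwait^*\C{B}\Wwait^*x$ runs over a nonempty stack.
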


\begin{proof}
The constraints on the stack match the grammar from Def. \ref{def:thread}. 
\end{proof}

The stack of $\Athread$ will be referred to as the \emph{constraint stack}.

\savespace\subsection{Undecidability of bouncing validity}
\label{sec:undecidability}

In this section, we sketch why the validity condition is already undecidable for
 \muMLLo. This will motivate the following section introducing decidable subcriteria constituting a hierarchy of criteria while exact definitions, encodings and proof of undecidability are postponed to  Section \ref{appsubsec:undecproofs} for readability.
%

To show undecidability, we reduce from the halting problem for Minsky Machines, i.e. two-counter machines (2CM) able to perform increment, decrement, and zero test on the counters. The halting problem for 2CM is known to be $\Sigma^0_1$-complete \cite{Minsky61}.
The proof is only sketched here, and some technicalities have been abstracted away for clarity purposes. See Section \ref{appsubsec:undecproofs} for exact definitions and encodings.

We encode the halting problem of a 2CM $M$ using a bouncing thread. The thread of interest will always follow a formula $F=\nu X.(X\parr X)$ when going upwards, and its dual $G=\mu X.(X\tensor X)$ when going downwards. The idea is to use the constraint stack to encode the value of counters, and the position in the graph to encode the control state of the machine. The general shape of the main preproof $P$ performing the desired reduction is represented Fig. \ref{fig:mainproofbody}. Boldface formulas are those introduced in cuts, and grayed formulas are the ones that are not part of the thread of interest. We will also ignore addresses in this proof sketch, except those relevant to our encoding.

\begin{figure}[t]
 $$\scalebox{.9}{   \begin{prooftree}
	\Infer{0}[\rax]{\vdash \gG, \gray{F}}
	
        \Hypo{
          \orig{a}\vdash \gG, \tikzmark{d12}F}
      \Infer{2}[\mvw]{\vdash \gG, F\parr\tikzmark{d11} F}
      \Infer{1}[\mkrule{\textcolor{red}{\nu}}]{\vdash \gG,\tikzmark{d10}\mathbf{F}}
     
      \Hypo{\pi_R\tikzmark{d7}}
      \Infer{1}{\vdash \tikzmark{d8}G,\mathbf{F}\tikzmark{d6}}      
     
	\Hypo{\pi_M\tikzmark{d4}}
	\Infer{1}{\vdash \tikzmark{d5}\mathbf{G},\tikzmark{d3}F}
 \Infer{2}[\rcut]{\vdash \tikzmark{d9}\mathbf{G},\tikzmark{d2}F}
 \Infer{2}[\rcut]{
    \begin{tikzpicture}[remember picture,overlay]
    \node[inner sep=0pt,outer sep=0pt] (b)  {\ensuremath{}};
    \draw [->,>=latex] ($(pic cs:d12)+(-.1cm,.3cm)$) .. controls +(150:4.2cm) and +(185:5.6cm) .. (b.west);
    \end{tikzpicture}\;
    \vdash \gG,\tikzmark{d1}F}
    \end{prooftree}
    \begin{tikzpicture}[overlay,remember picture,-,line cap=round,line width=0.1cm]
   \draw[rounded corners, smooth=2,cyan, opacity=.25] ($(pic cs:d1)+(.1cm,.1cm)$)to ($(pic cs:d2)+(-.2cm,.-.4cm)$)  to ($(pic cs:d2)+(.2cm,.2cm)$)  to($(pic cs:d3)+(.2cm,.1cm)$) to($(pic cs:d4)+(0cm,.2cm)$)to($(pic cs:d4)+(-.4cm,.2cm)$) to($(pic cs:d5)+(.2cm,.1cm)$)  to($(pic cs:d6)+(-.1cm,.1cm)$) to($(pic cs:d7)+(0,.2cm)$) to($(pic cs:d7)+(-.4cm,.2cm)$) to($(pic cs:d8)+(0,.2cm)$)  to($(pic cs:d9)+(.2cm,.1cm)$)  to($(pic cs:d10)+(0,.1cm)$) to($(pic cs:d11)+(-.1cm,0cm)$) to($(pic cs:d12)+(.1cm,.2cm)$);    
   \end{tikzpicture}
     }$$
 \caption{A sketch of the main preproof $P$}
 \label{fig:mainproofbody}
\end{figure}
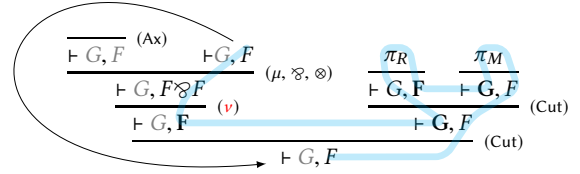

We build $P$ so that the only branch which is not clearly validated is the one going infinitely many times through the loop. A thread validating this branch (in blue in Fig. \ref{fig:mainproofbody}) must go through the two cuts, and bounce on axioms in $\pi_M$ and $\pi_R$.
The trajectory of this thread in $\pi_M$  will simulate the run of $M$, and it will be allowed to exit $\pi_M$ if and only if $M$ terminates.

We now give an example of one of the simplest gadgets used to perform this simulation: the increment gadget on the first counter.
Consider a state $p$ of the machine $M$, whose action is to increment the first counter and go to state $q$. Assume counter values $(n,m)$ are encoded by a constraint stack $\lc^n\rc\lc^m\rc$, where $\lc$ (resp. $\rc$) stands for a left (resp. right) constraint on the unfolding of $F$, i.e. a relative address $\Ai\Al$ (resp. $\Ai\Ar$). This means that to increment the first counter, we need to add a left constraint at the top of the stack.
This can be performed by the following gadget, where nodes labeled $(p)$ and $(q)$ encode the current control state:

\vspace{-.3cm}

$$\scalebox{.9}{
\begin{prooftree}
   \Hypo{(q)\vdash G,\mathbf{F},\gA}
   
   \Infer{0}[\rax]{\vdash G_\lc,F}
    	\Infer{0}[\rinf]{\vdash \gray{G_\rc,A}}
	\Infer{2}[\we]{\vdash G_\lc\tensor G_\rc,F,\gA}
	\Infer{1}[\mm]{\vdash \mathbf{G},F,\gA}
	
    \Infer{2}[\rAcut]{(p)\vdash G,F,\gA}
    \end{prooftree}}
$$

Here $A$ is an auxiliary formula $\nu X.(X\parr X)\tensor X$, that can be duplicated as required and used to build axiomless valid proofs, denoted by an $\infty$ metarule. The rule (Acut) denotes a cut combined with a duplication of $A$.
A thread entering node $(p)$ upwards with constraint stack $\lc^n\rc\lc^m\rc$ will enter node $(q)$ with constraint stack $\lc^{n+1}\rc\lc^m\rc$.
\smallskip

In order to fully simulate the run of $M$, we also need to design gadgets simulating increment on the second counter, as well as decrement and zero test on both counters.
The main difficulty lies in the tests performed by the machine: we want the thread to follow a conditional branching, depending on the value of the constraint stack. This can be done, but because of the linearity of the proof system, we cannot avoid leaving some extra constraints encoding the results of the tests. These ``garbage constraints'' will be collected by the thread on its path downwards in $\pi_M$, after the simulation of the machine is completed.
Since we want to finish with empty constraint, we need to erase these garbage constraints.
To do this, we add a second gadget $\pi_R$ performing the computation in a dual way: garbage constraints are fed to the thread, which rewinds the computation while erasing these unwanted constraints. All gadgets in $\pi_R$ are dual versions of those in $\pi_M$. This technique is reminiscent of the one used by Bennett \cite{Bennett73} to prove Turing-completeness of reversible Turing machines, where a history of the computation is produced to guarantee reversibility, then this history is erased by rewinding the computation.

We can finally exit this detour with no constraint, and perform a visible $\nu$-unfolding on the main branch (in red in Fig. \ref{fig:mainproofbody}), before looping back to the root of the proof.

The global pre-proof $P$ will be a valid proof according to the criterion if and only if the machine $M$ halts.

Notice that among the simplifications we made here for clarity of exposition, the auxiliary formula $A$ needed in some gadgets has been removed from the main preproof $P$.

\savespace\subsection{A hierarchy of decidable validity conditions}\label{sec:Decidability}

In order to recover a decidable criterion, we will consider restrictions on the constraint stack of valid threads.

\begin{definition}
If $t$ is a thread, we define its \emph{height} $h(t)\in\omega+1$ to be the supremum of the size of the stack of $\Athread$ along its run on $\weight{t}$.
\end{definition}

\begin{definition}
Let $k\in\N$. An infinite branch is \emph{$k$-valid} if there is a thread of height at most $k$ validating it. A proof $P$ is a \emph{$k$-proof} if every infinite branch of $P$ is $k$-valid.
\end{definition}


The two following theorems show that the height parameter $k$ induces a hierarchy of decidable criteria, whose union matches the full validity criterion.

\begin{theorem}\label{thm:kproof}
If $P$ is a valid circular proof of \muMLLo, there exists $k\in\N$ such that $P$ is a $k$-proof.
\end{theorem}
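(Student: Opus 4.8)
The plan is to exhibit a single bound $N$, depending only on the finite structure of the circular proof $P$, such that every infinite branch already admits a validating thread of height at most $N$. The starting point is a simple analysis of when the constraint stack of $\Athread$ can be nonempty. Reading the transitions of Fig.~\ref{fig:Athread}, the stack is pushed only in state $\downarrow$, popped only in state $\uparrow$, and the visible transition (the double arrow) fires only with an empty stack. Moreover every $b$-path has weight in $\C B$, and an immediate induction on the grammar of $\C B$ shows that a $b$-path has net-zero effect on the stack. Hence, starting from the empty stack, the stack returns to empty after each hidden part and is therefore empty throughout every visible part $V_i$. Writing $t = \odot_i (H_i\odot V_i)$, this yields $h(t)=\sup_i h(H_i)$; and since each nonempty $H_i$ begins with an $\Aax$ (which does not touch the stack) followed by a $b$-path, $h(H_i)$ equals the maximal stack height reached along that $b$-path. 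It therefore suffices to bound, uniformly and keeping the endpoints fixed, the height of the $b$-paths occurring as hidden parts: the visible parts, and hence the minimal formula and the inclusion $\visP t\subseteq\beta$, are left untouched, so the modified thread still validates the same branch.

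The core is a contraction lemma for $b$-paths. A $b$-path is generated by $\C B$, so it is naturally presented as a finite derivation tree whose nodes are the productions $\Ccut$ (leaf), $\C B\,\Wwait^*\Aax\Wwait^*\C B$ (binary), and $\bar x\,\Wwait^*\C B\,\Wwait^* x$ (unary), the last being exactly the productions that increase the stack. The maximal stack height reached by the corresponding run equals the maximal number of unary nodes on a root-to-leaf path of this tree. Each node is witnessed by a concrete feature of $P$ (a cut for $\Ccut$, an axiom for $\Aax$, a descent through a connective for $\bar x\ldots x$) together with the pair of occurrences that the enclosed $b$-path must connect. Since $P$ is circular it has finitely many distinct occurrences up to renaming of addresses, hence finitely many such connection types; let $N$ be their number. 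If some root-to-leaf path carries more than $N$ unary nodes, two of them share the same type, and I would graft the inner (smaller) sub-derivation in place of the outer one. Regularity of $P$ guarantees that this graft, performed on the concrete infinite tree, produces a genuine $b$-path of $P$ with the very same endpoints, because the two occurrences are renamings of one another and the relative addresses recorded on the constraint stack match accordingly. Iterating, any connectable pair of endpoints is joined by a $b$-path of height at most $N$.

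With the lemma in hand the theorem follows. Let $\beta$ be an infinite branch of $P$; since $P$ is valid there is a validating thread $t$ with $\visP t\subseteq\beta$, which is necessarily non-stationary. Decompose $t=\odot_i(H_i\odot V_i)$ and, in each nonempty hidden part $H_i=\Aax\Wwait^* B_i$, replace the $b$-path $B_i$ by a height-$\le N$ $b$-path $B_i'$ with the same endpoints, provided by the contraction lemma. The sequence $t'=\odot_i(H_i'\odot V_i)$ is again a thread (its decomposition into $h$-paths $H_i'$ and the unchanged visible parts $V_i$ is the one required by Def.~\ref{def:thread}); it validates $\beta$ because its visible part, and hence its minimal formula, which is a $\nu$-formula, is unchanged; and by the first paragraph $h(t')=\sup_i h(H_i')\le N$. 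Thus $\beta$ is $N$-valid. As $N$ depends only on $P$ and not on $\beta$, every infinite branch of $P$ is $N$-valid, so $P$ is an $N$-proof and we may take $k=N$.

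The main obstacle is precisely the contraction lemma, and within it the bookkeeping of addresses: one must verify that excising a repeated connection type in the abstract derivation tree corresponds to an actual pre-thread of the concrete proof, \ie that the constraint stack, which records the relative addresses in $\{\Al,\Ar,\Ai\}$ traversed along the descent, stays consistent after the graft. This is where circularity is used essentially, through the identification of sub-occurrences up to renaming. The remaining checks are routine: that $B_i'$ still has weight in $\C B$, that the reglued word still lies in $\C H$, and that $t'$ meets the decomposition of Def.~\ref{def:thread}.
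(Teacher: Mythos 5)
Your opening reduction is sound: the stack of $\Athread$ is indeed empty throughout the visible parts, so $h(t)=\sup_i h(H_i)$ and everything rests on bounding the $b$-paths occurring in hidden parts. The gap is the contraction lemma itself. A pre-thread whose weight is to stay in (a prefix of) $\C{B}$ evolves essentially \emph{deterministically} from its starting pointed occurrence: descents are forced (each sequent has a unique conclusion below it, and each occurrence a unique super-occurrence there), cut-bounces and axiom-bounces are forced, and ascents are dictated by the constraint stack; the only freedom is whether to \emph{stop} at a point where the stack has returned to empty. Hence the $b$-paths issued from a given start are linearly ordered by the prefix relation, and between a fixed pair of endpoints there is at most one $b$-path. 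So the lemma cannot be realized in the form you need: if the unique $b$-path joining the endpoints of some $H_i$ has height exceeding $N$, there is no shorter one with those endpoints, and a graft that claims to produce one must be producing something that is not a pre-thread of $P$. Concretely the surgery breaks at the transport step: two pointed sequents over the same node of the finite graph have isomorphic subtrees \emph{above} them, but their downward contexts (the branch to the root, which records which back-edges were unfolded) differ; since the excised sub-$b$-path begins with a descent, its ``renamed'' copy rooted at the outer occurrence need not be a path of $P$, and nothing forces it to land on the outer end-occurrence. Regularity gives you uniformity of futures, not of pasts, and $b$-paths live in the past.

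The statement you actually need is not that $b$-paths can be shortened but that they are already uniformly short, and the determinism above is precisely what yields it — this is the paper's route. Each hidden part decomposes at its empty-stack points into \emph{minimal shortcuts}: the unique deterministic continuation of the thread from a pointed occurrence, through one axiom bounce and onward until the constraint stack first re-empties. A minimal shortcut, and in particular its height, is a function of its starting pointed sequent, which in a circular proof ranges over a finite set up to renaming; taking $k$ to be the maximum of these finitely many heights, \emph{every} thread of $P$ (not just a surgically improved one) has height at most $k$, and Lemma~\ref{lem:s-thread} packages this as the equivalence between validation by threads and by ``s-threads'' that jump over shortcuts. Your plan is repaired by replacing the pumping argument with this finiteness-by-determinism argument; the address bookkeeping you rightly single out as the main obstacle then disappears, because nothing is grafted.
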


\begin{theorem}\label{thm:deckproof}
Given a circular pre-proof $P$ of \muMLLo and an integer $k$, it is decidable whether $P$ is a $k$-proof.
\end{theorem}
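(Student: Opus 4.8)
The plan is to reduce the statement to an $\omega$-regular containment problem over the finite graph underlying the circular pre-proof, exploiting the fact that capping the height collapses the pushdown automaton $\Athread$ to a finite-state device. First I would fix a finite presentation of $P$ as a finite graph $G$ whose infinite paths from the root are exactly the infinite branches of $P$; this exists because $P$ is circular, i.e.\ has finitely many distinct subtrees up to renaming of addresses. In this presentation each sequent carries finitely many occurrences, so the occurrence currently traced is a piece of finite information, and the set $\mathrm{Br}(P)$ of infinite branches is recognised by a finite safety automaton over the alphabet of rule-labelled nodes of $G$. The key observation is that a thread of height at most $k$ only ever uses a constraint stack of size $\le k$: by the Lemma characterising threads as the accepting runs of $\Athread$, the restriction of $\Athread$ to runs whose stack never exceeds $k$ is equivalent to a \emph{finite} automaton $\A_{\le k}$, whose states are pairs of a control state in $\{\uparrow,\downarrow\}$ together with a stack word of length $\le k$ over $\{\Al,\Ar,\Ai\}$, of which there are finitely many. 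Moreover, inspecting $\Athread$, the upward ``visible'' transitions are exactly those fired with empty stack, so the stack is empty at every boundary between a visible part $V_i$ and a hidden part $H_i$; hence $h(t)\le k$ holds iff each hidden detour of $t$ keeps the stack within height $k$, which makes the height bound a purely local constraint on hidden parts.

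Next I would precompute the hidden-detour relation. Taking the product of $G$ with $\A_{\le k}$ and restricting to weights in $\C{H}$ (detours that start and end going upwards, bounce on axioms and cuts, and keep the stack $\le k$), bounded-stack reachability in this finite product yields a computable relation $R_k$ between pairs of (node, traced occurrence) at which a visible part may resume after a height-$\le k$ hidden detour; crucially, these detours may leave the branch but are confined to $G$, so $R_k$ is obtained by ordinary finite-graph reachability. A thread of height $\le k$ validating a branch $\beta$ then decomposes into an infinite alternation of visible segments going upward \emph{along} $\beta$, glued by detours drawn from $R_k$, subject to the validity condition that the minimal formula traced infinitely often on the visible parts is a $\nu$-formula. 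As in the straight-thread setting of \cite{BaeldeDS16}, this last condition is an $\omega$-regular (parity) condition on the sequence of traced occurrences, now read over the enriched thread-tracking alphabet.

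Assembling these ingredients, I would build a nondeterministic \buchi automaton $\mathcal{V}_k$ over $\mathrm{Br}(P)$ that reads a branch from the root upwards, nondeterministically guesses the starting sequent of the validating thread and its successive visible segments, checks each intervening detour against $R_k$, and accepts iff the traced minimal formula is a $\nu$-formula infinitely often; thus the language of $\mathcal{V}_k$ is exactly the set of $k$-valid branches of $P$. Since $P$ is a $k$-proof iff every infinite branch is $k$-valid, deciding the property of \cref{thm:deckproof} amounts to deciding the containment $\mathrm{Br}(P)\subseteq L(\mathcal{V}_k)$ of $\omega$-regular languages, which is effective: complement $\mathcal{V}_k$, take its product with the safety automaton for $\mathrm{Br}(P)$, and test emptiness.

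The main obstacle is the bouncing behaviour. Unlike straight threads, a validating thread leaves the branch it validates, so the existential quantifier over threads cannot be discharged locally along $\beta$, and the $\forall\beta\,\exists t$ alternation a priori lies beyond finite-state reasoning. The substantive work is therefore to prove that, once the height is capped at $k$, all off-branch information collapses to the single finite relation $R_k$, and that the global assembly of visible segments and height-bounded detours is genuinely $\omega$-regular; given this, the alternation is resolved by the single projection realised by the nondeterminism of $\mathcal{V}_k$, followed by the effective $\omega$-regular containment test.
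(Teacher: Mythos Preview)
Your proposal is correct and follows essentially the same architecture as the paper: collapse $\Athread$ to a finite automaton by capping the stack at $k$, precompute the off-branch detours over the finite proof graph, and then reduce to an $\omega$-regular universality check on branches using a thread-guessing automaton. One point of difference worth noting: where you compute a binary \emph{relation} $R_k$ between resume points via bounded-stack reachability, the paper observes that once a hidden detour begins (bouncing on an axiom), its trajectory is \emph{deterministic} until the stack empties again, so there is a unique \emph{minimal shortcut} from each pointed sequent and the detour data is a partial function $\effect_k$ rather than a relation. This determinism is not needed for decidability, but it streamlines the construction (the automaton simply follows the stored relative address rather than guessing among several possible detours) and is also what underpins the companion Theorem~\ref{thm:kproof}. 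The paper uses a parity automaton where you speak of B\"uchi; since the ``minimal recurring formula is a $\nu$'' condition is naturally a parity condition, your phrasing ``the traced minimal formula is a $\nu$-formula infinitely often'' is slightly loose, but the intended reduction is the same.
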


We now give a brief proof sketch to give an intuition on how to prove Theorems \ref{thm:kproof} and \ref{thm:deckproof}. See Appendix for details.

\begin{proof}(Sketch)
We will use the fact that once a starting point for a thread has been chosen, the thread evolves deterministically along the proof tree until a visible event occur. We define the notion of \emph{minimal shortcut} which is a part of a thread with no visible weight, bouncing on an axiom, and ending in the first point where the constraint stack is empty. It corresponds to an $\epsilon$-path.

By bounding the maximal height of the stack by $k$, we can detect loops or declare stack overflow, and we are able to compute the unique minimal shortcut (if it exists) for each starting point in the finite proof graph.
Now, checking validity of the proof can be done using an algorithm for straight threads \cite{DoumanePhD}, allowing them to take these shortcuts.

Theorem \ref{thm:kproof} is obtained by taking the maximal height reached by all minimal shortcuts of the proof graph.
\end{proof}

Combining Theorems \ref{thm:kproof} and \ref{thm:deckproof}, we obtain that validity of a circular pre-proof of \muMLLo is in $\Sigma^0_1$, i.e. recursively enumerable. Together with the reduction from Sec. \ref{sec:undecidability}, we obtain the following corollary:

\begin{corollary}\label{cor:sigma01}
The problem of deciding whether a circular pre-proof of \muMLLo is a proof is $\Sigma^0_1$-complete.
\end{corollary}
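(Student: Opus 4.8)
The plan is to establish the two halves of $\Sigma^0_1$-completeness independently: that the validity problem is recursively enumerable, and that it is $\Sigma^0_1$-hard. Both directions repackage results already available in the paper, so the corollary is essentially a combination of Theorems~\ref{thm:kproof} and~\ref{thm:deckproof} with the reduction sketched in Section~\ref{sec:undecidability}.

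For membership in $\Sigma^0_1$, I would exhibit a semi-decision procedure that, given a circular pre-proof $P$ of \muMLLo, searches over $k\in\N$ in increasing order and, for each $k$, uses the decision procedure of Theorem~\ref{thm:deckproof} to test whether $P$ is a $k$-proof; the procedure halts and answers ``yes'' as soon as some $k$ succeeds. Correctness rests on two observations. On the one hand, any $k$-proof is a proof: a $k$-valid branch is validated by a thread of height at most $k$, hence is valid in the sense of Section~\ref{subsec:validity}, so every infinite branch of a $k$-proof is valid. On the other hand, by Theorem~\ref{thm:kproof}, if $P$ is a valid proof then there is some $k\in\N$ for which $P$ is a $k$-proof, so the search terminates exactly on valid inputs. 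Thus the set of valid circular pre-proofs is $\Sigma^0_1$.

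For $\Sigma^0_1$-hardness, I would appeal to the construction of Section~\ref{sec:undecidability}, which from a two-counter (Minsky) machine $M$ produces a circular pre-proof $P_M$ of \muMLLo such that $P_M$ is a valid proof if and only if $M$ halts. The key points to verify are that the map $M\mapsto P_M$ is computable --- the gadgets for increment, decrement and zero test, together with the rewinding proof $\pi_R$, are assembled by a finite, effective procedure from the transition table of $M$ --- and that the stated equivalence holds, namely that the single critical branch of $P_M$ (the one looping through both cuts) admits a validating bouncing thread precisely when the simulated run of $M$ terminates. Since the halting problem for two-counter machines is $\Sigma^0_1$-complete~\cite{Minsky61}, this many-one reduction transfers $\Sigma^0_1$-hardness to the validity problem, completing the proof.

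The genuinely delicate work is not in the corollary itself but in the results it invokes: the correctness of the simulation in Section~\ref{sec:undecidability} (in particular the need for $\pi_R$ to erase the garbage constraints so that the thread returns with an empty constraint stack) and the proof of Theorem~\ref{thm:kproof} establishing a uniform bound $k$ on the height of validating threads. Granting those, the remaining obligations here are routine: checking the soundness direction ``$k$-proof $\Rightarrow$ proof'' and the effectivity of the reduction.
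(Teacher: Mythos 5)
Your proposal is correct and follows exactly the paper's own route: membership in $\Sigma^0_1$ by searching over $k$ and invoking Theorems~\ref{thm:kproof} and~\ref{thm:deckproof}, and hardness by the two-counter-machine reduction of Section~\ref{sec:undecidability}. The only detail you make explicit that the paper leaves implicit is the (immediate) observation that every $k$-proof is a proof, which is needed for the soundness of the semi-decision procedure.
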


\Section{Conclusion}\label{sec:conclusion}

We have studied non-weelfounded and circular proofs of \muMALLi and defined an extended validity criterion for the  pre-proofs of \muMALLi compared to previous work by Baelde, Doumane and Saurin~\cite{BaeldeDS16}. We have shown that our criterion enjoys cut elimination and soundness, but reaches the barrier of undecidability: in the purely multiplicative fragment already, a parameter has to be bouned by an explicit value to make the criterion decidable.

For future work, we plan to investigate whether this decidability result still holds when adding the additive connectives.

We also want to extend these results to more relaxed criteria, for instance where the visible parts are only required to meet the validated branch infinitely often. 

A less sequential variant of circular proofs is also currently developed by De and Saurin \cite{infinets2019}, under the name \emph{infinets}: the canoncity and absence of commutation rules of proof nets may have good properties with respect to cut-elimination and we expect that bouncing validity may be fruitful in their setting.

Finally, the present work is a first step in improving the compositionality of circular proofs. In addition to strengthening our cut-elimination result us mentioned above, we plan to investigate how one can import results from sized types~\cite{abel2007mixed} or copattern~\cite{abel13popl} approach which also have good proprerties with respect to compositionality and may be used in an infinitary scenario~\cite{abelpientka:jfp15,abel-compo-cmcs16}.

\bibliographystyle{splncs04}
\bibliography{biblio}


\begin{thebibliography}{25}


\ifx \showCODEN    \undefined \def \showCODEN     #1{\unskip}     \fi
\ifx \showDOI      \undefined \def \showDOI       #1{#1}\fi
\ifx \showISBNx    \undefined \def \showISBNx     #1{\unskip}     \fi
\ifx \showISBNxiii \undefined \def \showISBNxiii  #1{\unskip}     \fi
\ifx \showISSN     \undefined \def \showISSN      #1{\unskip}     \fi
\ifx \showLCCN     \undefined \def \showLCCN      #1{\unskip}     \fi
\ifx \shownote     \undefined \def \shownote      #1{#1}          \fi
\ifx \showarticletitle \undefined \def \showarticletitle #1{#1}   \fi
\ifx \showURL      \undefined \def \showURL       {\relax}        \fi
\providecommand\bibfield[2]{#2}
\providecommand\bibinfo[2]{#2}
\providecommand\natexlab[1]{#1}
\providecommand\showeprint[2][]{arXiv:#2}

\bibitem[\protect\citeauthoryear{Abel}{Abel}{2007}]%
        {abel2007mixed}
\bibfield{author}{\bibinfo{person}{Andreas Abel}.}
  \bibinfo{year}{2007}\natexlab{}.
\newblock \showarticletitle{Mixed inductive/coinductive types and strong
  normalization}. In \bibinfo{booktitle}{\emph{Asian Symposium on Programming
  Languages and Systems}}. Springer, \bibinfo{pages}{286--301}.
\newblock


\bibitem[\protect\citeauthoryear{Abel}{Abel}{2016}]%
        {abel-compo-cmcs16}
\bibfield{author}{\bibinfo{person}{Andreas Abel}.}
  \bibinfo{year}{2016}\natexlab{}.
\newblock \bibinfo{title}{Compositional Coinduction with Sized Types}.
  (\bibinfo{year}{2016}).
\newblock
\newblock
\shownote{Abstract for the invited talk at the 13th IFIP WG 1.3 International
  Workshop on Coalgebraic Methods in Computer Science (CMCS 2016), Eindhoven,
  the Netherlands, 2-3 April 2016.}


\bibitem[\protect\citeauthoryear{Abel and Pientka}{Abel and Pientka}{2016}]%
        {abelpientka:jfp15}
\bibfield{author}{\bibinfo{person}{Andreas Abel} {and}
  \bibinfo{person}{Brigitte Pientka}.} \bibinfo{year}{2016}\natexlab{}.
\newblock \showarticletitle{Well-founded recursion with copatterns and sized
  types}.
\newblock \bibinfo{journal}{\emph{Journal of Functional Programming}}
  \bibinfo{volume}{26} (\bibinfo{year}{2016}), \bibinfo{pages}{61}.
\newblock
\urldef\tempurl%
\url{https://doi.org/10.1017/S0956796816000022}
\showDOI{\tempurl}
\newblock
\shownote{ICFP 2013 special issue.}


\bibitem[\protect\citeauthoryear{Abel, Pientka, Thibodeau, and Setzer}{Abel
  et~al\mbox{.}}{2013}]%
        {abel13popl}
\bibfield{author}{\bibinfo{person}{Andreas Abel}, \bibinfo{person}{Brigitte
  Pientka}, \bibinfo{person}{David Thibodeau}, {and} \bibinfo{person}{Anton
  Setzer}.} \bibinfo{year}{2013}\natexlab{}.
\newblock \showarticletitle{Copatterns: programming infinite structures by
  observations}. In \bibinfo{booktitle}{\emph{The 40th Annual {ACM}
  {SIGPLAN-SIGACT} Symposium on Principles of Programming Languages, {POPL}
  '13, Rome, Italy - January 23 - 25, 2013}},
  \bibfield{editor}{\bibinfo{person}{Roberto Giacobazzi} {and}
  \bibinfo{person}{Radhia Cousot}} (Eds.). \bibinfo{publisher}{{ACM}},
  \bibinfo{pages}{27--38}.
\newblock
\showISBNx{978-1-4503-1832-7}
\urldef\tempurl%
\url{https://doi.org/10.1145/2429069.2429075}
\showDOI{\tempurl}


\bibitem[\protect\citeauthoryear{Baelde, Doumane, and Saurin}{Baelde
  et~al\mbox{.}}{2016}]%
        {BaeldeDS16}
\bibfield{author}{\bibinfo{person}{David Baelde}, \bibinfo{person}{Amina
  Doumane}, {and} \bibinfo{person}{Alexis Saurin}.}
  \bibinfo{year}{2016}\natexlab{}.
\newblock \showarticletitle{Infinitary Proof Theory: the Multiplicative
  Additive Case}. In \bibinfo{booktitle}{\emph{25th {EACSL} Annual Conference
  on Computer Science Logic, {CSL} 2016, August 29 - September 1, 2016,
  Marseille, France}} \emph{(\bibinfo{series}{LIPIcs})},
  Vol.~\bibinfo{volume}{62}. \bibinfo{publisher}{Schloss Dagstuhl -
  Leibniz-Zentrum fuer Informatik}, \bibinfo{pages}{42:1--42:17}.
\newblock
\showISBNx{978-3-95977-022-4}
\urldef\tempurl%
\url{http://www.dagstuhl.de/dagpub/978-3-95977-022-4}
\showURL{%
\tempurl}


\bibitem[\protect\citeauthoryear{Bennett}{Bennett}{1973}]%
        {Bennett73}
\bibfield{author}{\bibinfo{person}{C.~H. Bennett}.}
  \bibinfo{year}{1973}\natexlab{}.
\newblock \showarticletitle{Logical Reversibility of Computation}.
\newblock \bibinfo{journal}{\emph{IBM J. Res. Dev.}} \bibinfo{volume}{17},
  \bibinfo{number}{6} (\bibinfo{date}{Nov.} \bibinfo{year}{1973}),
  \bibinfo{pages}{525--532}.
\newblock
\showISSN{0018-8646}


\bibitem[\protect\citeauthoryear{Brotherston and Simpson}{Brotherston and
  Simpson}{2011}]%
        {brotherston11jlc}
\bibfield{author}{\bibinfo{person}{James Brotherston} {and}
  \bibinfo{person}{Alex Simpson}.} \bibinfo{year}{2011}\natexlab{}.
\newblock \showarticletitle{Sequent calculi for induction and infinite
  descent}.
\newblock \bibinfo{journal}{\emph{Journal of Logic and Computation}}
  \bibinfo{volume}{21}, \bibinfo{number}{6} (\bibinfo{date}{Dec.}
  \bibinfo{year}{2011}), \bibinfo{pages}{1177--1216}.
\newblock


\bibitem[\protect\citeauthoryear{Cave, Ferreira, Panangaden, and Pientka}{Cave
  et~al\mbox{.}}{2014}]%
        {cave2014popl}
\bibfield{author}{\bibinfo{person}{Andrew Cave}, \bibinfo{person}{Francisco
  Ferreira}, \bibinfo{person}{Prakash Panangaden}, {and}
  \bibinfo{person}{Brigitte Pientka}.} \bibinfo{year}{2014}\natexlab{}.
\newblock \showarticletitle{Fair Reactive Programming}. In
  \bibinfo{booktitle}{\emph{Proceedings of the 41st ACM SIGPLAN-SIGACT
  Symposium on Principles of Programming Languages}}
  \emph{(\bibinfo{series}{POPL '14})}. \bibinfo{publisher}{ACM},
  \bibinfo{address}{New York, NY, USA}, \bibinfo{pages}{361--372}.
\newblock
\showISBNx{978-1-4503-2544-8}
\urldef\tempurl%
\url{https://doi.org/10.1145/2535838.2535881}
\showDOI{\tempurl}


\bibitem[\protect\citeauthoryear{Dax, Hofmann, and Lange}{Dax
  et~al\mbox{.}}{2006}]%
        {dax06fsttcs}
\bibfield{author}{\bibinfo{person}{Christian Dax}, \bibinfo{person}{Martin
  Hofmann}, {and} \bibinfo{person}{Martin Lange}.}
  \bibinfo{year}{2006}\natexlab{}.
\newblock \showarticletitle{A Proof System for the Linear Time
  {\(\mathrm{\mu}\)}-Calculus}. In \bibinfo{booktitle}{\emph{{FSTTCS} 2006:
  Foundations of Software Technology and Theoretical Computer Science, 26th
  International Conference, Kolkata, India, December 13-15, 2006,
  Proceedings}}. \bibinfo{pages}{273--284}.
\newblock
\urldef\tempurl%
\url{https://doi.org/10.1007/11944836\_26}
\showDOI{\tempurl}


\bibitem[\protect\citeauthoryear{De and Saurin}{De and Saurin}{2019}]%
        {infinets2019}
\bibfield{author}{\bibinfo{person}{Abhishek De} {and} \bibinfo{person}{Alexis
  Saurin}.} \bibinfo{year}{2019}\natexlab{}.
\newblock \showarticletitle{Infinets: the parallel syntax for non-wellfounded
  proof-theory}. In \bibinfo{booktitle}{\emph{Automated Reasoning with Analytic
  Tableaux and Related Methods -- TABLEAUX 2019}}
  \emph{(\bibinfo{series}{Lecture Notes in Computer Science})},
  \bibfield{editor}{\bibinfo{person}{Serenella Cerrito} {and}
  \bibinfo{person}{Andrei Popescu}} (Eds.), Vol.~\bibinfo{volume}{xxxx}.
  \bibinfo{publisher}{Springer}, \bibinfo{pages}{XXX--YYY}.
\newblock
\showISBNx{to appear}
\urldef\tempurl%
\url{https://doi.org/10.1007/3-540-44904-3\_18}
\showDOI{\tempurl}


\bibitem[\protect\citeauthoryear{Doumane}{Doumane}{2017}]%
        {DoumanePhD}
\bibfield{author}{\bibinfo{person}{Amina Doumane}.}
  \bibinfo{year}{2017}\natexlab{}.
\newblock \emph{\bibinfo{title}{On the infinitary proof theory of logics with
  fixed points. (Th{\'{e}}orie de la d{\'{e}}monstration infinitaire pour les
  logiques {\`{a}} points fixes)}}.
\newblock \bibinfo{thesistype}{Ph.D. Dissertation}. \bibinfo{school}{Paris
  Diderot University, France}.
\newblock
\urldef\tempurl%
\url{https://tel.archives-ouvertes.fr/tel-01676953}
\showURL{%
\tempurl}


\bibitem[\protect\citeauthoryear{Doumane, Baelde, Hirschi, and Saurin}{Doumane
  et~al\mbox{.}}{2016}]%
        {doumane16lics}
\bibfield{author}{\bibinfo{person}{Amina Doumane}, \bibinfo{person}{David
  Baelde}, \bibinfo{person}{Lucca Hirschi}, {and} \bibinfo{person}{Alexis
  Saurin}.} \bibinfo{year}{2016}\natexlab{}.
\newblock \showarticletitle{Towards Completeness via Proof Search in the Linear
  Time {\(\mu\)}-calculus: The case of B{\"{u}}chi inclusions}. In
  \bibinfo{booktitle}{\emph{Proceedings of the 31st Annual {ACM/IEEE} Symposium
  on Logic in Computer Science, {LICS} '16, New York, NY, USA, July 5-8,
  2016}}, \bibfield{editor}{\bibinfo{person}{Martin Grohe},
  \bibinfo{person}{Eric Koskinen}, {and} \bibinfo{person}{Natarajan Shankar}}
  (Eds.). \bibinfo{publisher}{{ACM}}, \bibinfo{pages}{377--386}.
\newblock
\showISBNx{978-1-4503-4391-6}
\urldef\tempurl%
\url{https://doi.org/10.1145/2933575.2933598}
\showDOI{\tempurl}


\bibitem[\protect\citeauthoryear{Fortier and Santocanale}{Fortier and
  Santocanale}{2013}]%
        {fortier13csl}
\bibfield{author}{\bibinfo{person}{J{\'e}r{\^o}me Fortier} {and}
  \bibinfo{person}{Luigi Santocanale}.} \bibinfo{year}{2013}\natexlab{}.
\newblock \showarticletitle{Cuts for circular proofs: semantics and
  cut-elimination}. In \bibinfo{booktitle}{\emph{Computer Science Logic 2013
  (CSL 2013), CSL 2013, September 2-5, 2013, Torino, Italy}}
  \emph{(\bibinfo{series}{LIPIcs})}, \bibfield{editor}{\bibinfo{person}{Simona
  Ronchi~Della Rocca}} (Ed.), Vol.~\bibinfo{volume}{23}.
  \bibinfo{publisher}{Schloss Dagstuhl - Leibniz-Zentrum fuer Informatik},
  \bibinfo{pages}{248--262}.
\newblock
\showISBNx{978-3-939897-60-6}


\bibitem[\protect\citeauthoryear{Gim{\'e}nez}{Gim{\'e}nez}{1998}]%
        {gimenez98icalp}
\bibfield{author}{\bibinfo{person}{Eduardo Gim{\'e}nez}.}
  \bibinfo{year}{1998}\natexlab{}.
\newblock \showarticletitle{Structural Recursive Definitions in Type Theory}.
\newblock In \bibinfo{booktitle}{\emph{Proceedings 25th Int.\ Coll.\ on
  Automata, Languages and Programming, ICALP'98, Aalborg, Denmark, 13--17 July
  1998}}, \bibfield{editor}{\bibinfo{person}{K.~G. Larsen},
  \bibinfo{person}{S.~Skyum}, {and} \bibinfo{person}{G.~Winskel}} (Eds.).
  \bibinfo{series}{LNCS}, Vol.~\bibinfo{volume}{1443}.
  \bibinfo{publisher}{Springer-Verlag}, \bibinfo{address}{Berlin},
  \bibinfo{pages}{397--408}.
\newblock


\bibitem[\protect\citeauthoryear{Girard}{Girard}{1987}]%
        {1987:Girard:LL}
\bibfield{author}{\bibinfo{person}{Jean{-}Yves Girard}.}
  \bibinfo{year}{1987}\natexlab{}.
\newblock \showarticletitle{Linear Logic}.
\newblock \bibinfo{journal}{\emph{Theor. Comput. Sci.}}  \bibinfo{volume}{50}
  (\bibinfo{year}{1987}), \bibinfo{pages}{1--102}.
\newblock
\urldef\tempurl%
\url{https://doi.org/10.1016/0304-3975(87)90045-4}
\showDOI{\tempurl}


\bibitem[\protect\citeauthoryear{Girard}{Girard}{1989}]%
        {girardGoI89}
\bibfield{author}{\bibinfo{person}{Jean-Yves Girard}.}
  \bibinfo{year}{1989}\natexlab{}.
\newblock \showarticletitle{Towards a Geometry of Interaction}. In
  \bibinfo{booktitle}{\emph{Categories in Computer Scienc e and Logic}}
  \emph{(\bibinfo{series}{Contemporary Mathematics})}.
  \bibinfo{publisher}{AMS}, \bibinfo{pages}{69--108}.
\newblock


\bibitem[\protect\citeauthoryear{Girard}{Girard}{2001}]%
        {girard01locus}
\bibfield{author}{\bibinfo{person}{Jean-Yves Girard}.}
  \bibinfo{year}{2001}\natexlab{}.
\newblock \showarticletitle{Locus Solum}.
\newblock   \bibinfo{volume}{11} (\bibinfo{year}{2001}),
  \bibinfo{pages}{301--506}.
\newblock


\bibitem[\protect\citeauthoryear{Hughes and van Glabbeek}{Hughes and van
  Glabbeek}{2005}]%
        {HvG}
\bibfield{author}{\bibinfo{person}{Dominic J.~D. Hughes} {and}
  \bibinfo{person}{Rob~J. van Glabbeek}.} \bibinfo{year}{2005}\natexlab{}.
\newblock \showarticletitle{Proof nets for unit-free multiplicative-additive
  linear logic}.
\newblock \bibinfo{journal}{\emph{{ACM} Trans. Comput. Log.}}
  \bibinfo{volume}{6}, \bibinfo{number}{4} (\bibinfo{year}{2005}),
  \bibinfo{pages}{784--842}.
\newblock
\urldef\tempurl%
\url{https://doi.org/10.1145/1094622.1094629}
\showDOI{\tempurl}


\bibitem[\protect\citeauthoryear{Hyvernat}{Hyvernat}{2014}]%
        {hyvernat2014lmcs}
\bibfield{author}{\bibinfo{person}{Pierre Hyvernat}.}
  \bibinfo{year}{2014}\natexlab{}.
\newblock \showarticletitle{The Size-Change Termination Principle for
  Constructor Based Languages}.
\newblock \bibinfo{journal}{\emph{Logical Methods in Computer Science}}
  \bibinfo{volume}{10}, \bibinfo{number}{1} (\bibinfo{year}{2014}).
\newblock
\urldef\tempurl%
\url{https://doi.org/10.2168/LMCS-10(1:11)2014}
\showDOI{\tempurl}


\bibitem[\protect\citeauthoryear{Janin and Walukiewicz}{Janin and
  Walukiewicz}{1995}]%
        {janin95mfcs}
\bibfield{author}{\bibinfo{person}{David Janin} {and} \bibinfo{person}{Igor
  Walukiewicz}.} \bibinfo{year}{1995}\natexlab{}.
\newblock \showarticletitle{Automata for the Modal mu-Calculus and related
  Results}. In \bibinfo{booktitle}{\emph{Mathematical Foundations of Computer
  Science 1995, 20th International Symposium, MFCS'95, Prague, Czech Republic,
  August 28 - September 1, 1995, Proceedings}} \emph{(\bibinfo{series}{Lecture
  Notes in Computer Science})}, \bibfield{editor}{\bibinfo{person}{Jir{\'{\i}}
  Wiedermann} {and} \bibinfo{person}{Petr H{\'{a}}jek}} (Eds.),
  Vol.~\bibinfo{volume}{969}. \bibinfo{publisher}{Springer},
  \bibinfo{pages}{552--562}.
\newblock
\showISBNx{3-540-60246-1}
\urldef\tempurl%
\url{https://doi.org/10.1007/3-540-60246-1\_160}
\showDOI{\tempurl}


\bibitem[\protect\citeauthoryear{Mendler}{Mendler}{1991}]%
        {mendler91apal}
\bibfield{author}{\bibinfo{person}{N.~P. Mendler}.}
  \bibinfo{year}{1991}\natexlab{}.
\newblock \showarticletitle{Inductive Types and Type Constraints in the Second
  Order Lambda Calculus}.
\newblock \bibinfo{journal}{\emph{Annals of Pure and Applied Logic}}
  \bibinfo{volume}{51}, \bibinfo{number}{1} (\bibinfo{year}{1991}),
  \bibinfo{pages}{159--172}.
\newblock


\bibitem[\protect\citeauthoryear{Minsky}{Minsky}{1961}]%
        {Minsky61}
\bibfield{author}{\bibinfo{person}{Marvin~L. Minsky}.}
  \bibinfo{year}{1961}\natexlab{}.
\newblock \showarticletitle{Recursive Unsolvability of Post's Problem of "Tag"
  and other Topics in Theory of Turing Machines}.
\newblock \bibinfo{journal}{\emph{Annals of Mathematics}} \bibinfo{volume}{74},
  \bibinfo{number}{3} (\bibinfo{year}{1961}), \bibinfo{pages}{437--455}.
\newblock
\showISSN{0003486X}
\urldef\tempurl%
\url{http://www.jstor.org/stable/1970290}
\showURL{%
\tempurl}


\bibitem[\protect\citeauthoryear{Mints}{Mints}{1978}]%
        {mints1978finite}
\bibfield{author}{\bibinfo{person}{Grigori~E Mints}.}
  \bibinfo{year}{1978}\natexlab{}.
\newblock \showarticletitle{Finite investigations of transfinite derivations}.
\newblock \bibinfo{journal}{\emph{Journal of Soviet Mathematics}}
  \bibinfo{volume}{10}, \bibinfo{number}{4} (\bibinfo{year}{1978}),
  \bibinfo{pages}{548--596}.
\newblock


\bibitem[\protect\citeauthoryear{Santocanale}{Santocanale}{2002}]%
        {santocanale02fossacs}
\bibfield{author}{\bibinfo{person}{Luigi Santocanale}.}
  \bibinfo{year}{2002}\natexlab{}.
\newblock \showarticletitle{A Calculus of Circular Proofs and Its Categorical
  Semantics}. In \bibinfo{booktitle}{\emph{Foundations of Software Science and
  Computation Structures}} \emph{(\bibinfo{series}{Lecture Notes in Computer
  Science})}, \bibfield{editor}{\bibinfo{person}{Mogens Nielsen} {and}
  \bibinfo{person}{Uffe Engberg}} (Eds.), Vol.~\bibinfo{volume}{2303}.
  \bibinfo{publisher}{Springer}, \bibinfo{pages}{357--371}.
\newblock
\showISBNx{3-540-43366-X}


\bibitem[\protect\citeauthoryear{Terui}{Terui}{2011}]%
        {Terui11}
\bibfield{author}{\bibinfo{person}{Kazushige Terui}.}
  \bibinfo{year}{2011}\natexlab{}.
\newblock \showarticletitle{Computational ludics}.
\newblock \bibinfo{journal}{\emph{Theoretical Computer Science}}
  \bibinfo{volume}{412}, \bibinfo{number}{20} (\bibinfo{year}{2011}),
  \bibinfo{pages}{2048--2071}.
\newblock
\urldef\tempurl%
\url{https://doi.org/10.1016/j.tcs.2010.12.026}
\showDOI{\tempurl}


\end{thebibliography}

\appendix

\newpage
\section{Appendices}

\subsection{The multicut rule}
\label{app:multiCutRule}

 A new phenomenon occurs in the presence of axioms. Consider for instance the following pre-proof where $F=\phi_\alpha$ and $G=\phi_\beta$:
$$
\scalebox{.9}{
\begin{prooftree}
  \Infer{0}[\rax]{\vdash F, G^\perp}
\Hypo{\pi}
\Infer{1}[]{\vdash G, \Gamma}
\Hypo{\dots}
\Infer{3}[\rmcut]{\vdash F, \Gamma'}
\end{prooftree}}
$$
In the finitary cut-elimination procedure, we would reduce this multicut to the derivation labelled $\rmcutone$ below. Doing so, we have to perform a substitution on addresses
(denoted by $[\alpha/\beta]$) to relocate the subderivation $\pi$
on the required occurrence. Another option, described by the derivation $\rmcuttwo$ below, is to avoid the renaming by keeping a link explicitely in the multicut rule.
$$
\scalebox{.9}{
\begin{prooftree}
\Hypo{\pi[\beta/\alpha]}
\Infer{1}[]{\vdash F, \Gamma}
\Hypo{\dots}
\Infer{2}[\rmcutone]{\vdash F, \Gamma'}
\end{prooftree}
\qquad\text{or}\qquad\begin{prooftree}
\Hypo{\pi}
\Infer{1}[]{\vdash G\tikzmark{Gcut}, \Gamma}
\Hypo{\dots}
\Infer{2}[\rmcuttwo]{\vdash F\tikzmark{Fcut}, \Gamma'}
\end{prooftree}
\begin{tikzpicture}[overlay,remember picture,-,line cap=round,line width=0.1cm]
   \draw[rounded corners, smooth=2,red, opacity=.25] ($(pic cs:Fcut)+(-.1cm,.1cm)$)to ($(pic cs:Gcut)+(-.1cm,.1cm)$);    
   \end{tikzpicture}
   }
$$

 We choose the last option to avoid the
global renaming, which would complicate our technical development.

A multicut rule will now be written as: 
$$\scalebox{.9}{\begin{prooftree}
\Hypo{\vdash\Gamma_1\quad \dots\quad  \vdash\Gamma_n}
\Infer{1}[\rmcutpar]{\vdash\Gamma}
\end{prooftree}}$$
and comes with a function $\iota$ which shows how the occurrences of the conclusion are 
distributed over the premisses (modulo renaming), and a relation 
$\dbot$ specifiying which occurrences are cut-connected. A precise definition of the multicut rule is given in the appendix. 

\begin{definition}
  Given sequents $s, s_1, \dots, s_n$ where $n>0$
  and such that $s_i,s_j$ are disjoint for all $i\neq j$,
  a \defname{multicut} of conclusion $s$ and premisses
  $(s_i)_{i\in[1;n]}$ is given by an injection
  $\iota : s \mapsto \cup_{i\in[1;n]} s_i$ and a 
  symmetric relation $\dbot \subseteq (\cup_{i\in[1;n]} s_i)^2$ such that:
  \begin {itemize}
  \item For all $F \in s$, $\iota(F) \equiv F$.
  \item For all $F,G \in \cup_{i\in[1;n]} s_i$,
    $F \dbot G$ implies $F \equiv G^\perp$.
  \item $\mathrm{dom}(\dbot) =
    (\cup_{i\in[1;n]} s_i) \setminus \mathrm{im}(\iota)$.
  \item Given two sequents $s_i$ and $s_j$, we say that they are $\dbot$-connected
      on the formula occurrences $F,G$ when $F\in s_i$ and $G\in s_j$ such that $F \dbot G$.
      We say that they are $\dbot$-connected, and we write $s_i \dbot s_j$, 
      when they are $\dbot$-connected on some $F,G$.  The relation $\dbot$ on sequents
      must satisfy two conditions:
\begin{itemize}
\item two sequents must be $\dbot$-connected on at most one pair of 
  occurrences $F,G$;
\item the graph of the relation $\dbot$ must be connected and acyclic.
\end{itemize}      
    \end{itemize}
We write this multicut rule as: 
$$\begin{prooftree}
\Hypo{s_1\quad \dots \quad s_n}
\Infer{1}[\rmcutpar]{s}
\end{prooftree}$$
\end{definition}

\subsection{Cut elimination rules}
\label{app:CutElimRules}

We detail the rules of cut elimination introduced in \cref{sec:CutElimRules}.

 \begin{definition}
  \label{def:ExtReductions}
\defname{External reductions} are defined in \cref{fig:external}. 
In the first external rule, the sets $\C{C}_\Delta$ and $\C{C}_\Gamma$ are the 
subsets of $\C C$ which are respectively connected to $\Delta$ and $\Gamma$ 
respectively.
More precisely,
$$ 
\begin{array}{cc}  \C C_\Delta = &\{
   	 s \ | \ \exists s',
   	 s\dbot^* s' \text{ and } s'
	 \text{ is $\dbot$-connected to }\\
&	 \vdash \Delta,\Gamma, F\otimes G
	 \text{ on an occurrence of } \Delta
         \},
         \end{array}$$
where $\dbot^*$ is the transitive closure of the relation $\dbot$ on sequents.
$\C C_\Gamma$ is defined similarly.
\end{definition}

\begin{figure*}[htpb]
  \begin{center}
\scalebox{.95}{$\begin{array}{clc}
\begin{prooftree}
  \Hypo{\C C}
\Hypo{\vdash \Delta,F'}
\Hypo{\vdash \Gamma,G'}
\Infer{2}[\rtensor]{\vdash \Delta,\Gamma, F'\tensor G'} 
\Infer{2}[\rmcutpar]{\vdash \Sigma_\Delta, \Sigma_\Gamma, F\tensor G}
\end{prooftree}
&\underset{r}{\longrightarrow}& 
\begin{prooftree}
\Hypo{\C C_\Delta}
\Hypo{\vdash \Delta,F'}
\Infer{2}[\scriptsize\ensuremath{\mathsf{mcut(\iota', \perp\!\!\!\perp)}}]{\vdash \Sigma_\Delta, F}
\Hypo{\C C_\Gamma}
\Hypo{\vdash \Gamma,G'}
\Infer{2}[\scriptsize\ensuremath{\mathsf{mcut(\iota'', \perp\!\!\!\perp)}}]{\vdash \Sigma_\Gamma, G}
\Infer{2}[\rtensor]{\vdash \Sigma_\Delta, \Sigma_\Gamma, F\tensor G}
\end{prooftree}
\\[35pt]
\begin{prooftree}
\Hypo{\C C}
\Hypo{\vdash \Delta,F',G'}
\Infer{1}[\rparr]{\vdash \Delta, F'\parr G'} 
\Infer{2}[\rmcutpar]{\vdash \Sigma, F\parr G}
\end{prooftree}
&\underset{r}{\longrightarrow}&
\begin{prooftree}
\Hypo{\C C}
\Hypo{\vdash \Delta,F',G'}
\Infer{2}[\scriptsize\ensuremath{\mathsf{mcut(\iota', \perp\!\!\!\perp)}}]{\vdash \Sigma, F,G}
\Infer{1}[\rparr]{\vdash \Sigma, F\parr G}
\end{prooftree}
\\[35pt]
\begin{prooftree}
\Hypo{\C C}
\Hypo{\vdash \Delta, F'[\sigma X. F'/X]}
\Infer{1}[\rsigma]{\vdash \Delta, \sigma X. F'} 
\Infer{2}[\rmcutpar]{\vdash \Sigma, \sigma X. F}
\end{prooftree}
&\underset{r}{\longrightarrow}& 
\begin{prooftree}
\Hypo{\C C}
\Hypo{\vdash \Delta,F'[\sigma X. F'/X]}
\Infer{2}[\scriptsize\ensuremath{\mathsf{mcut(\iota', \perp\!\!\!\perp)}}]{\vdash \Sigma, F[\sigma X. F/X]}
\Infer{1}[\rsigma]{\vdash \Sigma, \sigma X. F}
\end{prooftree}
\end{array}$}
\\[20pt]
\scalebox{.95}{$
\begin{prooftree}
\Hypo{\C C}
\Hypo{\vdash \Delta}
\Infer{1}[\rbot]{\vdash \Delta, \bot_\beta} 
\Infer{2}[\rmcutpar]{\vdash \Sigma, \bot_\alpha}
\end{prooftree}
\underset{r}{\longrightarrow}
\begin{prooftree}
\Hypo{\C C}
\Hypo{\vdash \Delta}
\Infer{2}[\scriptsize\ensuremath{\mathsf{mcut(\iota', \perp\!\!\!\perp)}}]{\vdash \Sigma}
\Infer{1}[\rbot]{\vdash \Sigma, \bot_\alpha}
\end{prooftree}
\hspace{1cm}
\begin{prooftree}
\Infer{0}[\rone]{\vdash \llone_\beta}
\Infer{1}[\rmcutpar]{\vdash \llone_\alpha}
\end{prooftree}
\underset{r}{\longrightarrow}
\begin{prooftree}
\Infer{0}[\rone]{\vdash \llone_\alpha}
\end{prooftree}
$}
\\[15pt]
\end{center}
In the first reduction ($\rtensor/\rmcut$) we require that
$\iota(F\tensor G) = F'\tensor G'$ and take $\iota'$ and $\iota''$
that coincide with $\iota$ on $\Sigma_\Delta$ and $\Sigma_\Gamma$
respectively, and such that $\iota'(F)=F'$ and $\iota''(G)=G'$.
In the other reductions $\iota$ and $\iota'$ are similarly
constrained.
\caption{External reduction rules,
  where $r=(\mathsf{ext},F)$ and $F$ is the principal occurrence.}
\label{fig:external}
\end{figure*}

\begin{remark}
  Note that the $\rtensor/\rmcut$  external reduction yields
  multiple multicuts, though always on disjoint sub-trees.
  Thus, \muMLLim is stable by external reductions.
\end{remark}

In external reductions, we pushed a multicut away from the root,
above a logical rule. 
If we start with a \muMLLim pre-proof and apply a reduction sequence where
external rules are applied infinitely often to each multicut, we
will produce at the limit a cut-free proof. 
This is the reason why we say that external reductions are
\defname{productive}. This is not the case for the internal reduction rules
given next.

\begin{definition} \label{def:IntReductions}
  \defname{Internal reductions} are the \defname{principal} reductions
  given in \cref{fig:principal} together with the following two reductions:
  \begin{itemize}
    \item
      the merge $\rmcut/\rcut$ reduction
      $$\begin{array}{l}
        \scalebox{.9}{\begin{prooftree}
          \Hypo{\C C}
          \Hypo{\vdash \Delta,F}
          \Hypo{\vdash \Gamma,F^\bot}
          \Infer{2}[\rcut]{\vdash \Delta,\Gamma}     
          \Infer{2}[\rmcutpar]{\vdash \Sigma}
        \end{prooftree}}
        \\[10pt]
        \qquad\qquad\qquad
        \ \underset{r}{\longrightarrow}\qquad
        \scalebox{.9}{\begin{prooftree}
          \Hypo{\C C}
          \Hypo{\vdash \Delta,F}
          \Hypo{\vdash \Gamma,F^\bot}
          \Infer{3}[
             \scriptsize\ensuremath{
             \mathsf{mcut(\iota, \perp\!\!\!\perp')}}\xspace]{
             \vdash \Sigma}
        \end{prooftree}}
      \end{array}$$
      where $\dbot'$ extends $\dbot$ with
      $F\mathrel{\dbot'}F^\perp$ and
      $r=(\mathsf{merge},\{F,F^\bot\})$.
    \item
      the axiom reduction $\rmcut/\rax$
      $$\begin{array}{l}
        \scalebox{.9}{\begin{prooftree}
          \Hypo{\C C}
          \Infer{0}[\rax]{\vdash F, F'^\bot}
          \Hypo{\vdash F'', \Gamma}    
          \Infer{3}[\rmcutpar]{\vdash \Sigma}
        \end{prooftree}}
        \\[10pt]
        \qquad\quad\qquad\qquad\qquad
        \ \underset{r}{\longrightarrow}\ \qquad
        \scalebox{.9}{\begin{prooftree}
          \Hypo{\C C}
          \Hypo{\vdash F'', \Gamma}
          \Infer{2}[\scriptsize\ensuremath{\mathsf{mcut(\iota', \perp\!\!\!\perp')}}\xspace]{\vdash \Sigma}
        \end{prooftree}}
      \end{array}$$
      where $r=(\mathsf{CutAx},\{F,F'^\bot\})$,
      $F'^\bot\mathrel{\dbot}F''$
      and $\iota', \dbot'$ are defined as follows:
      \begin{itemize}
        \item for all $G \in \Sigma$,
          if $\iota(G)=F$ then $\iota'(G)=F''$, otherwise 
          $\iota'(G)=\iota(G)$;
        \item $\dbot'=\dbot\cup\{\{F'',G\}| \{F,G\}\in\dbot\}$.
\end{itemize}     
\end{itemize}
\end{definition}

\begin{figure*}[htpb] 
  \begin{center}
\scalebox{.92}{$\begin{array}{l}
\begin{prooftree}
\Hypo{\C C}
\Hypo{\vdash \Delta,F}
\Hypo{\vdash \Gamma,G}
\Infer{2}[\rtensor]{\vdash \Delta,\Gamma,F\tensor G}
\Hypo{\vdash \Theta,F'^\bot,G'^\bot}
\Infer{1}[\rparr]{\vdash \Theta,F'^\bot\parr G'^\bot} 
\Infer{3}[\rmcutpar]{\vdash \Sigma}
\end{prooftree}
\\[20pt] \hspace{4cm}
\underset{r}{\longrightarrow}\qquad
\begin{prooftree}
\Hypo{\C C}
\Hypo{\vdash \Delta,F}
\Hypo{\vdash \Gamma,G}
\Hypo{\vdash \Theta,F'^\bot,G'^\bot}
\Infer{4}[{\scriptsize\ensuremath{\mathsf{mcut(\iota, \perp\!\!\!\perp')}}}\xspace]{\vdash \Sigma}
\end{prooftree}
\\[15pt]
  \text{where }
  F\tensor G \mathrel{\dbot} G'^\bot\parr G'^\bot
  \text{ and $\dbot'$ coincides with $\dbot$ except for }
  F \mathrel{\dbot'} F'^\bot \text{ and } G \mathrel{\dbot'} G'^\bot
\\[35pt]
\begin{prooftree}
\Hypo{\C C}
\Hypo{\vdash \Delta, F'[\mu X.F'/X]}
\Infer{1}[\rmu]{\vdash \Delta,\mu X.F'}
\Hypo{\vdash \Gamma, {F}^\perp[\nu X.{F}^\perp/X]}
\Infer{1}[\rnu]{\vdash  \Gamma,\nu X.{F}^\perp} 
\Infer{3}[\rmcutpar]{\vdash \Sigma}
\end{prooftree}
\\[20pt] \hfill \underset{r}{\longrightarrow}\qquad
\begin{prooftree}
  \Hypo{\C C}
  \Hypo{\vdash \Delta, F'[\mu X.F'/X]}
\Hypo{\vdash \Gamma, {F}^\perp[\nu X.{F}^\perp/X]}
\Infer{3}[{\scriptsize\ensuremath{\mathsf{mcut(\iota, \perp\!\!\!\perp')}}}\xspace]{\vdash \Sigma}
\end{prooftree}
\\[15pt]
\text{where }
\mu X.F' \mathrel{\dbot} \nu X.{F}^\perp
\text{ and $\dbot'$ coincides with $\dbot$ except for }
F'[\mu X. F'/X] \mathrel{\dbot'} F^\bot[\nu X. F^\bot/X]
\\[35pt]
\begin{prooftree}
  \Hypo{\C C}
  \Hypo{\vdash \Gamma}
  \Infer{1}[\rbot]{\vdash \Gamma,\bot_\alpha}
  \Infer{0}[\rone]{\vdash \llone_{\beta}}
  \Infer{3}[\rmcutpar]{\Sigma}
\end{prooftree}
  \quad\underset{r}{\longrightarrow}\quad
\begin{prooftree}
  \Hypo{\C C}
  \Hypo{\vdash \Gamma}
  \Infer{2}[{\scriptsize\ensuremath{\mathsf{mcut(\iota, \perp\!\!\!\perp)}}}\xspace]{\Sigma}
\end{prooftree}
\quad \text{ where } \bot_\alpha \mathrel{\dbot} \llone_\beta
  \end{array}$}
  \end{center}
\caption{Principal reductions,
where $r=(\mathsf{princ},\{F,F'^\bot\})$ with $\{F, F'^\bot\}$ the principal occurrences that have been reduced.}
\label{fig:principal}
\end{figure*}

In internal reductions, the multicut remains at the root of the redex.
Thus, if a sequence of multicut reductions eventually involved only
internal reductions, it would not produce be productive.

The use of labels in reductions allows us to define in full details
our notion of reduction sequence and fairness.

\begin{definition} \label{def:fairnessfull}
  A \defname{reduction sequence} is a finite or infinite sequence
  $\sigma = (\pi_i,r_i)_{i\in1+\lambda}$ with $\lambda\in\omega+1$,
  where the $\pi_i$ are \muMLLim pre-proofs, the $r_i$ are labels
  identifying multicut reduction rules and, for all $i\in\lambda$,
  $\pi_i \underset{r_i}{\longrightarrow} \pi_{i+1}$.
  The sequence is \defname{fair} if for all $i\in\lambda$ and $r$
  such that $\pi_i \underset{r}{\longrightarrow} \pi'$ there is
  some $j\in\lambda$ such that $j \geq i$ and $\pi_j 
  \underset{r}{\longrightarrow} \pi_{j+1}$.
\end{definition}

\subsection{Cut elimination for \muMLLi}
\label{app:muMLLCutElimThm}

\subsubsection{Trace of a reduction sequence}

If $\C R$ is a reduction sequence starting from $\pi$, we start by defining the \defname{trace} of $\C R$ to be the subtree of $\pi$ whose sequents occur in the reduction sequence as premisses of some multicut.
Note that each node of the trace corresponds to a well-formed inference:
indeed, if sequents $S$ and $S'$ are premisses of a same inference (which must
thus be a tensor or cut) and $S$ enters a multicut at some point in the 
reduction sequence, then $S'$ must also enter a multicut -- though not 
necessarily the same one.
However, the trace may have unjustified sequents:
this happens when a sequents $S$ enters the multicut during the reduction
sequence but never leaves it; it will then be part of the trace but
the subtree of $\pi$ rooted in $S$ will not.

The unjustified sequents of the trace are called its \defname{border sequents}.
 Note that a border sequent cannot be the conclusion of an axiom rule nor a
cut rule in the initial derivation $\pi$.
If this were the case, by fairness, it would have been 
absorbed by an $\rax/\rmcut$ or a $\rcut/\rmcut$ reduction respectively.
This allows to define the \defname{distinguished occurrence} of a border
sequent as the principal occurrence of the logical rule applied to the border 
sequent in $\pi$.

There is another reason why the trace of a reduction sequence might not
be a proof: its infinite branches may not be valid. The infinite 
branches of the trace are also infinite branches of the proof $\pi$, thus they 
are supported by valid threads of $\pi$, but these threads might not be 
included in the trace. We show that this actually never happens.

\subsubsection{The bouncing threads of the trace belong to the trace}

This section is dedicated to proving the following theorem.

\restatableproposition{prop:ThreadsInTrace2}{
Let $T$ be the trace of a reduction sequence starting from a proof $\pi$,
and let $\beta$ be an infinite branch of $T$. If $t$ is a thread of $\pi$ 
validating $\beta$, then $t$ is also a thread of $T$. 
}

We now introduce a
a useful technical tool called the \defname{residual of a pre-thread}.

\begin{definition}[Residual of a pre-thread]\label{def:residual}
  Let  $\C R$ be a finite reduction path starting from $\pi$ to $\pi'$
  and let $t$ be a pre-thread of $\pi$.
  Let $S'$ be the set of sequents of $\pi'$.
  We call the \defname{residual of $t$}
  after the reduction $\C R$ the pre-thread
  $t\cap \{(F,s,d)| s\in S', d\in \{\uparrow, \downarrow\}\}$.
\end{definition}

By definition, the length of the residual of $t$ is smaller than the length
of $t$.

\begin{proposition} \label{prop:insidemagic}
  Let  $\C R$ be a finite reduction path starting from $\pi$ to $\pi'$,
  let $T$ be its trace.
  Let $t$ be a $b$-thread of $\pi$ and $t'$ its residual after $\C R$.
  Then $t'$ is a $b$-thread.
  Furthermore, if $t$ is a $B$-path of $\pi$,
  then $t'$ is a $B$-path of $T$,
  and $t'$ has the same endpoints as $t$.
\end{proposition}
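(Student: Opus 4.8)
The plan is to prove the statement by induction on the length of $\C R$, reducing everything to the case of a single reduction step. For the base case $\C R$ is empty, $\pi'=\pi$ and $t'=t$. For the inductive step I would write $\C R$ as a first reduction $\pi\to_r\pi_1$ followed by a strictly shorter path $\pi_1\to^\ast\pi'$, and use that residuation composes: since a sequent of $\pi$ that survives in $\pi'$ already survives in $\pi_1$ (sequents consumed by the multicut never reappear), the residual of $t$ along $\C R$ is the residual along the tail of the residual of $t$ along $r$. Hence it suffices to prove a single-step lemma: for one reduction $\pi\to_r\pi'$, the residual of a $B$-path (i.e.\ a $b$-path of \cref{def:thread}, with weight in $\C{B}$) is again a $B$-path, the residual of a $b$-thread is again a $b$-thread, the endpoints are preserved in the $B$-path case, and all visited sequents are premisses of the multicut, hence lie in the trace.

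For the single step I would proceed by cases on $r$, checking in each case (i) that $t'$ is still a well-formed pre-thread — the tuples that become adjacent once the vanished sequents are deleted must still satisfy a clause of \cref{def:pre-thread} — and (ii) that the induced rewriting on $\weight{t}$ preserves membership in $\C{B}$ (resp.\ being a prefix of a word of $\C{B}$). If $r$ modifies no sequent visited by $t$ then $t'=t$ and there is nothing to do. A reduction that merely deletes a sequent crossed by $t$ with an unchanged marked occurrence (e.g.\ the conclusion of a merged \rcut) only erases $\Wwait$ letters, and $\C{B}$ is closed under deletion of internal $\Wwait$'s. The structural cases are the principal reductions on which $t$ bounces: for the \rmu/\rnu\ and \rtensor/\rparr\ cases the path descends into the principal occurrence, bounces on the cut, and re-ascends into the dual occurrence, so the affected factor of $\weight{t}$ is $\bar x\,\Ccut\,x$ for some $x\in\{\Al,\Ar,\Ai\}$; here well-bracketedness of a $B$-path forces the two brackets to match (this is exactly why the crossing pre-thread of \cref{ex:pre-thread} is not a $B$-path and has no residual). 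After the step the intermediate occurrences disappear and $\dbot'$ connects the two residual occurrences directly, rewriting this factor to $\Ccut$; since $\bar x\,\C{B}\,x$ and $\Ccut$ are both clauses of the grammar, membership in $\C{B}$ is preserved. The \rbot/\rone\ principal case is degenerate: as $\bot$ and $\llone$ have no subformulas, such a bounce can only occur as the base case $\Ccut$ at an extremity of the path, and I would treat it separately.

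The main obstacle is the axiom reduction \rmcut/\rax, the case genuinely absent from \cite{BaeldeDS16}. Here $t$ reaches an occurrence $F$ of the axiom $\vdash F,F'^\bot$ going upwards — necessarily through a cut bounce, since the axiom sequent is a leaf — then bounces on the axiom to $F'^\bot$ going downwards, and finally bounces on the cut $F'^\bot\mathrel{\dbot}F''$ to continue upwards from $F''$. The affected factor of $\weight{t}$ is therefore $\Ccut\,\Aax\,\Ccut$. After the reduction the axiom together with $F$ is erased and $F''$ inherits, through $\iota'$ and $\dbot'$, the connection previously carried by $F$, so the residual performs a single cut bounce and the factor collapses to $\Ccut$. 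This is sound because $\Ccut\,\Aax\,\Ccut$ is produced by the clause $\C{B}\,\Aax\,\C{B}$ with both sub-paths equal to $\Ccut$, while $\Ccut\in\C{B}$; reading the same collapse on a prefix settles the $b$-thread case. I would also handle here the boundary variant in which $F$ is connected to the conclusion through $\iota$ rather than cut-connected: then $F$ is an endpoint of the path and the reduction merely relocates it to the structurally equivalent occurrence $F''$, which is harmless since endpoints of $B$-paths are only determined up to renaming.

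It remains to collect the endpoint and trace assertions. In every case the rewriting acts strictly inside the bouncing excursion and never shortens the path beyond its two extremities, so the endpoints of $t'$ coincide with those of $t$ up to the renaming induced by $\iota$ and $\dbot$; composing over the induction keeps them fixed along all of $\C R$. Finally, every surviving sequent visited by $t'$ has occurred as a premiss of the multicut at some stage of $\C R$, which is precisely the defining condition for membership in the trace, so $t'$ is a $B$-path of $T$. External reductions only rearrange the conclusion region of the multicut and leave the interior bounces intact, so they fall under the benign cases above and raise no further difficulty.
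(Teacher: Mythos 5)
Your overall strategy coincides with the paper's: reduce to a single reduction step, inspect each reduction rule, and check that the induced rewriting on $\weight{t}$ ($\bar x\,\Ccut\,x \to \Ccut$ for principal steps, $\Ccut\,\Aax\,\Ccut \to \Ccut$ for the axiom step, erasure of internal $\Wwait$'s for merges) preserves membership in $\C{B}$. The paper's proof says little more than ``simple inspection of the reduction rules'' for these points, so your case analysis is a legitimate elaboration of the same argument.

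The gap is in the endpoint claim, which is the one point the paper's proof is explicit about. You assert that ``the rewriting acts strictly inside the bouncing excursion and never shortens the path beyond its two extremities'', and for the $\rax/\rmcut$ step you admit a ``boundary variant'' in which an endpoint $F$ of the path lies in the reduced axiom and is ``relocated to the structurally equivalent occurrence $F''$'', declaring this harmless ``up to renaming''. This does not prove the statement: the proposition claims $t'$ has the \emph{same} endpoints as $t$, and this is what \cref{lem:LemmeMagique} needs in order to track two fixed occurrences until they become directly mcut-connected; substituting an endpoint by a distinct, merely structurally equivalent occurrence breaks that invariant. The correct observation is that your boundary variant cannot occur at all: if an endpoint of $t$ sat in the conclusion $\vdash F, F'^\bot$ of the axiom being consumed, one could prepend (resp.\ append) the axiom bounce to $t$, turning $\weight{t}\in\C{B}$ into a word of $\Aax\Wwait^*\C{B}$ (resp.\ a word whose reverse lies in that set), i.e.\ $t$ would extend into an $h$-path or the reverse of one --- exactly what the second clause of the definition of $B$-paths forbids. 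That clause is the entire mechanism behind endpoint preservation (and, symmetrically, behind the fact that no reduction other than $\rax/\rmcut$ can consume one endpoint without the other), and your proof never invokes it; without it, your blanket claim that the path is never shortened at its extremities is unjustified, and the case you do treat is treated incorrectly.
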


\begin{proof}
  It suffices to consider a single reduction step.
  Most of the claims follow from a simple inspection of the reduction rules.
  For the last one (i.e.\ $t$ and $t'$ have the same endpoints)
  we have to additionnally rule out the possibility that, if $s$ and $s'$
  are the endpoints of $t$, $s$ gets reduced at some point of the
  reduction while $s'$ does not: this could only happen if $s$ (or $s'$)
  was part of an $\rax/\rmcut$ reduction, but that would mean that our
  $B$-path can be extended into an $h$-path (if the axiom is at the beginning
  of the path) or the reverse of an $h$-path (if the axiom is at the end).
\end{proof}

\begin{definition}
  A pre-thread $t$ is a $B$-path of $\pi$ if:
  \begin{itemize}
    \item it is a maximal $b$-path of $\pi$,
      i.e.\ there is no $b$-path of $\pi$ which contains $t$ as an infix;
    \item it cannot be extended as an $h$-path or as the reverse of an
      $h$-path.
  \end{itemize}
\end{definition}

Intuitively, the second condition means that the path cannot be extended
by an axiom on either side, possibly after silent steps corresponding
to $\Wwait$ weights.

\restatablelemma{lem:LemmeMagique}{
  Let $\C R=\{\pi_i,r_i\}_{i\in\omega}$ be a reduction sequence and
  let $T=Tr(\C R)$ be its trace. 
  If $t$ is a $B$-path of $T$, then there is an index $i$ such that
  the endpoints of $t$ are mcut-connected in the multicut of $\pi_i$.
}

\begin{proof}
  As reductions are performed, the thread $t$ is simplified into
  \emph{residuals}. As long as these residuals remain non-empty, they are
  still $b$-paths in their respective derivations, and they keep the same
  endpoints because the only way to reduce one endpoint without the other
  is through an $\rax/\rmcut$ reduction. Moreover, the length of
  residuals only decreases. In fact, since $t$ is in the trace,
  it strictly decreases infinitely often. Thus, at some point, the two 
  endpoints of $t$ are directly mcut-connected.
\end{proof}

\begin{proof}[Proof of \cref{prop:ThreadsInTrace}]
  Let $t$ be a thread validating the branch $\beta$ in $\pi$. The visible part 
  of $t$ belongs to $\beta$, hence its belongs to the trace $T$. One needs to 
  prove that the hidden part belongs also to the trace.

  Suppose by contradiction that some
  $H=\psd{F_i}{s_i}{d_i}_{1\leq i\leq n}\in \hiddenP{t}$ leaves the trace.
  Thus there is an index $j\leq n$ such that $s_j$ is a border sequent of the 
  trace. Take the maximal such $j$. If $d_j = {\uparrow}$ then, since the
  hidden part $H$ ends in a sequent of $\beta$,
  there must be a position $k\in]j;n[$ where
  the path exits the subtree rooted in $s_j$ to re-enter $\beta$: we would
  then have $s_k = s_j$, contradicting the maximality of $j$.
  Hence, $d_j = {\downarrow}$.

  Since $H$ is a b-path, and $d_j=\downarrow$, then  there is $k>j$ such that 
  $H[j,k]$ is a $b$-path:
  keeping with the intuition that $b$-paths are well-bracketed words,
  the opening bracket at position $j$ must have a corresponding closing
  bracket at position $k$ such that the word in between is well-bracketed,
  thus a $b$-path.
  By maximality of $j$, $H[j,k]$ is in the trace.

  Let $k$ be maximal with this property.   Note that $H[j,k]$ is a maximal 
  $b$-path in the trace, since $s_j$ is a border sequent and $k$
  is chosen to be maximal. Let us show that $H[j,k]$ cannot be extended to an 
  $\epsilon$-path of the trace.  
  Suppose by contradiction that this is the case. Since $s_j$ is a border 
  sequent, $H[j,k]$ can be extended only on the right. Thus there is $l>k$ 
  such that $H[j,l]$ is an $\epsilon$-path. Since $H[j,l]$ starts with a 
  downward direction, and since $\epsilon$-paths have the same direction in 
  their endpoints, we have that  $d_l={\downarrow}$. 
  Thus by the same reasoning as before, there is $m>l$ such that $H[l,m]$ is a 
  $b$-path. Thus $H[j,m]$ is a $b$-path, which contradicts the maximality of 
  $k$. 

  We can now apply \cref{lem:LemmeMagique} to $H[j,k]$:
  at some point of the reduction, the sequents $s_j$ and $s_k$
  are mcut-connected through the occurrences $F_j$ and $F_k$. Note that 
  $s_{k+1}$ belongs to $T$ (by maximality of $j$)
  and that $F_{k+1}$ is a strict sub-occurrence of $F_k$ (otherwise, this 
  would contradict the maximality of $k$).
  \todoDB{Cela ne marche que si les b-paths peuvent finir par des $\Wwait$,
  ce qui n'est pas le cas\ldots}
  Since $F_{k+1}$ is in the trace,   
  this means that $F_j$ has been reduced which is not possible since $s_j$ is 
  in the border of the trace. 
\end{proof}

In order to view a trace as a \muMLLim proof, we shall devise a way to
justify its border sequents. Intuitively, we will identify each distinguished
occurrence with the true constant $\top$. To achieve this formally,
we introduce a \emph{truncated} proof system in the next section.
Before that, let us mention a key technical result, which builds on the
intuition that $b$-paths are simplified during cut elimination.

\subsection{Truncated proof system} \label{subsec:Truncated ProofSystem}

The truncated proof system builds on a truncation that forces
a semantics on particular occurrences.

\begin{definition}
  A \defname{truncation} $\tau$ is a partial function from
  $\Sigma^*$ to $\{\top,\llzero\}$ such that:
  \begin{itemize}
    \item For any $\alpha \in \Sigma^*$,
      if $\alpha\in\dom(\tau)$, then $\alpha^\perp\in \dom(\tau)$ and
      $\tau(\alpha)=\tau(\alpha^\perp)^\perp$.
    \item If $\alpha\in\dom(\tau)$ then for any $\beta\in\Sigma^+$,
      $\alpha.\beta\notin\dom(\tau)$.
  \end{itemize}
\end{definition}

\begin{definition}
  Given a truncation $\tau$, the infinitary \defname{proof system \muMLLit} is 
  obtained by taking all the rules of \muMLLi together with the following rule for $\top$:
  $$
  \begin{prooftree}
  \Infer{0}[\rtop]{\vdash \Gamma, \top_\alpha}
  \end{prooftree}
  $$
   with the following proviso.
  The rules of \muMLLi only apply when the address of their principal occurrence is not in
  the domain of $\tau$, otherwise the following rule has to be applied:
$$
\begin{prooftree}
\Hypo{\vdash \tau(\alpha)_{\alpha i},\Delta}
\Infer{1}[\rt]{ \vdash \phi_\alpha, \Delta}
\end{prooftree}
\qquad
\text{if }\alpha\in \dom(\tau)
$$
The notions of thread and validity are the same as in \muMLLi.
\end{definition}

As in \cite{BaeldeDS16} we define a classical truth semantics for
our truncated proof system.
Truncated occurrences (i.e.\ whose address is in 
$\dom(\tau)$) are assigned their value under $\tau$.
The semantics of a unit $\top$ or $0$ is itself.
Then this semantics is propagated to more complex formulas inductively,
interpreting $\parr$ as disjunction, $\tensor$ as conjunction, and $\mu,\nu$ 
as least and greatest fixed points respectively.The semantics of an occurrence $F$ under a truncation $\tau$
is noted $\interp{F}$. 
We establish, in the same way as in \cite{BaeldeDS16} that \muMLLit is sound wrt.\ this semantics:

\restatableproposition{prop:soundness}{
  If $\vdash \Gamma$ is provable in $\muMLLit$, then
  $\interp{F}=\top$ for some $F\in \Gamma$.
}

\todoDB{quelles différences avec straight threads ?}
Since \muMLLi is a sub-system of \muMLLit, we obtain as a corollary that
\muMLLi is sound wrt.\ the boolean semantics.

\subsection{From traces to truncated proofs}
\label{subsec:TransformingTraceIntoProof}

\begin{definition}
  Let $\pi$ be a pre-proof.
  We define the relations $\approx_\pi$ and $\dbot_\pi$ as follows:
  \begin{itemize}
    \item $F\approx_\pi G$ if there is an $h$-path in $\pi$ from $F$ to $G$,
      or from $G$ to $F$.
    \item $F\dbot_\pi G$ if there is a $b$-path in $\pi$ between $F$ and $G$.
  \end{itemize}
\end{definition}

The relations $\dbot_\pi$ and $\approx_\pi$ are symmetric ---
note that the reverse of a $b$-path from $F$ to $G$ is a $b$-path from
$G$ to $F$.
The relation $\approx_\pi$ is reflexive.

\restatableproposition{prop:TruncationWellDefined}{
  Let $\C R$ be a reduction. The trace of $\C R$ cannot contain
  an occurrence $F$ and two distinguished occurrences $G$ and $H$
  such that $F\dbot_T G$ and $F\approx_T H$.
}

\begin{proof}
  We proceed by contradiction.
  Let $t_2$ be the $b$-path from $G$ to $F$, starting with a $\downarrow$
  direction and ending with $\uparrow$.
  Let $t_1$ be the path from $F$ to $H$. It must be an $h$-path, starting
  and ending with $\uparrow$. Indeed, the reverse of an $h$-path would reach
  $H$ with a $\downarrow$ which is absurd since $H$ is a distinguished
  occurrence of a border sequent.

  Let $t$ be the $b$-path obtained by gluing the path from $G$ to $F$
  with the path from $F$ to $H$.
  This path is a $B$-path of $T$, since its endpoints are distinguished
  formulas, so they are in the border of the trace.
  By applying \cref{lem:LemmeMagique}, there is a point in the reduction
  where the occurrences $G$ and $H$ are directly mcut-connected. 

  Since $G$ and $H$ are distinguished, they are principal occurrences of the 
  rules applied to their border sequents in $\pi$. Thus, considering the
  point of the reduction where they are directly mcut-connected,
  there is an internal redex on $G$ and $H$. By fairness it is reduced,
  which contradicts the fact that they are distinguished occurrences
  of border sequents.  
\end{proof}

By \cref{prop:TruncationWellDefined} we can define the truncation
and truncated proof associated to a trace.

\begin{definition} Let $\C R$ be a reduction sequence and $T$ be its trace. The truncation $\tau$ associated with $\C R$ is defined by setting:
\begin{itemize}
\item $\tau(F)=\llzero$ if there is a distinguished occurrence $G$ such that $F\dbot_T G$,
\item $\tau(F)=\top$ if there is a distinguished occurrence $G$ such that $G\approx_T F$. 	
\end{itemize}
\end{definition}

\begin{definition}
Let $T$ be the trace of a infinite internal reduction sequence starting from $\pi$, and let $\tau$ be the truncation associated to this reduction. The \defname{truncated proof} $\pi_\tau$ is obtained from $T$ by replacing every border sequent $\vdash \phi_\alpha, \Gamma$, whose distinguished occurrence is $\phi_\alpha$ by the following derivation:
$$\begin{prooftree}
\Infer{0}[\rtop]{\vdash \top_{\alpha.i},  \Gamma}
\Infer{1}[\rt]{\vdash \phi_\alpha,  \Gamma}
\end{prooftree}$$
\end{definition}

It is now easy to establish productivity of cut elimination.
\restatableproposition{productivity}{
Any fair reduction sequence produces a \muMLLi pre-proof.
}

\begin{proof}
  By contradiction, consider a fair infinite sequence of internal multicut
  reductions starting from $\pi$. Let $\pi_\tau$ be the truncated proof of
  its trace. 
  Since no external reduction occurs,
  it means that an occurrence $F$ in the conclusion of $\pi_\tau$ can only
  be principal in an $\rax/\rmcut$ reduction of the considered sequence.
  If $\iota$ is the injection associated to the multicut after that reduction,
  the same observation holds for $\iota(F)$, and so on.
  In short, any occurrence $F' \approx_{\pi_\tau} F$
  will never be principal in a logical rule. Hence we can replace all
  these occurrences by occurrences of $\bot$.
  We thus obtain a proof of the sequent $\vdash \bot,\ldots,\bot$ which
  contradicts the soundness of \muMLLit.
\end{proof}

\subsubsection{Proof of cut elimination}

We have shown in \cref{productivity} that multicut reduction is productive.
To establish cut-elimination (\cref{th:mcutelim}), it only remains to prove that
the resulting (cut-free) pre-proof is actually a valid proof.

\begin{proof}[Proof of \cref{th:mcutelim}]
  Let $\pi$ be a \muMLLim proof of conclusion $\vdash A$,
  and $\pi'$ the cut-free pre-proof obtained by \cref{productivity},
  \ie the limit of the multicut reduction process.
  Any branch of $\pi'$ corresponds to a multicut reduction path.
  For the sake of contradiction, assume that $\pi'$ is invalid.
  It must thus have an invalid infinite branch $\beta=(s_i)_{i\in\omega}$,  corresponding
  to an infinite reduction path $\C R$.
  Let $\tau$ and $\theta$ be the associated truncation and
  truncated proof in \muMLLit.

  We set $\mathsf{Froz}(\beta)$ to be the set of occurrences of $\beta$
  which are never principal. For convenience we will use the weakening rule:
  $$\begin{prooftree}
    \Hypo{\vdash \Gamma}
    \Infer{1}[\rweak]{\vdash \Gamma,\Delta} 
  \end{prooftree}$$
  Weakening is indeed admissible as long as the derivation on which it
  is applied contains an infinite branch,
  since one can let the weakened occurrences "travel" into this infinite branch.
  Without loss of generality, we now assume that all occurrences of
  $\mathsf{Froz}(\beta)$ have been weakened away in $\beta$.

We define the truncation $\tau'$ to be the truncation obtained by extending $\tau$ as follows.
For every occurrence $F_1\parr F_2$  which is principal in $\beta$, we set, for $i\in\{1,2\}$, $\tau'(F_i)=\llzero$ if $F_i \in \mathsf{Froz}(\beta)$.

Let $\beta_{\geq i}$ be the suffix of $\beta$ starting from the $i^{th}$ element. 
If $\vdash \Gamma$ is the conclusion of $\beta_{\geq i}$, then for every  $\Delta\subseteq \Gamma$,  we define coinductively the proof ${\beta_{\geq i}}^\bot(\Delta)$ of conclusion $\vdash \Delta^\bot$ as follows. We proceed by case analysis on the rule applied to the conclusion of $\beta$.  This rule can be either a logical rule or a weakening. 
If the rule is a weakening, then it is simulated by a weakening rule.
If it is a logical rule, then let $F$ be its principal occurrence. We set  $\Delta'=\Delta\setminus\{F\}$. We have either: 
\begin{itemize}
\item $F =\sigma X. G$ where $\sigma\in\{\mu, \nu\}$.  We set $\overline{\sigma}$ to be the dual of $\sigma$, and:
$$\beta_{\geq i}^\bot(\Delta)=\begin{prooftree}
\Hypo{\beta_{\geq i+1}^\bot(\Delta', G[F/X]) }
\Infer{1}[]{\vdash \Delta'^\bot, G^\bot[F^\bot/X]}
\Infer{1}[\scriptsize ($\overline{\sigma}$)]{\vdash \Delta'^\bot,F^\bot}
\end{prooftree}$$
\item \label{def:beta-tensor} $F= G\otimes H$. Suppose wlog.\ that $G\in 
  s_{i+1}$ (\ie $H$ left the branch $\beta$). We set $\Delta''=\Delta'\cap 
  	s_{i+1}$ and:
$$\beta_{\geq i}^\bot(\Delta)=\begin{prooftree}
\Hypo{\beta_{\geq i+1}^\bot(\Delta'', G) }
\Infer{1}[]{\vdash \Delta''^\bot, G^\bot}
\Infer{1}[\rweak]{\vdash \Delta'^\bot, G^\bot, H^\bot}
\Infer{1}[\rparr]{\vdash \Delta'^\bot,F^\bot}
\end{prooftree}$$
\item $F=G\parr H$ and  $G \in \dom(\tau')$ and $H\notin \dom(\tau')$,
  or symmetrically. We set:
$$\beta_{\geq i}^\bot(\Delta)=
\begin{prooftree}

\Infer{0}[\rtp, \rtop]{\vdash  G^\bot}

\Hypo{\beta_{\geq i+1}^\bot(\Delta', H)}
\Infer{1}[]{\vdash \Delta'^\bot, H^\bot}
\Infer{2}[\rtensor]{\vdash \Delta'^\bot, F^\bot}
\end{prooftree}$$
\item $F=G\parr H$, $G \in \dom(\tau')$ and $H\in \dom(\tau')$. In this case we set:
$$\beta_{\geq i}^\bot(\Delta)=
\begin{prooftree}

\Infer{0}[\rtp, \rtop]{\vdash  G^\bot}

\Infer{0}[\rtp, \rtop]{\vdash \Delta'^\bot, H^\bot}
\Infer{2}[\rtensor]{\vdash \Delta'^\bot, F^\bot}
\end{prooftree}$$
\item $F=G\parr H$, $G \notin \dom(\tau')$ and $H\notin \dom(\tau')$. In this case we set:
$$\beta_{\geq i}^\bot(\Delta)=
\begin{prooftree}
\Hypo{\beta_{\geq i+1}^\bot(G)}
\Infer{1}[]{\vdash  G^\bot}

\Hypo{\beta_{\geq i+1}^\bot(\Delta', H)}
\Infer{1}[]{\vdash \Delta'^\bot, H^\bot}
\Infer{2}[\rtensor]{\vdash \Delta'^\bot, F^\bot}
\end{prooftree}$$
\end{itemize}

Observe now that
$\beta^\bot:=\beta_{\geq 0}^\bot$ is a proof in the truncated proof system 
$\muMLLi_{\tau'}$ of conclusion $A^\bot$. Indeed, its threads (which
are necessarily straight threads since no cuts and axioms are involved)
are the duals of the threads of the branch $\beta$, which are by hypothesis 
not valid.  

Since $\muMLLi_{\tau'}$ is sound, we have that $\interp{A^\bot}=\top$. But the 
truncated proof $\theta$ is a $\muMLLi_\tau$ proof of conclusion $A$, and again by soundness 
we have that $\interp{A}=\top$: contradiction.
\end{proof}

\subsection{Extending \muMLLi cut-elimination to the additives}\label{app:Additives}

In this appendix, we give details on the proof of cut-elimination theorem for full \muMALLi. First, we provide precise definition for the sliced proof system and the associated partial cut-reduction relation and the introduce persistent slices. We can then formulate precisely the additive validity criterion and prove the cut-elimination theorem.
The proof schema of additive cut-elimination follows the same pattern as in the multiplicative case but we check that the multiplicative result can indeed be lifted. Most of the definitions can be very straightforwardly lifted to the additive but for the soundness result some work has to be done: we show that we do not need a full soundness result but that a soundness result wrt. a specific class of derivations, called $\tau$-adapted proofs, which then allows us to prove productivity of cut-elimination and preservation of validity by fair-reduction sequences.

%


%
\hide
$  \pi_k = \begin{prooftree}
    \Hypo{}
    \Infer 1[\rax$\dagger$]{\vdash T\tikzmark{add6},T^\perp\tikzmark{add7}}
    \Infer 1[\rbot]{\vdash \bot,T\tikzmark{add5},T^\perp\tikzmark{add8}}
    \Hypo{}
    \Infer 1[\rax$\ddagger$]{\vdash T,T^\perp}
    \Infer 1[\rbot]{\vdash \bot,T,T^\perp}
    \Infer 1[\scriptsize{$\color{red}{\rmu^k}$}]{{\vdash \bot, T,T^\perp}}
    \Infer 2 [\rwith]{\vdash \bot\with\bot, T\tikzmark{add4},T^\perp\tikzmark{add9}}
    \Hypo{\vdash S,\orig{a}T\tikzmark{add10}}
    \Infer 2[\rcut]{\vdash \bot\with\bot, S,T\tikzmark{add3}}
    \Infer 1[\rmu,\rparr]{\vdash S,T\tikzmark{add2}}
    \Infer 1[\rnu]{\vdash S,T\tikzmark{add1}
          \begin{tikzpicture}[remember picture,overlay]
    \node[inner sep=0pt,outer sep=0pt] (b)  {\ensuremath{}};
    \draw [->,>=latex] (a.north east) .. controls +(20:1.2cm) and +(0:4.5cm) .. (b.east);
  \end{tikzpicture}\;}
  \end{prooftree} \begin{tikzpicture}[overlay,remember picture,-,line cap=round,line width=0.1cm]
   \draw[rounded corners, smooth=2,cyan, opacity=.25] ($(pic cs:add1)+(0cm,.1cm)$) to ($(pic cs:add2)+(-0.1cm,.1cm)$) to($(pic cs:add3)+(0,.1cm)$) to($(pic cs:add4)+(.1cm,-0.1cm)$) to($(pic cs:add5)+(0,-.1cm)$)to($(pic cs:add6)+(-0.1cm,.3cm)$)to($(pic cs:add7)+(-0.1cm,.3cm)$)to($(pic cs:add8)+(-.2cm,.1cm)$)to($(pic cs:add9)+(0,.1cm)$)to($(pic cs:add9)+(1cm,-.1cm)$)to($(pic cs:add10)+(-0.1cm,-.1cm)$)to($(pic cs:add10)+(0,.3cm)$); 
  \end{tikzpicture}
$

With $(\pi_k)_{k\geq 0}$ proofs, one faces three situations for cut-elimination:
\\
(i) from $\pi_0$ it is productive and produces a valid proof;
\\
(ii) from $\pi_1$  it is productive and produces an invalid pre-proof;
\\
(iii) from $\pi_k$, for $k\geq 2$ it is not even productive.

Indeed, in these examples, each $\pi_k$ contains exactly one infinite
branch which is supported by a thread on $T$ bouncing on the left-most
axiom and this thread is valid.
Still, this does not ensure productivity nor validity.
For instance, $\pi_1$ reduces to the following pre-proof:
$$  \begin{prooftree}
    \Hypo{\orig{a}\vdash S,T}
\Infer 1[\rbot]{\vdash \bot,S,T}
    \Hypo{\vdash S,\orig{aa}T}
    \Infer 1[\rbot]{\vdash \bot, S,T}
    \Infer 2[\rwith]{\vdash \bot\with\bot, S,T}
    \Infer 1[\rmu,\rparr]{\vdash S,T
              \begin{tikzpicture}[remember picture,overlay]
    \node[inner sep=0pt,outer sep=0pt] (bb)  {\ensuremath{}};
    \draw [->,>=latex] (aa.north east) .. controls +(10:2cm) and +(0:4.5cm) .. (bb.east);
  \end{tikzpicture}\;}
    \Infer 1[\rnu]{          \begin{tikzpicture}[remember picture,overlay]
    \node[inner sep=0pt,outer sep=0pt] (b)  {\ensuremath{}};
    \draw [->,>=latex] (a.north east) .. controls +(160:2cm) and +(180:2.5cm) .. (b.east);
  \end{tikzpicture}\;\vdash S,T}
  \end{prooftree}
  $$
which is not valid since any infinite branch which takes only finitely many times the left back-edge is invalid. 

To understand the problem, consider the first step of cut-reduction which is a $\rcut/\rwith$
commutation and copies the right-premiss of the cut ({\it ie.} the
non-wellfounded part of the proof): the resulting pre-proof contains two
infinite branches now, but only one thread to validate them.
While the left copy can be validated by the original thread, the
right copy does not contain a residual of the original thread. Of course,
one might consider a thread originated in the cut inference, but that
will not suffice to ensure validity, nor productivity, as $\pi_2$ examplifies: its right-most branch does not produce any rule.

\endhide

\subsubsection{Sliced proof system and its cut-reduction}

To solve the previous issue, we will make use of slices, originally introduced by Girard in his seminal paper and later used in the analysis of interaction and cut-elimination of linear logic in the setting of Ludics~\cite{girard01locus,Terui11} or in the design of additive proof-nets~\cite{HvG}.

%
%
%
%

  \begin{definition}[Additive slice]
    A \defname{sliced pre-proof} is a pre-proof built on a variant of $\muMALLi$, $\muSMALLi$, where the inference rule $\rwith$ has been replaced by the following two rules:
    $$\begin{prooftree}
      \Hypo{\vdash A, \Gamma}
      \Infer 1[\rwithl]
             {\vdash A\with B, \Gamma}
    \end{prooftree}
    \qquad \begin{prooftree}
      \Hypo{\vdash B, \Gamma}
      \Infer 1[\rwithr]
             {\vdash A\with B, \Gamma}
    \end{prooftree}$$
\end{definition}

 \begin{definition}[Slicing of a pre-proof]
    To a $\muMALLi$ sequent (pre-)proof, one can associate a set of slices
    by keeping, for each $\rwith$ inference, only one of its premisses and replacing the $\with$ by the corresponding inference in $\rwithl,\rwithr$. More precisely, a \defname{slice of $\pi$} is any $\muSMALLi$ derivation obtained from $\pi$ by applying corecursively one of the following two reductions (the other inferences are treated homomorphically):
$$    \scalebox{.9}{
$
    \begin{prooftree}
      \Hypo{\pi_1}
      \Infer 1{\vdash A_1, \Gamma}
      \Hypo{\pi_2}
      \Infer 1{\vdash A_2, \Gamma}
      \Infer 2[\rwith]{\vdash A_1\with A_2, \Gamma}
    \end{prooftree}
    ~\longrightarrow~
    \begin{prooftree}
      \Hypo{\pi_1}
      \Infer 1{\vdash A_1, \Gamma}
      \Infer 1[\rwithl]
             {\vdash A_1\with A_2, \Gamma}
        \end{prooftree}, \begin{prooftree}
      \Hypo{\pi_2}
      \Infer 1{\vdash A_2, \Gamma}
      \Infer 1[\rwithr]
             {\vdash A_1\with A_2, \Gamma}
        \end{prooftree}
$}$$
that is:
$$
\scalebox{.8}{$
Sl\left(\begin{prooftree}
      \Hypo{\pi_1}
      \Infer 1{\vdash A_1, \Gamma}
      \Hypo{\pi_2}
      \Infer 1{\vdash A_2, \Gamma}
      \Infer 2[\rwith]{\vdash A_1\with A_2, \Gamma}
    \end{prooftree}
\right) =
\left\{\begin{prooftree}
\Hypo{\pi'_i}
      \Infer 1{\vdash A_i, \Gamma}
      \Infer 1[\rwithi]
             {\vdash A_1\with A_2, \Gamma}
\end{prooftree},
\begin{array}{c}
  \pi'_i\in Sl(\pi_i),\\
    i\in\{1,2\}
\end{array}
\right\}
$}
$$        
  \end{definition}

\subsubsection{Cut-reductions for sliced proofs}
 
 Cut-reduction rules for slices of $\muSMALLi$ are identical to
 those for $\muMALLi$ except for the sliced additives. In this case,
 one may have a problematic situation when a $\rwithl$ shall interact
 with a $\roplusr$: cut-elimination cannot be performed.  
 Among the several ways to cope with this problem, we choose here to
%
 introduce a special inference, $\rdai$, a generalized axiom rule
 allowing to derive any sequent,
 which denotes the fact that a bad interaction occurred.

$$
\begin{prooftree}
  \Hypo{}
  \Infer 1[$\rdai$]{\vdash \Gamma}
\end{prooftree}
$$

This does not impact the technical development since this serves essentially the purpose of defining those slices which avoid the mismatch.

Considering the $\rdai$ inference, cut-reductions for slices are specified as follows\footnote{This reduction is straightforwardly extended to multicuts and some care shall be taken in treating $\rdai$, in particular any cut involving \rdai{} is reduced to \rdai itself as standard in ludics.}:

 \begin{definition}[Cut reductions for slices]
   The sliced additive principal case is reduced as follows, 
   \hide
 \noindent\scalebox{.75}{
$\begin{prooftree}
\Hypo{\pi_i}
      \Infer 1{\vdash A_i, \Gamma}
      \Infer 1[\rwithi]
             {\vdash A_1\with A_2, \Gamma}
\Hypo{\pi'_j}
      \Infer 1{\vdash A^\perp_j, \Gamma}
      \Infer 1[\roplusj]
             {\vdash A^\perp_1\oplus A^\perp_2, \Delta}
             \Infer 2 [\rcut]{\vdash \Gamma, \Delta}
    \end{prooftree}
    $
    $\rightarrow \left\{
\begin{array}{cc}
   \begin{prooftree}\Hypo{}\Infer 1[$\rdai$]{\vdash \Gamma,\Delta}\end{prooftree} & \text{if }i\neq j\\
\begin{prooftree}
\Hypo{\pi_i}
      \Infer 1{\vdash A_i, \Gamma}
\Hypo{\pi'_i}
      \Infer 1{\vdash A^\perp_i, \Delta}
             \Infer 2 [\rcut]{\vdash \Gamma, \Delta}
    \end{prooftree}  & \text{if }i = j\\      
    \end{array}
    \right.$
 }
 \endhide
if $\{A^\perp_1\with A^\perp_2, {A'}_1\oplus {A'}_2\}\in\dbot$, with $r = (\mathsf{princ}, \{A^\perp_1\with A^\perp_2, {A'}_1\oplus {A'}_2\})$.
$$\begin{prooftree}
\Hypo{\C C}
\Hypo{\pi_i}
\Infer 1
{\vdash A^\perp_i, \Gamma}
\Infer 1[\rwithi]
{\vdash A^\perp_1\with A^\perp_2, \Gamma}
\Hypo{\pi'_j}
\Infer 1{\vdash {A'}_j, \Gamma}
\Infer 1[\roplusj]
{\vdash {A'}_1\oplus {A'}_2, \Delta}
\Infer 3 [{\scriptsize\ensuremath{\mathsf{mcut(\iota, \perp\!\!\!\perp)}}}\xspace]{\vdash \Sigma}
    \end{prooftree}
    $$

 $$\underset{r}{\longrightarrow}
 \left\{
\begin{array}{cc}
   \begin{prooftree}\Hypo{}\Infer 1[$\rdai$]{\vdash \Sigma}\end{prooftree} & \text{if }i\neq j\\[12pt]
\begin{prooftree}
\Hypo{\C C}
\Hypo{\pi_i}
      \Infer 1{\vdash A^\perp_i, \Gamma}
\Hypo{\pi'_i}
      \Infer 1{\vdash A'_i, \Delta}
             \Infer 3 [{\scriptsize\ensuremath{\mathsf{mcut(\iota, \perp\!\!\!\perp')}}}\xspace]{\vdash \Sigma}
\end{prooftree}  & \text{if }i = j\\
\text{where }\dbot'=\dbot\cup\{\{A^\perp_i,{A'}_i\}\}&
    \end{array}
    \right.$$
 \end{definition}

Notions of $b$-paths and $\epsilon$-paths can be naturally extended to additive slices.

\subsubsection{Persistent slices}

To state the validity criterion for the additives, one needs to describe \defname{persistent} slices that will never produce a $\rdai$:
 
 \begin{definition}[Persistent slice]
   Given a slice $\pi$, a $\rwithi$ rule of principal formula $A_1\with A_2$
   occurring in $\pi$ is said to be
   \defname{well-sliced} if
   no $b$-path starting down from the $A_1\with A_2$ occurrence of
   this sequent ends in a
   $A^\perp_1\oplus A^\perp_2$ which principal formula of a $\roplusj$ inference with $i\neq j$.
   A slice is \defname{persistent} if all its $\rwithi$ occurrences are well-sliced. \end{definition}

\begin{lemma}
In a persistent slice, $\rwithi$ rules are characterized by the following property: (i) either there exists a $b$-path starting in $A_i$, (ii) or there is a maximal pre-thread $t$ starting from $A_i$ such that $\weight t$ is prefix of a word in $\C{B}$ ends in the conclusion sequent or in the conclusion of a $\top$ rule (iii) or no such maximal pre-thread $t$ (starting from $A_i$ such that $\weight{t}$ is prefix of a word in $\C{B}$) exists and they mutually extend into an infinite pre-thead.
  \end{lemma}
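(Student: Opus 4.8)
The plan is to read off the trichotomy from the deterministic evolution of a single pre-thread. Fix the $\rwithi$ rule, with principal occurrence $A_1\with A_2$ and premise $\vdash A_i,\Gamma$, and consider the pre-thread that starts at $A_i$ going downwards. Its first weight is forced to be $\bar{x}$ with $x=\Al$ when $i=1$ and $x=\Ar$ when $i=2$, so $\Athread$ pushes the side-$i$ symbol onto the constraint stack. From there the continuation is entirely determined as long as the weight stays a prefix of a word in $\C{B}$: a downward step is imposed by the unique rule sitting below the current sequent (every non-cut occurrence is preserved downward, every cut occurrence forces a $\Ccut$ bounce), while an upward step, a silent step, or a bounce is imposed by the shape of the proof together with the topmost stack symbol (in state $\uparrow$ with a non-empty stack, $\Athread$ can only pop the matching symbol). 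By determinism there is therefore a unique maximal pre-thread $t$ issued from $A_i$ with $\weight{t}$ a prefix of a word in $\C{B}$, and the three clauses (i), (ii), (iii) are precisely its three possible fates.

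First I would dispatch the easy fates. If $t$ is finite with $\weight{t}\in\C{B}$, then $t$ is by definition a $b$-path starting in $A_i$ (its endpoints are automatically $A_i$ and $A_i^\perp$, since $b$-path endpoints are negations of one another), which is clause (i). If no finite maximal $t$ exists, then every finite prefix of the run admits a one-step extension, and these finite approximants, being prefixes of one another by determinism, glue into a single infinite pre-thread; this is clause (iii). The remaining case is $t$ finite with $\weight{t}$ a \emph{proper} prefix of a word in $\C{B}$: here $t$ cannot be extended while remaining a prefix of $\C{B}$, and I would show that it can only stop at the conclusion sequent of the proof (reached going downwards, with no rule below to continue into) or at a premiseless logical leaf reached going upwards, namely a $\rtop$ rule, exactly as in clause (ii).

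The crux, and the main obstacle, is to rule out the only other way the upward phase could block, namely a sliced-additive mismatch, and this is exactly where \emph{persistence} is used. Suppose the run reached, going up, an occurrence $A^\perp_1\oplus A^\perp_2$ that is principal for a rule $\roplusj$ while the constraint stack demanded to descend into side $i$. The segment of $t$ running from the occurrence $A_1\with A_2$ up to this $\oplus$-occurrence is balanced, starts downwards and ends upwards, hence its weight lies in $\C{B}$, so it is a genuine $b$-path starting down from $A_1\with A_2$ and ending in $A^\perp_1\oplus A^\perp_2$; a block means $j\neq i$, which directly contradicts well-slicedness of the $\rwithi$ rule. The same witnessing argument, now applied to any \emph{interior} $\with$-occurrence traversed by $t$, shows that no mismatch can occur anywhere along the run: such a mismatch would exhibit a $b$-path between that $\with$-occurrence and a wrongly-sliced dual $\oplus$-occurrence, contradicting the fact that in a persistent slice every $\rwithi$ occurrence is well-sliced. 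Conversely, when the run does complete the inner block and reach the dual $\oplus$, persistence guarantees the pop succeeds, so $t$ closes into a full $b$-path and we are back in clause (i).

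It then remains to record some routine book-keeping, which I do not expect to be difficult. Downward runs are finite because the tree has finite depth between two consecutive bounces, so the only source of an infinite $t$ is infinitely many bounces, matching clause (iii). Axioms are met going up only with an empty constraint stack, since the symbols $\Aax$ glue balanced blocks of $\C{B}$ and hence occur at empty-stack positions; consequently an axiom always triggers a bounce rather than a blocking termination, and a symmetric remark disposes of the unit leaf $\rone$, whose formula has no proper subformula into which the stack could demand to descend. Finally, the three clauses are mutually exclusive and exhaustive because they enumerate the three disjoint outcomes, completion, blocking at a boundary, and divergence, of the \emph{single} deterministic pre-thread $t$, so exactly one of them holds for each $\rwithi$ rule of the persistent slice.
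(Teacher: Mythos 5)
Your proof is correct and follows essentially the same route as the paper's: the paper's entire argument is the one-liner ``by case distinction, the fourth case being disabled by the condition of well-sliced $\rwithi$ rules,'' and your write-up is exactly that case distinction made explicit, with the excluded fourth case correctly identified as the sliced-additive mismatch ruled out by persistence. (One parenthetical slip that does not affect the argument: the constraint stack need not be empty when an axiom is reached going up --- it is only restored to its level at the start of the current $\C{B}$ block --- but since the $\Aax$ bounce is unconditional on the stack, the axiom still never blocks the thread.)
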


\begin{proof}
  By case distinction, the fourth case being disabled by the condition of well-sliced $\rwithi$ rules.
\end{proof}

In establishing the cut-elimination result, an intermediate proof system will be useful, that of partially sliced \muMALLi pre-proofs, in which both $\rwith, \rwithl$ and $\rwithr$ occur:

%
%

The reader will notice that the additive $\with$-inferences occurring in the trace of a reduction path will always be sliced inferences ($\rwithl$ or $\rwithr$).

 \begin{proposition}
   All reducts of a persistent slice are $\rdai$-free.
 \end{proposition}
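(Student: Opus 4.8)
The plan is to argue by contradiction, exploiting the fact that a $\rdai$ inference is created by exactly one reduction: the mismatched sliced-additive principal case, in which a $\rwithi$ rule of principal occurrence $A_1\with A_2$ and a $\roplusj$ rule of principal occurrence $A_1^\perp\oplus A_2^\perp$ become \emph{directly} $\dbot$-connected in the multicut with $i\neq j$. Since, once created, a $\rdai$ only ever reduces to $\rdai$, it suffices to show that, starting from a persistent slice $\pi$, this mismatched redex never arises along any reduction sequence.

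The heart of the argument is an invariant relating the multicut relation $\dbot$ of each reduct to the relation $\dbot_\pi$ (existence of a $b$-path in $\pi$). Concretely, I would prove by induction on the length of the reduction sequence that for every reduct $\pi_n$ and all premise-occurrences $F,G$ of its multicut, $F\dbot G$ in $\pi_n$ implies $F\dbot_\pi G$. Since the multicut performs no renaming (the option chosen in \cref{app:multiCutRule}), the occurrences $F,G$ are genuine occurrences of $\pi$ with the same addresses, so $\dbot_\pi$ is meaningful. The base case is immediate: the initial multicut has empty $\dbot$.

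For the inductive step I would inspect each reduction rule and exhibit the required $b$-path using the closure clauses of $\C{B}$. A $\rmcut/\rcut$ merge adds $F\dbot' F^\perp$ for the two premises of a cut, witnessed by the length-one $b$-path of weight $\Ccut$. A $\tensor/\parr$ (resp.\ $\mu/\nu$) principal reduction replaces a connection on a compound occurrence by connections on its immediate sub-occurrences (resp.\ unfoldings): given a $b$-path of weight in $\C{B}$ for the old connection, prefixing the step down into the compound and suffixing the step back up yields a weight $\bar{x}\,\C{B}\,x$ with $x\in\{\Al,\Ar\}$ (resp.\ $x=\Ai$), again in $\C{B}$. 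A $\rmcut/\rax$ reduction reroutes every $\{F,G\}\in\dbot$ through the axiom to $\{F'',G\}$; gluing the reverse of the old $b$-path from $F''$ up to $F'^\perp$, the axiom bounce of weight $\Aax$, and the old $b$-path from $F$ down to $G$ produces a weight $\C{B}\,\Aax\,\C{B}\in\C{B}$ (using that reverses of $b$-paths are $b$-paths). All remaining reductions are either external, which do not enlarge $\dbot$, or leave $\dbot$ unchanged, so the invariant is preserved.

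With the invariant in hand the conclusion is immediate: if the mismatched redex occurred in some reduct $\pi_n$, then $A_1\with A_2\dbot A_1^\perp\oplus A_2^\perp$ there, so there is a $b$-path in $\pi$ between them. Since every word of $\C{B}$ starts downwards and ends upwards, this $b$-path starts down from the $\rwithi$-principal occurrence $A_1\with A_2$ and ends at the $\roplusj$-principal occurrence $A_1^\perp\oplus A_2^\perp$ with $i\neq j$, exactly the configuration forbidden by well-slicedness; this contradicts persistence of $\pi$. Hence no reduct contains $\rdai$. \emph{The main obstacle} is establishing the invariant robustly — in particular, checking in the $\rmcut/\rax$ case that the reindexing of $\dbot$ glues the two residual $b$-paths into a single well-bracketed word of $\C{B}$, which is precisely where the bracketing discipline of weights is used; this step is the additive counterpart of \cref{lem:LemmeMagique}.
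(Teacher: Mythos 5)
Your argument is correct, but it is organized differently from the paper's. The paper's proof is a two-line observation: (i) every cut-reduction step sends a persistent slice to a persistent slice, and (ii) a step that creates a $\rdai$ witnesses non-persistence of the slice being reduced; together these propagate persistence down the whole reduction sequence and rule out $\rdai$ at every stage. You instead avoid reasoning about persistence of the intermediate reducts altogether: your invariant pulls the multicut relation $\dbot$ of \emph{every} reduct back to the relation $\dbot_\pi$ (existence of a $b$-path) in the \emph{original} slice, so that a mismatched $\rwithi$/$\roplusj$ redex anywhere along the sequence immediately contradicts well-slicedness in $\pi$ itself. This buys you something the paper leaves implicit: claim (i) of the paper is itself nontrivial (persistence quantifies over all $b$-paths of the reduct, so preserving it requires relating $b$-paths of the reduct to $b$-paths of the redex), whereas your invariant only tracks the finitely many direct $\dbot$-links of the multicut and can be checked rule by rule against the grammar $\C{B}$. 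The rule-by-rule verification is sound: the merge case gives the one-letter word $\Ccut$, the multiplicative and fixed-point principal cases give $\bar{x}\,\C{B}\,x$, and the axiom case glues two residual $b$-paths into $\C{B}\,\Aax\,\C{B}$ using symmetry of $\dbot_\pi$; the final appeal to the fact that words of $\C{B}$ start downwards and end upwards correctly matches the orientation required by the definition of a well-sliced $\rwithi$.

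One small omission: your closing sentence claims that all reductions other than the four you analyse either shrink $\dbot$ or leave it unchanged, but the \emph{matched} sliced-additive principal case ($\rwithi$ against $\roplusi$ with $i=j$) also extends $\dbot$ with the pair of immediate sub-occurrences $\{A^\perp_i, A'_i\}$. It is handled verbatim like your $\tensor/\parr$ case (weight $\bar{x}\,\C{B}\,x$ with $x\in\{\Al,\Ar\}$), so nothing breaks, but since the whole proposition lives in the sliced system this case should be listed explicitly.
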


 \begin{proof}
The property relies on the simple observation that (i) any cut-redcution step from a persistent slice results in a persistent slice and (ii) if there is a reduction step from slice $S$ which creates a $\rdai$ rule, then $S$ is non-persistent.
 \end{proof}

 
%
%
 
\begin{proposition}[Pull-back property]
  If $\pi \rightarrow^* \pi'$ (resp. $\pi \rightarrow^\omega \pi'$) and $S'$ is a slice of $\pi'$, then there is a
  slice $S$ of $\pi$ such that $S\rightarrow^* S'$ (resp. $S \rightarrow^\omega S'$). 
\end{proposition}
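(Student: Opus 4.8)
The plan is to reduce the statement to a single reduction step, then lift the result first to finite reductions by composition and finally to $\omega$-reductions by a limit argument. For the single-step case I would prove: \emph{if $\pi \rightarrow \pi'$ by one cut-reduction and $S' \in Sl(\pi')$, then there is $S \in Sl(\pi)$ with $S \rightarrow^* S'$, the reduction consisting of at most one sliced step mirroring $\pi \rightarrow \pi'$}. Since cut-reductions are performed bottom-most, the redex always lies below every $\rwith$ inference, so the $\rwith$-nodes above the redex are unaffected. The case analysis then splits as follows. If the reduction does not act on a $\rwith$ (all multiplicative and fixed-point commutations, the $\rmcut/\rax$ and $\rmcut/\rcut$ merge and the unit rules, and the $\rotimes/\rmcut$ external rule, which merely partitions $\C C$ into disjoint multicuts rather than duplicating it), the $\rwith$-occurrences of $\pi$ and $\pi'$ are in bijection; transporting the choices recorded by $S'$ to $\pi$ yields $S \in Sl(\pi)$, and the very same rule, read on slices, gives $S \rightarrow S'$.

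The two interesting cases are those touching $\rwith$. For the $\rwith/\rmcut$ commutation, which duplicates the context $\C C$, the slice $S'$ has chosen one side of the bottom $\rwith$, say via $\rwithl$, and therefore retains a single sliced copy $\C C'$ of $\C C$; I would pull this back to the slice $S$ of $\pi$ obtained by taking $\rwithl$ at that node and slicing the unique copy of $\C C$ as $\C C'$, so that the sliced $\rwithl/\rmcut$ commutation sends $S$ to exactly $S'$. For the additive principal case, where a $\rwith$ meets a $\roplusj$, I would choose the $\rwithi$ rule with $i=j$ in $S$; the matching sliced principal reduction (the $i=j$ branch, which never produces $\rdai$) then yields $S'$. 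In every case a suitable $S$ exists with $S \rightarrow^* S'$.

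The finite case $\pi \rightarrow^* \pi'$ follows by induction on the length of the reduction, composing the single-step pull-backs: writing $\pi = \pi_0 \rightarrow \cdots \rightarrow \pi_n = \pi'$ and $S' = S_n$, I pull back $S_{j+1} \in Sl(\pi_{j+1})$ to $S_j \in Sl(\pi_j)$ with $S_j \rightarrow^* S_{j+1}$, obtaining $S_0 \in Sl(\pi)$ with $S_0 \rightarrow^* S'$. For the $\omega$-case I would exploit productive convergence: for every $n$ there is a finite prefix $\pi \rightarrow^* \rho_n$ after which the bottom $n$ layers of the derivation are stable and coincide with those of $\pi'$. Restricting $S'$ to this common depth-$n$ truncation gives a partial choice function; extending it arbitrarily above depth $n$ to a full slice $T_n \in Sl(\rho_n)$ agreeing with $S'$ up to depth $n$, and pulling $T_n$ back through the finite reduction, yields $U_n \in Sl(\pi)$ with $U_n \rightarrow^* T_n$.

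The main obstacle is to turn the sequence $(U_n)$ of slices of $\pi$ into a single slice $S$ with a converging reduction $S \rightarrow^\omega S'$. Viewing each slice as a choice function on the $\rwith$-occurrences of $\pi$, I would apply König's lemma to the finitely-branching tree of finite partial choice functions shared by infinitely many $U_n$: an infinite branch yields a total choice function, i.e. a slice $S \in Sl(\pi)$ every finite restriction of which agrees with infinitely many $U_n$. It then remains to check that the induced sliced reduction out of $S$ converges to $S'$: for each $n$, agreement of $S$ with some $U_m$ on the $\rwith$-choices relevant to the first $n$ productive layers forces the reduction of $S$ to produce the depth-$n$ truncation of $S'$, and productivity is inherited from $\pi \rightarrow^\omega \pi'$ since sliced reductions mirror the proof reductions step for step. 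The delicate point here, and the reason a naive projection of $S'$ back onto the intermediate $\pi_i$ fails, is that $\rwith$-nodes of $\pi$ need not correspond to depth in $\pi'$ because the $\rwith/\rmcut$ commutation duplicates subproofs; this is exactly what the single-step pull-back resolves by discarding the non-surviving copy, making the backward choices well defined and the König argument applicable.
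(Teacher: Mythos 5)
Your proof is correct, and for the finitary half it fills in exactly what the paper waves away as ``a well-known property of slices'': a single-step pull-back by case analysis on the reduction rules, with the two $\rwith$-sensitive cases (the $\rwith/\rmcut$ commutation and the $\rwith/\roplusj$ principal case) handled by the observation that a slice of the reduct keeps only one of the duplicated copies of the context, so the backward choice is unambiguous; then composition. Where you genuinely diverge from the paper is in the $\omega$-case. The paper's argument is a direct residual-tracing one: by strong convergence of fair reductions, every node of $S'$ is produced at some finite stage and stable thereafter, so each $\rwith$-choice of $S'$ can be traced back through a finite prefix to a unique $\rwith$-occurrence of $\pi$, and the resulting (partial, arbitrarily completed) choice function is the desired $S$ --- consistency of the backward assignment being exactly the no-conflicting-duplicates point you isolate. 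You instead build finite approximants $T_n$ agreeing with $S'$ up to depth $n$, pull each back to a slice $U_n$ of $\pi$, and extract $S$ by K\"onig's lemma on the tree of finite partial choice functions realized by infinitely many $U_n$. The compactness route buys you consistency of the limit for free, at the cost of the extra (but correct) verification that the induced reduction of $S$ converges to $S'$, which you reduce to the fact that the depth-$n$ truncation of the reduct depends only on finitely many $\rwith$-choices of $\pi$. Both arguments rest on the same two pillars --- strong convergence and the discarded-copy observation --- so this is a different formalization of the same idea rather than a different proof strategy; your version is considerably more explicit than the paper's one-paragraph sketch.
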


\begin{proof}
  In the case of the finitary reduction, this is a well-known property of slices.

 For the infinite fair reductions, it results from the fact that fair reductions are necessary strongly convergent in the sense of infinitary rewriting and therefore one can find a point to which in the reduction in which any point of the resulting slice is being produced and trace it back. The obtained slice is of course persistent since it reduces to $S'$.
\end{proof}



\subsubsection{Additive bouncing validity criterion}

Definitions~\ref{def:pre-thread} and~\ref{def:thread} of (pre-)threads directly adapt to the additives as they are not specific to the multiplicative fragment. 

%

\begin{definition}[Validity]
  A slicing is \defname{valid} if it is persistent and if it is valid in the multiplicative sense\footnote{That is, every infinite branch of the slicing is visited by a valid thread having its visible part contained in the branch.}.
A \muMALLi pre-proof $\pi$ is \defname{valid} if all its persistent slicings are valid.
  \end{definition}


\subsubsection{Additive cut-elimination theorem}

We now state the cut-elimination theorem and
give a shema of the proofs.

\begin{theorem} Fair infinite cut-reduction on \muMALLi proofs is productive and produces valid proofs.
\end{theorem}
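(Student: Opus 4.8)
The plan is to reduce the additive statement to the multiplicative cut-elimination theorem (\cref{th:mcutelim}) by arguing slice-by-slice, using the Pull-back property as the bridge between reductions of the \muMALLi pre-proof and reductions of its persistent slices. Both halves of the statement --- productivity and preservation of validity --- will be proved by contradiction: a bad infinite branch of the reduct, or a bad infinite internal reduction, is pulled back to a persistent slice on which the multiplicative argument applies essentially verbatim.

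First I would lift all the machinery of the multiplicative development to persistent slices of \muSMALLi: traces together with their border and distinguished occurrences, residuals of pre-threads, the relations $\approx_\pi$ and $\dbot_\pi$, and the truncated proof system. Since (pre-)threads, $b$-paths and $\epsilon$-paths already adapt to slices, and since a persistent slice is $\rdai$-free and remains so under reduction (by the corresponding proposition), the only additive-specific inferences occurring are the unary $\rwithl$ and $\rwithr$, which from the point of view of the weight and thread analysis behave exactly like the unary fixed-point unfoldings. Consequently \cref{prop:ThreadsInTrace}, \cref{lem:LemmeMagique}, and the trace-to-truncation construction transport with only cosmetic changes.

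The genuinely new ingredient is soundness. Full soundness of the sliced truncated system fails in the presence of $\rdai$, so instead I would isolate the class of \emph{$\tau$-adapted} proofs --- those in which every $\rdai$ inference sits on a sequent already forced true by $\tau$ --- and prove soundness only for them, by the same argument as \cref{prop:soundness}, using that $\rwithl$ and $\rwithr$ slices select semantically the relevant conjunct and that a persistent slice never actually fires $\rdai$. The truncated proof built from the trace of a reduction on a persistent slice is $\tau$-adapted, so this restricted soundness suffices. With this in hand, productivity follows as in the multiplicative case: if a fair reduction from a valid $\pi$ were not productive, some branch would eventually perform only internal reductions; the Pull-back property yields a persistent slice $S$ of $\pi$ whose reduction is equally internal and non-productive, and since $\pi$ is valid $S$ is valid in the multiplicative sense, so the trace of $S$ yields a truncated proof of a false sequent, contradicting $\tau$-adapted soundness.

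For preservation of validity, let $\pi'$ be the cut-free limit; I must show every persistent slice $S'$ of $\pi'$ is valid in the multiplicative sense. Given an infinite branch $\beta$ of such an $S'$, the Pull-back property produces a persistent slice $S$ of $\pi$ with $S \rightarrow^\omega S'$; since $\pi$ is valid, $S$ is valid, and then the dual-proof construction from the proof of \cref{th:mcutelim} --- building $\beta^\bot$ as a $\tau'$-adapted truncated slice and contradicting soundness --- shows $\beta$ is supported by a valid bouncing thread, whence $\pi'$ is valid. The main obstacle I expect is precisely the compatibility between slicing and truncation: one must verify that pulling a reduction back to a single persistent slice is coherent with the truncation $\tau$ read off the trace, so that the resulting truncated slice is genuinely $\tau$-adapted and stays $\rdai$-free along the entire reduction, and that the dual branch proof $\beta^\bot$ constructed from an invalid slice branch is itself a legitimate $\tau'$-adapted slice. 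Checking that the additive choices recorded in the slice never clash with the $\top$/$\llzero$ assignment of the truncation is where the real work lies; everything else is a faithful transcription of the multiplicative argument through the Pull-back property.
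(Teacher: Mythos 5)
Your proposal is correct and follows essentially the same route as the paper: lift the multiplicative trace/truncation machinery to persistent slices, prove soundness only for a restricted class of $\tau$-adapted truncated derivations, get productivity by contradiction from a slice containing the trace, and get validity via the pull-back property and the dual-branch construction $\beta^\perp$. The only mechanical differences are that the paper obtains the persistent slice for productivity directly from the observation that the trace meets at most one premiss of each $\rwith$ (rather than via pull-back, which presupposes a convergent reduction), and its $\tau$-adaptation condition is phrased as each $\rwith$ having a premiss address in $\tau^{-1}(\top)$ together with a persistency condition on $b$-paths, rather than as a condition on $\rdai$ occurrences.
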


\paragraph{Schema of the proof}
For cut-elimination, the proof goes by contradiction: assuming that we have a non productive fair cut-elimination, we may assume that it consists only of internal reduction steps and the trace of this cut-elimination is actually a slice with open premisses. It is actually contained in a persistent slice of $\pi$. As a consequence, there is an infinite branch of $\pi$ which is entirely visited  by the trace and this branch is visited by a thread thanks to additive validity. 
By adapting the truncated proof system to the additives and proving a restricted soundness result, we transform the valid persistent slide in a truncated derivation of the empty sequent by pruning the conclusion formulas which are never principal in the trace, from which results the contradiction.

For proving validity, it goes also by contradiction: assume the produced cut-free proof of $\vdash F$  $\pi'$ contains a persistent slice $S'$ containing an invalid branch $\beta'$.
By the pull-back property, we find a persistent slice $S$ of $\pi$ reducing to $S'$ which is valid by assumption.
From the invalid branch $\beta'$ one can build a cut-free proof $\beta'^\perp$ of $\vdash F^\perp$ together with a truncation $\tau'$ ensuring that $F$ is interpreted a $0$ while validity of S and adaptation wrt. $\tau'$ ensures that $F$ is interpreted as $\top$, a contradiction.

We now establish productivity of \muMALLi{} cut-elimination:

\begin{theorem} Fair reduction sequence of $\muMALLi$ are productive.
\end{theorem}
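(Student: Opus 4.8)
The plan is to mirror the multiplicative productivity argument (\cref{productivity}), lifting each of its ingredients to the sliced setting. First I would argue that a non-productive fair reduction sequence may be assumed, without loss of generality, to consist only of internal reductions: by fairness, if external (productive) reductions fired infinitely often on the progressing multicut the limit would be cut-free and productivity would hold; hence past some point only internal reductions occur, and I restrict attention to that suffix, which again starts from a valid \muMALLi proof with a single multicut at its root. As in the multiplicative case, I argue by contradiction from such an infinite internal reduction sequence.

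Next I would form the trace $T$ of this internal sequence. Just as before, $T$ is a subtree of $\pi$ with open leaves (the border sequents), and each of its nodes is a well-formed inference. The observation specific to the additives is that every $\rwith$ inference encountered along the reduction path has already been resolved into one of its sliced variants $\rwithl,\rwithr$; hence $T$ is in fact a \emph{slice} with open premisses, containing no $\rdai$ rule by the persistence results. Using the Pull-back property, I would then exhibit a persistent slice $S$ of $\pi$ reducing along the same path and containing $T$; by the definition of additive validity, $S$ is valid in the multiplicative sense, so every infinite branch of $S$---and in particular every infinite branch of $T$---is visited by a valid thread whose visible part lies on that branch.

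The third block is to transport the multiplicative lemmas to slices. I would re-establish the analogue of \cref{prop:ThreadsInTrace} (the bouncing threads validating a branch of the trace remain inside the trace), which rests only on the $B$-path machinery of \cref{lem:LemmeMagique} and is insensitive to the additive connectives once the trace is a slice. Then I would adapt the truncated proof system to \muSMALLi and define a truncation $\tau$ by exactly the same recipe as in the multiplicative case ($\tau(F)=\llzero$ when $F$ is $\dbot_T$-connected to a distinguished occurrence, $\tau(F)=\top$ when it is $\approx_T$-connected), relying on the well-definedness statement analogous to \cref{prop:TruncationWellDefined}, whose proof again invokes only $B$-paths and fairness and so carries over unchanged.

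The contradiction is then obtained as in \cref{productivity}: the conclusion occurrences that are never principal may be replaced by $\bot$, yielding a truncated derivation of $\vdash\bot,\dots,\bot$. Here lies the main obstacle and the genuinely new work. I cannot appeal to full soundness of the truncated system, since the slice $S$ need not be a full \muMALLi proof and soundness of the sliced calculus fails for arbitrary slices. Instead I would prove a \emph{restricted} soundness result, namely soundness of the truncated calculus with respect to the class of $\tau$-\emph{adapted} proofs---derivations whose every $\rwithi$ choice is coherent with $\tau$ in the sense made precise by the characterization of well-sliced $\rwithi$ rules---and then verify that the truncated trace belongs to this class, using persistence of $S$ together with the threads-in-trace property. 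Granting this restricted soundness, the derivation of the all-$\bot$ sequent is impossible, which contradicts the existence of the non-productive fair sequence and establishes productivity of \muMALLi cut-elimination.
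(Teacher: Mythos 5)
Your proposal follows essentially the same route as the paper: reduce to an all-internal fair sequence, observe that the trace is a slice with open premisses contained in a persistent (hence valid) slice of $\pi$, lift the multiplicative trace/threads-in-trace/truncation machinery to slices, and conclude not via full soundness but via soundness restricted to $\tau$-adapted truncated derivations, checking that the truncated trace is $\tau$-adapted. The only minor divergence is that you invoke the pull-back property to obtain the enclosing persistent slice, whereas the paper constructs it directly by extending the partial slicing determined by the trace; both justifications are at the same level of detail and buy the same thing.
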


To do so, first notice that the following notions of appendix~\ref{app:muMLLCutElimThm} can be straightforwardly adapted to the additive case (or to additive slices):
\begin{itemize}
\item Definition~\ref{def:residual} and Lemma~\ref{lem:LemmeMagique} adapt without change to {\it persistent slices} as it is not specific to the multiplicative case (but persistency is needed).
\item 
  Proposition~\ref{prop:ThreadsInTrace} applies to the trace of a persistent slice $\pi$.
\item Missing and unjustified sequents  can be extended to reduction paths of $\muMALLi$ pre-proofs (non-sliced) for the $\rwith$ connective as done already in \cite{BaeldeDS16}.
\item While truncations need no adaptation, the truncated semantics shall be adapted to the additives by adding the following clauses:\\
  $$      \scalebox{1}{$
  \begin{array}{l}
      \interp{(\phi\with \psi)_\alpha}^{\C{E}}=
   \interp{\phi_{\alpha\Al}}^{\C{E}}\wedge\interp{\psi_{\alpha\Ar}}^{\C{E}}, ~
   \\[12pt]
   \interp{(\phi\oplus \psi)_\alpha}^{\C{E}}=
   \interp{\phi_{\alpha\Al}}^{\C{E}}\vee\interp{\psi_{\alpha\Ar}}^{\C{E}}
   \end{array}$}
  $$
\item The truncated proof system is extended in the most natural way (as in\cite{BaeldeDS16}).
\item Truncation induced by a reduction path is lifted to the additive case and it is well-defined since the additive inference would simply add a tricky  case of {\it missing} sequent for a $\with$ premiss erased when reducing a $\rcut/\rwith$ cut but there cannot be an $\epsilon$-path in this case. Therefore the only case to treat is that of distinguished occurrences of unjustified sequents of type 1 which works as for the multiplicative.
\end{itemize}

As for adapting soundness (Prop.~\ref{prop:soundness}), we actually
  do not need the full soundness result but only soundness wrt. a class of derivations that we introduce now:
  \begin{definition}
    Given a truncation $\tau$, a $\tau$-adapted $\muMALLit$ derivation $\pi$ is a $\muMALLit$ pre-proof such that
    (i) for all $(\phi\with\psi)_\alpha$ occurring in $\pi$, either $\alpha\in Dom(\tau)$ or $\{\alpha \Al,\alpha \Ar\}\cup \tau^{-1}(\top)\neq\emptyset$.
    (ii) given a $\rwith$ occurring in $\pi$ of conclusion sequent $s$ and principal formula $(\phi_1\with \phi_2)_\alpha$, if $\alpha \Al \not\in Dom(\tau)$ (resp. $\alpha \Ar \not\in Dom(\tau)$) there is
    no $b$-path starting down from the $(\phi_1\with \phi_2)_\alpha$ occurrence of $s$ ending in a
   $(\phi^\perp_1\oplus \phi^\perp_2)_\beta$ which principal formula of a $\roplusj$ inference with $i\neq j$.
  \end{definition}

  We will use soundness for $\tau$-adapted proofs:

  \begin{proposition}\label{prop:soundnesstauadapted}
    Given a truncation $\tau$ and a valid $\tau$-adapted $\muMALLit$ derivation $\pi$ of conclusion $\vdash \Gamma$, there exists a formula $F\in\Gamma$ such that $\interp{F}=\top$.
    \end{proposition}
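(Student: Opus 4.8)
The plan is to argue by contradiction, following the validity-based soundness scheme of \cite{BaeldeDS16} but carried out inside the truncated system and refined to accommodate both the additives and the bouncing threads. So suppose $\pi$ is a valid $\tau$-adapted $\muMALLit$ derivation of $\vdash\Gamma$ and that, contrary to the claim, $\interp{F}=\llzero$ for every $F\in\Gamma$. First I would build an infinite \emph{false branch} $\beta=(s_i)_i$ of $\pi$, meaning one along which every occurrence of every sequent is interpreted as $\llzero$. This is obtained by reading each inference bottom-up using the local soundness of the rules: whenever the conclusion is false, some premiss is false. For the single-premiss logical rules ($\rparr,\rmu,\rnu,\rbot$) and the truncation rule $\rt$ this premiss is forced; for $\rtensor$ and $\rcut$ one selects a false premiss; for $\roplusi$ the premiss stays false since a false $\phi\oplus\psi$ has both disjuncts false. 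The only genuinely additive branching is $\rwith$, and here condition (i) of $\tau$-adaptedness is what makes the construction deterministic: either the principal address $\alpha$ lies in $\dom(\tau)$ (so the occurrence is treated by $\rt$, with no branching), or one conjunct-address is mapped to $\top$ by $\tau$, so that, say, $\interp{(\phi\with\psi)_\alpha}=\top\wedge\interp{\psi_{\alpha\Ar}}=\interp{\psi_{\alpha\Ar}}$ and the false branch must descend into the non-truncated premiss. Since occurrences of $\top$ and $\llone$ and conclusions of axioms are never false, the construction never stalls and $\beta$ is infinite.

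Second, I would promote $\beta$ to an infinite branch of a \emph{persistent} slice. The deterministic choices made at each $\rwith$ single out a slice $S$ of $\pi$ whose distinguished infinite branch is $\beta$; condition (ii) of $\tau$-adaptedness forbids exactly the $b$-path from a $\with$ occurrence to a mismatched $\oplus$ that could make $S$ non-persistent, so $S$ is persistent. Validity of $\pi$ then supplies a valid thread $t$ whose visible part is contained in $\beta$ and whose minimal formula is a $\nu$-formula $\nu X.G$. Along its visible part $t$ traces occurrences linked by the immediate-subformula/unfolding relation, all false because they lie in $\beta$; its hidden parts are $h$-paths that bounce on axioms and cuts and whose two endpoints are structurally equivalent.

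Third comes the ordinal descent, which is where I expect the real work to lie. Using that the truth semantics reads $\nu X.G$ as a greatest fixed point, I would assign to each false occurrence of the minimal formula its approximant rank, i.e.\ the least $\lambda$ for which the $\lambda$-th unfolding of $\nu X.G$ is already false. Each full turn of $t$ around its minimal formula passes through exactly one genuine $\nu$-unfolding, which strictly decreases this rank, while the intermediate immediate-subformula steps never increase it. The hard part will be the hidden parts: there the thread leaves $\beta$, bounces to a dual occurrence through an axiom and back through a cut, and returns on a structurally equivalent occurrence, so one must certify that the witnessing ordinal is transported unchanged across each bounce. The point is that the interpretation of a closed formula depends only on its structure and, via truncated subaddresses, on $\tau$, whose coherence assigns dual values to dual addresses; together with condition (ii), which rules out the only bounce configuration (a $\with$ against a mismatched $\oplus$) that could alter the value, this forces the rank at the two endpoints of each hidden part to coincide. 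Hence the rank is transported faithfully across bounces and strictly decreases once per cycle, producing an infinite strictly decreasing sequence of ordinals, the desired contradiction. It is precisely to tame this bouncing phenomenon that the weaker, $\tau$-adapted soundness is isolated: full soundness of $\muMALLit$ with unsliced additives and bouncing threads would fail, whereas the two $\tau$-adaptedness conditions pin down the $\with$-choices and forbid $\with$/$\oplus$ mismatches so that the descent goes through.
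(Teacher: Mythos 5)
Your proposal follows essentially the same route as the paper: argue by contradiction, build an infinite false branch where the $\rwith$ choice is forced by condition (i) of $\tau$-adaptedness (one conjunct address being sent to $\top$ leaves only the other premiss available to carry falsity), observe that this branch lives in the persistent slice singled out by $\tau$ so that validity supplies a supporting thread, and conclude by the decreasing-ordinal argument inherited from the multiplicative soundness proof. Your extra care about transporting the approximant rank across the hidden (bouncing) parts of the thread is a point the paper leaves implicit in "the multiplicative soundness argument can be carried over," but it does not change the argument's structure.
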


  \begin{proof}
    The soundness proof for $\muMLLit$ of proposition~\ref{prop:soundness} can be extended to this setting:
    first notice that the relation $\leq$ on pointed sequents that is used to transfer the marking through $\epsilon$-path can be extended and we will use it in the following consistantly with the underlying slice $\tau$-adaptation suggests.

    Construction of $(s_i)$ shall now treat the additive inferences: (i) for the $\roplus$ rule, nothing is to be changed since the rule is unary: $s_{i+1}$ is the premiss of $s_i$; (ii) for the $\rwith$ inference on $s_i$ of principal occurrence $G=H\with K$, we reason as in~\cite{BaeldeDS16}: since $[f_i(G)]=\llzero$ and $f_i(G)$ is of the form $H_m\with K_m$ where $H_m$ and $K_m$ are marking of $H$ and $K$ respectively, then either $[H_m]=\llzero$ or $[K_m]=\llzero$.
    Moreover, by $\tau$-adaptation, we know that the address of one of $H,K$ is in $\tau^{-1}(\top)$ and therefore we necessarily choose the other disjunct suppose w.l.o.g. that $[H_m]=\llzero$. We set $s_{i+1}$ to be the
    premiss of $s_i$ that contains $H$.

    The definition of the sequence of markings $(f_i)$ is trivially extended (more precisely the clauses for the $\with$ and $\oplus$ are those used in~\cite{BaeldeDS16}.

The multiplicative soundness argument can be carried over in this setting since by $\tau$-adaptation, the branch we have built is part of the persistent valid slice induced by $\tau$ (that is the purpose for the $\tau$-adaptation requirement) and we therefore have a thread for which the decreasing sequence of ordinal can be applied concluding soundness for those derivations. 
  \end{proof}

  We can finally establish productivity of cut-elimination:

\begin{proof}[sketch] 
  Let $\pi$ be a $\muMALLi$ valid proof.
By contradiction, assume that $\pi$ has a  fair infinite sequence of internal reductions.
{\it Wlog.} we can assume that all reductions steps from $\pi$ are internal.

For each $\rwith$ inference of $\pi$, the trace of this cut-elimination contains at most one premiss {\it ie.} it is contained in a slice: one can actually find a persistent slice $S$ which contains the trace.

The previous remark ensures that for each $\with$ formula principal in the trace, the truncation $\tau$ of the reduction path contains one of its subformulas in its domain and that the truncated proof $\pi_\tau$ associated with the trace is $\tau$-adapted.

Since the reduction contains only internal reductions, the conclusion formulas of $\pi_\tau$ are never principal in the $\pi_\tau$ and therefore we can erase them resulting in $\pi'_\tau$ which is a $\tau$-adapted $\muMALLit$ valid derivation of the empty sequent which cannot be by soundness (prop~\ref{prop:soundnesstauadapted}). 
\end{proof}

\hide
- en effet, une tranche partielle dont tous les avec tranches sont dans des b-path tranches contient une tranche persistente.

The existence of such a persistent slice is done in the following way:
We shall build (possibly by transfinite induction) a strictly $\leq^{\mathsf sl}$-decreasing sequence $(S_i)_{i\in\lambda}$ of partial sliced such that all sliced additive are persistent. $S_0$ is defined to be the partial slice formed of the trace together with the subproofs rooted in the leaves of the trace (in particular all with rules present in the trace are sliced while all those coming from the upper part of the proof are not). Assuming that $S_i$ is built, $S_{i+1}$ is built from $S_i$ by choosing some bottomost persistently sliceable inference of $S_i$ and slicing it adequately wrt the persistency condition. To do so, we make sure that if no such a persistently sliceable $\rwith$ can be found then $S_i$ is actually a slice and the construction is over.
Indeed, assuming that there is at least one (non sliced) $\rwith$ inference
in $S_i$, let us consider the bottommost such inferences. We assert that some
of them is persistently sliceable. Indeed, given a bottommost occurrence of a $\rwith$ (of principal formula $A\with B$) which is not persistently sliceable, there is a pre-thread $t$ starting down $A\with B$, such that $\weight t$ is prefix of a word in $\C B$ and $t$ ends in a (bottommost) non-sliced $\rwith$ rule.
If no bottommost $\rwith$ rule were persistently sliceable, the previous observation would result in the existence of a cycle of $A_i\with B_i$ formulas each principal inferences of non-sliced $\rwith$ rule, such that a maximal pre-thread $t_i$ given by the observation above ends in the sequent of principal formula $A_{i+1}\with B_{i+1}$, with $A_1\with B_1 = A_n\with B_n$. Such a cycle is incompatible with the multiplicative nature of the branching rules of sliced inferences (which are all $\rtensor$ or $\rcut$).
From this contradiction we know that there is a persistently sliceable $\rwith$ inference that is used to build $S_{i+1}\subseteq S_i$.
For a limit ordinal $\lambda$, $S_\lambda = \cup_{i\in\lambda} S_i$

\begin{itemize}
\item prop that in any partial persistent slice, there exists a pers. sliceable $\rwith$ 
  \item decreasing sequence of partial persistent slices $(S_i)_{i\in\omega}$
\item show that $S= \cap_i S_i$ is a (persistent) slice and that $S \supset T$.
\item by validity, every infinite branch of S is valid, if beta in the trace, the supporting bouncing thread is also in the trace (as for multiplicative) and we conclude
\end{itemize}

Indeed, given an \rwith{} rule either it is part of the trace or not.
  In the first case, at most one of its premisses is in the trace: if exactly one of the premisses is in the trace that means that the corresponding \roplus{} (associated by the duality $\dbot$ of the multicut) reaches the premisses of the multicut and  interact through an internal step with the $\rwith$ in a way which adequate with the slice proposed by the trace. If none of its premisses are part of the trace, then we can freely pick one in such a way to be compatible with the potential $\roplus$ inference.
  that means that the corresponding $\roplus$ inference never reaches the multicut. 



  As a consequence, there is an infinite branch of $\pi$ which is entirely visited  by the trace which contradict non-productivity by trivially adapting the multiplicative reasoning in presence of sliced additive inferences.

  Assume the proof is unproductive. 
Wlog, we may assume that no commutation rule ever occurs. 
Consider the trace of the cut reduction. It is contained in a persitent slice and by productivity of the multiplicative case extended to the sliced additive, we conclude the expected contraction.
\end{proof}
\endhide

\begin{theorem}
  Given $\pi$ a $\muMALLi$ proof, any fair mcut-reduction from $\pi$  produces
  a $\muMALLi$ proof.
\end{theorem}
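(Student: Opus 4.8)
The plan is to split the statement into \emph{productivity} and \emph{preservation of validity}. Productivity is exactly the content of the productivity theorem just established for \muMALLi{} (via persistent slices and $\tau$-adapted derivations): since $\pi$ is valid, no fair reduction can consist eventually only of internal steps, so external steps are fired on every multicut infinitely often and the reduction converges to a cut-free \muMALLi{} pre-proof $\pi'$, which I may assume (after weakening away the occurrences that are never principal, as in the multiplicative case) to have a single conclusion $\vdash F$. It therefore only remains to prove that $\pi'$ is a valid proof, which I would do by contradiction, transposing the multiplicative argument from the proof of \cref{th:mcutelim} to the sliced setting.

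So assume $\pi'$ is invalid. By the additive validity criterion, some persistent slice $S'$ of $\pi'$ has an infinite branch $\beta'$ carrying no valid thread whose visible part is contained in $\beta'$. First I would apply the Pull-back property to $\pi \rightarrow^\omega \pi'$ and to the slice $S'$, obtaining a persistent slice $S$ of $\pi$ with $S \rightarrow^\omega S'$. Since $\pi$ is a valid \muMALLi{} proof, every persistent slice of $\pi$ is valid; in particular $S$ is valid.

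Next I would dualize the invalid branch. Because $\pi'$ is cut-free, no thread can bounce (a $b$-path needs a cut to turn around), so all threads along $\beta'$ are straight and $\beta'$ carries no valid straight thread. Following the construction of $\beta^\perp$ from the proof of \cref{th:mcutelim}, I would build a cut-free derivation $\beta'^\perp$ of $\vdash F^\perp$ together with a truncation $\tau'$, extending the multiplicative clauses by the ones dualizing $\oplus$ into $\with$ and treating the remaining inferences homomorphically. By construction the threads of $\beta'^\perp$ are the duals of the (invalid) threads of $\beta'$, hence valid, so $\beta'^\perp$ is a valid $\tau'$-adapted \muMALLit{} derivation; since $\tau'$ is coherent, soundness then forces $\interp{F^\perp}=\top$, that is $\interp{F}=\llzero$.

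Finally I would feed $\tau'$ back to $S$. The crucial verification is that the valid persistent slice $S$, read as a \muMALLit{} derivation with truncation $\tau'$, is $\tau'$-adapted in the sense of the definition preceding \cref{prop:soundnesstauadapted}: condition (i) follows from the way $\tau'$ marks the conclusion occurrence $F$ together with the occurrences linked to it by $b$- and $h$-paths, while condition (ii) is exactly guaranteed by persistency (well-slicedness) of $S$, which forbids the offending $b$-path ending in a mismatched $\oplus$. Granting this, \cref{prop:soundnesstauadapted} applied to the valid $\tau'$-adapted slice $S$ yields $\interp{F}=\top$, contradicting $\interp{F}=\llzero$, and this contradiction shows $\pi'$ is valid. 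I expect the $\tau'$-adaptation step to be the main obstacle: in the multiplicative case there are no $\with$-occurrences and the condition is vacuous, whereas here one must reconcile the sliced inferences \rwithl{}, \rwithr{} of $S$ with the full $\with$-inferences over which $\tau'$-adaptation and \cref{prop:soundnesstauadapted} are phrased, and check that every such occurrence meets the adaptation constraints, using persistency to rule out the forbidden $\roplusj$ interaction along $b$-paths.
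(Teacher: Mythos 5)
Your proposal follows essentially the same route as the paper: pull back the offending slice $S'$ of the cut-free limit to a valid persistent slice $S$ of $\pi$, dualize the invalid branch into a cut-free, valid, $\tau'$-adapted \muMALLit{} derivation of $\vdash F^\perp$ (with the truncation extended on $\oplus$/$\with$ occurrences), and derive the contradiction $\interp{F}=\top$ versus $\interp{F^\perp}=\top$ from the soundness result for $\tau'$-adapted derivations. Your identification of the $\tau'$-adaptation of $S$ (via persistency) as the delicate point is exactly where the paper also places the burden, so the argument matches.
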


\begin{proof}[Sketch]
  Let $\pi$ be a \muMALLim{} proof of conclusion $\vdash F$ and $\pi'$ the cut-free preproof resulting from the previous property.

  By contradiction, assume $\pi'$ is non-valid. That means there exists a slice $S'$ of $\pi'$ ($\pi'$ is cut-free so there is no persistency assumption applying here) and an infinite branch $\beta'$ of $S'$ such that $\beta'$ is supported by no valid thread.
  By the pull-back property, $S'$ has been built by reducing a persistent slice $S$ of $\pi$ and therefore $\beta'$ corresponds to a reduction path from $S$.
  Since we are working in a persistent additive slice, the multiplicative validity can be extended in order to extract an infinite branch $\beta$ invalid in $S$.
  The construction of the proof $\beta'^\perp$ used to obtain the contradiction is lifted to the additive case as follows:
  \begin{itemize}
  \item first one shall define truncation $\tau'$ not only by considering the occurrences of $F_1\parr F_2$ which are principal in $\beta'$, but also those of $F_1\oplus F_2$ and extend the truncation to $\tau'(F_i)=0$ if   $F_1\oplus F_2$  is the principal occurrence of a $\roplusj$ rule with $i\neq j$.
    \item then the construction of the dual of branch $\beta$,  $\beta^\perp$, is extended with the following clauses to the definition of page~\pageref{def:beta-tensor}: 

      \begin{itemize}
      \item $F= G\with H$. Suppose \emph{wlog.}\xspace that $G\in s_{i+1}$ (\ie $H$ left the branch $\beta$).
$$\beta_{\geq i}^\bot(\Delta)=\begin{prooftree}
\Hypo{\beta_{\geq i+1}^\bot(\Delta', G) }
\Infer{1}[]{\vdash \Delta'^\bot, G^\bot}
\Infer{1}[\roplusl]{\vdash \Delta'^\bot,F^\bot}
\end{prooftree}$$
\item $F=G\oplus H$ and  $G \in \dom(\tau')$ and $H\notin \dom(\tau')$. We set:
$$\beta_{\geq i}^\bot(\Delta)=
\begin{prooftree}

\Infer{0}[\rtp, \rtop]{\vdash \Delta'^\bot, G^\bot}

\Hypo{\beta_{\geq i+1}^\bot(\Delta', H)}
\Infer{1}[]{\vdash \Delta'^\bot, H^\bot}
\Infer{2}[\rwith]{\vdash \Delta'^\bot, F^\bot}
\end{prooftree}$$
        \end{itemize}
         \end{itemize}

$\beta'^\perp$ is cut-free and valid (therefore valid in the sense of \cite{BaeldeDS16}) since its threads are dual of the threads of $\beta$ (which are invalid).
  
$\beta'^\perp$ is $\tau'$-adapted (as $\beta^\perp$ is cut-free, this amounts to checking that the appropriate premiss of each $\rwith$ has its formula in the domain of $\tau'$ which is by design) and 
by proposition~\ref{prop:soundnesstauadapted}, we conclude the desired contradiction since one the one hand we have $\interp{F}= \top$ and on the other hand we have  $\interp{F^\perp}= \top$.
\end{proof}

\subsection{(Un)decidability properties}
\label{app:undecidability}

In this appendix, we prove the undecidability of the general bouncing criterion and introduce a hierarchy of decidable sub-criteria.

\subsection*{Proof of decidability for the bounded height criterion}
\label{app:decidability}

We start by detailing some structure of proofs, via the notion of \emph{shortcut}.
\begin{definition}
A \emph{shortcut} is a finite pre-thread $t=uv$ where $\weight{u}\in W^*A$ and $v$ is a b-path.
A shortcut $t$ is \emph{minimal} if no strict prefix of $t$ is a shortcut.
\end{definition}

\noindent Notice that if $t$ is a shortcut, then $t$ is an $\epsilon$-path.

We will note $(F,s)$ a pointed sequent: $s$ is a sequent of the proof, and $F\in s$.
We want to be able to follow threads where shortcuts have been removed. These threads behave like straight threads, except on cuts where they allowed to jump from the starting point of a shortcut to its end.
We will now formalize a description of such ``jumping'' threads.

Let $\Sigma_{\textit{jump}}=\{\Wwait,\Ai,\Al,\Ar,c_l,c_r\}$, where $c_l$ (resp. $c_r$) stands for left (resp. right) cut occurrences.
A word $\tau\in\Sigma_{\textit{jump}}^*$ will be called a \defname{relative address}.
If $P$ is a preproof and $(F,s)$ is a pointed sequent of $P$, then a relative address $\tau$ points to another pointed sequent $\tau@(F,s)$ in $P$. We define this by induction on $\tau$:
\begin{itemize}
\item If $\tau=\epsilon$ then $\tau@(F,s)=(F,s)$.
\item If $\tau=\Wwait\tau'$ then $\tau@(F,s)=\tau'@(F,s')$, where $s'$ is the premiss of $s$ containing $F$.
\item If $\tau=\Ai \tau'$, $F=\sigma X. G$ (for some $\sigma\in\{\mu,\nu\}$) is principal in $s$ with premiss $s'$, then $\tau@ F= \tau'@ (G[F/X],s')$.
\item If $\tau=\Al \tau'$ (resp. $\tau= \Ar\tau')$, $F=G \star H$ (for some 
$\star\in\{\parr,\tensor\}$) is principal in $s$, then $\tau@ (F,s)=\tau'@ (F',s')$ where $F'=G$ (resp. $F'=H$) and $s'$ is the premiss of $s$ containing $F'$.
\item If $\tau=c_l\tau'$ (resp. $\tau=c_r\tau'$), and the rule applied to $s$ in $P$ is a cut, then $\tau@(F,s)=\tau'@(F',s')$, where $F'$ is the occurrence introduced by the cut on the left (resp. right) premiss $s'$ 
 of this cut.
\item Otherwise, $\tau@(F,s)$ is undefined. 
\end{itemize}

\begin{lemma}
Let $P$ be a circular pre-proof. If $t$ is a minimal shortcut from $(F,s)$ to $(F',s')$, then $F\equiv F'$. Moreover, there is a relative address $\tau$ such that $(F',s')=\tau@(F,s)$. This $\tau$ is called the \emph{effect} of $t$ and noted $\effect(t)$.
For each pointed sequent $(F,s)$, there is at most one minimal shortcut starting in $(F,s)$
\end{lemma}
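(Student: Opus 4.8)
The plan is to prove the three assertions in turn, the structural equality of the endpoints being easy, uniqueness following from determinism, and the existence of the effect being the technical core.

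First I would settle $F\equiv F'$. Write $t=uv$ with $\weight{u}\in\Wwait^{*}\Aax$ and $v$ a $b$-path. Along $u$ the $\Wwait$-steps keep the carried occurrence fixed, so $u$ stays on $F$ until its last step, which is an $\Aax$-bounce; by the axiom clause of \cref{def:pre-thread} this lands on an occurrence $G$ with $G\equiv F^{\bot}$, and $v$ starts there. Now the two endpoints of any $b$-path are structurally dual — this is the remark after \cref{def:thread}, proved by a direct induction on the grammar of $\C{B}$: the base case $\Ccut$ connects the two dual cut-formulas; the production $\bar{x}\,\Wwait^{*}\C{B}\,\Wwait^{*}x$ wraps a matching $\bar x/x$ bracket around a $b$-path with dual endpoints, preserving duality; and the axiom-gluing production composes two such. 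Hence $v$ ends on $G'$ with $G'\equiv G^{\bot}\equiv F$, that is $F'\equiv F$.

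Next I would prove uniqueness (the ``at most one'' clause), which rests on the determinism of thread evolution already invoked in the decidability sketch. A minimal shortcut out of $(F,s)$ must start at $(F,s,\uparrow)$ and, since no $x$-weight is permitted inside $u$, may only take $\Wwait$-steps — each forced to the unique premiss still containing the carried occurrence — until it reaches an axiom, where the $\Aax$-bounce is forced. From that point, going down the rule below the current sequent is unique and dictates the next step ($\bar{x}$, $\Wwait$, or $\Ccut$), so the descending-then-ascending trajectory is forced as well; minimality then fixes its length as the first moment the constraint stack of $\Athread$ returns to empty in state $\uparrow$. Thus the whole trajectory is determined by $(F,s)$, and there is at most one minimal shortcut starting there.

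The hard part will be the existence of the effect. I would argue by induction on the derivation of $\weight{v}\in\C{B}$, showing the net displacement of the shortcut is realised by a relative address, guided by the picture of a $b$-path as U-shapes centred on cuts and glued by axioms: each U-shape descends via $\bar{x}$-steps to a cut-formula, crosses the cut, and re-ascends via the dual $x$-steps. The crossing is recorded as $c_l$ or $c_r$ according to the premiss entered, the ascending steps are recorded verbatim as $\Al,\Ar,\Ai$, while the descending $\bar{x}$-steps and the excursion up to the axiom are \emph{not} recorded, being a round-trip back to the cut's conclusion. Concretely: $\weight{v}=\Ccut$ contributes $\Wwait^{*}c_{l/r}$ (the $\Wwait$'s climbing from the anchor to the cut conclusion); the production $\bar{x}\,\Wwait^{*}v'\,\Wwait^{*}x$ contributes $\effect(v')\,\Wwait^{*}x$; and $v_1\,\Wwait^{*}\Aax\,\Wwait^{*}v_2$ yields $\effect(v_1)$ followed by the effect of the strictly shorter sub-shortcut $\Wwait^{*}\Aax\,v_2$. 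Prefixing the $\Wwait$'s needed to climb from $s$ to the conclusion of the lowest cut crossed gives $\tau$ with $\tau@(F,s)=(F',s')$, and this canonical $\tau$ is taken as $\effect(t)$.

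The genuine obstacle is the sequent bookkeeping that makes this recursion sound. The thread physically overshoots — it climbs all the way to an axiom and comes back down to a cut — whereas the relative address must go directly up and pivot at the cut's \emph{conclusion}; the induction must therefore carry, as an invariant, the identity of the anchoring sequent (the conclusion of the lowest cut reached so far) and verify against the clauses of the $@$-operation that it is exactly the conclusion of the cut on which the corresponding U-shape turns, that the two nested occurrences of a U-shape share this anchor, and that the sub-addresses produced by the recorded $x$-steps match those read by $\Ai,\Al,\Ar$. Here the duality established in Part~1 is reused to guarantee that each recorded $c_{l/r}$ lands on an occurrence which the subsequent $x$-steps reconstruct into $F'\equiv F$, and the determinism of Part~3 ensures the constructed $\tau$ indeed tracks the unique minimal shortcut.
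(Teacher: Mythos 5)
Your proposal matches the paper's own argument on all three points: the paper likewise derives $F\equiv F'$ from the fact that a shortcut is an $\epsilon$-path (whose endpoints are structurally equivalent), obtains uniqueness from the determinism of the trajectory (forced $\Wwait$-steps up to the axiom, forced descent, ascent dictated by the constraint stack, termination at the first moment the stack empties in the upward state), and defines $\effect(t)$ simply as the position of the endpoint of $t$ relative to its starting point in the proof tree. The only difference is one of detail: where you carry out an explicit induction on the grammar of $\C{B}$ and single out the anchoring-sequent bookkeeping (that each U-shape must pivot at a cut whose conclusion is reachable by the relative address built so far) as the delicate step, the paper dispatches the existence of $\tau$ in a single sentence, so your treatment is, if anything, more careful than the original.
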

\begin{proof}
Since the weight of minimal shortcut starts with $W^*A$, no choice is possible before an axiom is encountered. When going downwards, constraints will be pushed on the constraint stack, and the shortcut is again uniquely defined. When going upwards (i.e. after having seen a cut, and before the next axiom), two cases can occur. Either the constraint stack is not empty, and therefore it uniquely determines the path followed by the shortcut, or it is empty, which marks the end of the minimal shortcut $t$.
The relative address $\tau$ is given by the position of the end of $t$ relatively to the beginning of $t$ in the proof tree.
The fact that $F\equiv F'$ follows from the fact that any shortcut is an $\epsilon$-path.
\end{proof}

If $P$ is a pre-proof and $(F,s)$ is a pointed sequent in $P$, we note $\short(F,s)$ the minimal shortcut starting in $(F,s)$ if it exists. If not, we fix $\short(F,s):=\epsilon$. We also fix $\effect(\epsilon)=\epsilon$. 
We will abbreviate $\effect(\short(F,s))$ by $\effect(F,s)$ to lighten notations.

\begin{remark}
If $P$ is a circular pre-proof, and $(F,s),(F',s')$ are two pointed sequents of $P$ corresponding to the same occurrence in the finite graph of $P$, then $\effect(F,s)=\effect(F',s')$.
\end{remark}
This remark allows us to compute only finitely many effects: one for each pointed sequent in the finite proof graph.

\begin{definition}
An \emph{s-thread} is a sequence $(F_i,s_i,\uparrow)_{i\in\omega}$ that obeys the same rules as a thread going only upwards, with some relaxation in the constraints between $(F_i,s_i,\uparrow)$ and $(F_{i+1},s_{i+1},\uparrow)$ defining a pre-thread.
Indeed we add a new clause allowing the s-thread to take minimal shortcuts: 
$(F_{i+1},s_{i+1})$ can be reached from $(F_i,s_i)$ following the relative address $\effect(F_i,s_i)$, i.e. $(F_{i+1},s_{i+1})=\effect(F_i,s_i)@(F_i,s_i)$.

The weight of an s-thread is defined by generalizing the definition of weight of a thread, matching this new clause with $w_i=\Wwait$. The notion of visible part and validity of an s-thread is then induced by this definition.
\end{definition}
Notice that the visible part of an s-thread is obtained by simply removing steps introduced by this new clause, corresponding to shortcuts.

\begin{lemma}\label{lem:s-thread}
An infinite branch is validated by a thread if and only if it is validated by an s-thread.
\end{lemma}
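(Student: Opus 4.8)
The plan is to exploit three facts established above: the hidden part of a (bouncing) thread is a concatenation of $\epsilon$-paths; each such $\epsilon$-path is tiled by \emph{minimal} shortcuts; and the validity of a thread depends only on its visible part. An s-thread is then nothing but the ``straightening'' of a thread obtained by collapsing each minimal shortcut into a single jump step, and conversely. So I would first set up this dictionary, then transport validity across it in both directions.

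Recall that any thread $t$ admits a unique decomposition $t=\odot_{i\in 1+\lambda}(H_i\odot V_i)$ in which each $V_i$ goes strictly upwards (weight in $\{\Al,\Ar,\Ai,\Wwait\}^{\infty}$) and each $H_i$ is an $h$-path (weight in $\C H$), and that, by the paragraph following Definition~\ref{def:thread}, the validity of $t$ is determined solely by the sequence of formulas traversed by $\visP{t}$, the hidden parts contributing nothing since their endpoints are structurally equivalent. On the s-thread side, a jump step is by definition matched with weight $\Wwait$, so the visible part of an s-thread is likewise obtained by deleting its jump steps, and its validity depends only on that visible part. It therefore suffices to transform a thread into an s-thread (and back) while preserving the visible part verbatim, since equal visible parts yield the same minimal formula and the same non-stationarity, hence the same validity.

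For the forward direction, given a thread $t$ validating $\beta$, I would show that each hidden part $H_i$ decomposes uniquely as a finite concatenation of minimal shortcuts. Since $\weight{H_i}\in\C H$ and $\C B$ is generated by gluing smaller $b$-paths around axioms, an induction on the grammatical derivation of $\weight{H_i}$ shows that $H_i$ passes through finitely many \emph{resting points}, i.e.\ positions in state $\uparrow$ with empty constraint stack, and that the segment between two consecutive resting points is a minimal shortcut. At each resting point $(F,s)$ the continuation is, by uniqueness of minimal shortcuts (the preceding lemma), exactly $\short(F,s)$, whose endpoint is $\effect(F,s)@(F,s)$. Replacing each $H_i$ by the corresponding sequence of jump steps then yields a well-defined s-thread $t'$ whose non-jump steps are precisely those of $\visP{t}$; in particular $t'$ visits the same pointed sequents of $\beta$ and is non-stationary with the same minimal $\nu$-formula, so $t'$ validates $\beta$. (A minor bookkeeping point: the $\Wwait$ steps a shortcut performs before bouncing on its axiom are absorbed into the adjacent visible block, which is harmless as they are silent.) Conversely, given an s-thread $t'$ validating $\beta$, I would expand every jump step from $(F_i,s_i)$ to $(F_{i+1},s_{i+1})=\effect(F_i,s_i)@(F_i,s_i)$ into the concrete minimal shortcut $\short(F_i,s_i)$, an $\epsilon$-path of $\pi$ with structurally equivalent endpoints; concatenating these with the upward steps of $t'$ produces a pre-thread whose weight decomposes as in Definition~\ref{def:thread} (the maximal upward runs giving the $V_i$, the expanded shortcuts the $H_i\in\C H$), hence a genuine thread with the same visible part, validating $\beta$.

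The main obstacle is the forward direction, specifically the claim that an arbitrary hidden part is tiled by exactly those minimal shortcuts that the s-thread reads off via $\short(\cdot)$: one must rule out that a bouncing $h$-path reaches an empty-stack resting point ``in the middle'' of what the s-thread treats as a single jump, or conversely skips one. This is where the determinism and uniqueness of minimal shortcuts, which follow from the constraint-stack discipline of $\Athread$, are used, together with the observation that an empty stack in state $\uparrow$ is the unique place where a minimal shortcut can end.
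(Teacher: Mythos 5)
Your proposal is correct and follows essentially the same route as the paper, whose proof consists precisely of the observation that one passes from a thread to an s-thread by compressing minimal shortcuts into jump steps (and back by expanding them), and that this preserves the visible part and hence validity. Your version merely spells out the tiling of each hidden part by minimal shortcuts and the role of uniqueness/determinism, which the paper leaves implicit.
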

\begin{proof}
The s-thread is obtained from the thread by compressing minimal shortcuts and replacing them with the new clause.
Conversely, the thread can be obtained from the s-thread by replacing the new clause with minimal shortcuts.
This transformation preserves the visible part.
\end{proof}

We now give the proof of Theorem \ref{thm:kproof}, stating that any valid proof of \muMLLo is a $k$-proof for some $k\in\N$.

\begin{proof}
Let $P$ be a valid proof of \muMLLo.
Each pointed sequent in $P$ can be annotated with its effect, and with the height of its minimal shortcut  (or with $(\epsilon,0)$ if this minimal shortcut does not exist). 

By Lemma \ref{lem:s-thread}, all infinite branches of $P$ are validated by s-threads, following effects annotating $P$.
Let $k$ be the maximal height annotating a pointed sequent in $P$. We obtain that $P$ is a $k$-proof.
\end{proof}

The rest of the section is devoted to proving Theorem \ref{thm:deckproof}, stating that given a pre-proof $P$ and an integer $k$, it is decidable whether $P$ is a $k$-proof.

\begin{proof}
Let us note $$\effect_k(F,s)=\left\{\begin{array}{ll} \effect(F,s)&\text{if $\short(F,s)$ has height $\leq k$}
\\ \epsilon&\text{otherwise}\end{array}\right.$$
For each $(F,s)$ in the graph, $\effect_k(F,s)$ can be computed in a finite time.
Indeed, it suffices to follow the only possible thread starting in $(F,s)$ in the graph of $P$, until we find a minimal shortcut or we detect a failure. Reasons for failure are:
\begin{itemize}
\item The weight does not begin with $\Wwait^*\Aax$, i.e. an unfolding happen before the first axiom,
\item the constraint stack gets higher than $k$,
\item we detect a loop: the same pointed sequent is visited twice with identical stack content.
\end{itemize}

This corresponds to turning the automaton $\Athread$ into a DFA, by bounding the size of the stack to $k$, and accept words of the form $\Wwait^*\Aax\Sigma^* C$ by empty stack.
This allows us to annotate each pointed sequent $(F,s)$ with $\effect_k(F,s)$. 

We can now verify that the pre-proof is a $k$-proof, using a nondeterministic parity automaton $\A_s$ reading branches of $P$ and guessing the existence of an s-thread. This automaton is identical to the one in \cite{DoumanePhD} for straight threads, except that it can follow effects. While following the relative address given by an effect, the thread is considered hidden, and the action preformed on it does not influence the accepting condition of $\A_s$. Since the length of effects is globally bounded (there are finitely many of them), the relative adresses to follow can be stored in the state space of the automaton. The pre-proof $P$ is a $k$-proof if and only if $\A_s$ accepts all branches.
\end{proof}

\subsubsection{Details on the undecidability proof}
\label{appsubsec:undecproofs}

We show that the validity condition is already undecidable for the proof system \muMLLo.


We reduce from the halting problem for two-counter machines (2CM), known to be $\Sigma^0_1$-complete \cite{Minsky61}.

Here is a brief outline of the proof.

We start by recalling the definition of 2CM in the next section. These are finite-state deterministic machines manipulating two counters, able to perform Zero test, increment and decrement on each counter.

We then show how to encode the halting problem of a 2CM $M$ using bouncing threads. The idea is to use the constraint stack to encode the value of counters, and position in the graph to encode the control state. Gadgets allow to increment or decrement each counter. 
The main difficulty lies in the tests performed by the machine: we want to design a conditional branching on the thread, depending on the value of the constraint stack. This can be done, but because of the linearity of the proof system, we cannot avoid leaving some extra constraints encoding the results of the tests, that will be collected by the thread later.
Since we want to finish with empty constraint, we need to erase this extra information. 
To do this, we add a second gadget performing the computation in a dual way: results of tests are fed to the thread, that rewinds the computation while erasing these extra constraints.
 We can finally exit the detour with (almost) no constraints, and perform a visible $\nu$-unfolding on the main branch, before looping back to the root of the proof.

The global pre-proof will be a valid proof according to the criterion if and only if the machine $M$ halts.


\medskip\subsubsection*{Two Counter Machines}

A 2CM $M$ is a tuple $(Q,q_0,q_f,\delta)$ where $Q$ is a finite set of states, $q_0$ is the initial state, $q_f$ is the final state, and $\delta$ is the transition function.
The machine has access to two counters storing nonnegative integer values. The counters are initialized to $0$. 

The possible actions of the machine are the following, where $\tau\in\{1,2\}$ identifies one of the counters:
\begin{itemize}
\item $\Inc_\tau(q)$ : increment counter $\tau$, and jump to state $q$
\item $\Dec_\tau(q)$ : decrement counter $\tau$, and jump to state $q$
\item $\T_\tau(q_Z,q_P)$: if the current value of counter $\tau$ is $0$, jump to $q_Z$, else jump to $q_P$.
\end{itemize}

Let $\Act=\{Inc_\tau(q)\mid \tau\in \{1,2\}, q\in Q\}\cup\{Dec_\tau(q)\mid \tau\in \{1,2\}, q\in Q\}\cup\{\T_\tau(q_Z,q_P)\mid \tau\in\{1,2\}, q_Z,q_P\in Q\}$ be the set of possible actions.

The transition function of $M$ is a function $\delta:Q\setminus\{q_f\}\to\Act$, specifying which action is executed when each state is reached. No action is mapped to $q_f$, since the run stops when $q_f$ is reached.

A \emph{configuration} of the machine $M$ is a triple $(p,k[1],k[2])\in Q\times\N^2$ specifying the current state and the values for the two counters.

A \emph{run} of the machine $M$ is a sequence of configurations $(p_i,k[1]_i,k[2]_i)_{0\leq i\leq n}$ such that $p_0=q_0$, $k[1]_0=k[2]_0=0$, $p_n=q_f$, and consistent with $\delta$, i.e. for all $i\in[0,n-1]:$, we have

\begin{itemize}
\item if $\delta(p_i)=\Inc_\tau(q)$ then $p_{i+1}=q$ and $k[\tau]_{i+1}=k[\tau]_i+1$.
\item if $\delta(p_i)=\Dec_\tau(q)$ then $p_{i+1}=q$ and $k[\tau]_{i+1}=k[\tau]_i-1$.
\item if $\delta(p_i)=\T_\tau(q_Z,q_P)$, then $k[\tau]_{i+1}=k[\tau]_i$, and
	\begin{itemize} 
	\item if $k[\tau]_i=0$ then $p_{i+1}=q_Z$.
	\item if $k[\tau]_i>0$ then $p_{i+1}=q_P$.
	\end{itemize}
\end{itemize}

In all cases the other counter is left unchanged, i.e. $k[3-\tau]_{i+1}=k[3-\tau]_{i}$

Without loss of generality, we can also assume that the run ends with both counter values equal to $0$, i.e. $k[1]_n=k[2]_n=0$.

The next theorem states that the halting problem is undecidable for Two Counter Machines.

\begin{theorem}\label{thm:2CM}\cite{Minsky61}
Given a Two Counter Machine $M=(Q,q_0,q_f,\delta)$, it is undecidable to determine whether $M$ has a run, by a reduction from Turing Machines halting problem.
\end{theorem}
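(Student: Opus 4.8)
The plan is to reduce the halting problem for Turing machines --- which we may assume undecidable --- to the halting problem for two-counter machines, following Minsky's classical route: first simulate a Turing machine by a counter machine with a fixed number $k$ of counters, then compress those $k$ counters down to $2$. Given a Turing machine $T$, I would construct a 2CM $M$ with $M$ halting (from the all-zero initial configuration) if and only if $T$ halts on the empty tape; since a 2CM initialises its counters to $0$, the blank tape maps exactly to the initial configuration of $M$.

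For the first step, encode a configuration of $T$ over tape alphabet $\Gamma$ by the two natural numbers obtained by reading the tape portions to the left and to the right of the head in base $b=|\Gamma|+1$, with the scanned cell as least-significant digit; the finite control of $T$ is folded into the state set $Q$. Each elementary move of $T$ becomes simple arithmetic on these numbers: reading the scanned symbol is a value modulo $b$, writing adjusts the least-significant digit, and shifting the head transfers one digit across the two numbers (multiply one side by $b$, divide the other by $b$). All of these are implementable with finitely many states and a constant number of auxiliary counters using only $\Inc_\tau$, $\Dec_\tau$ and $\T_\tau$, giving a $k$-counter machine $M_k$ that halts exactly when $T$ does.

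For the second step I would encode the counter values $(v_1,\dots,v_k)$ of $M_k$ by the single G\"odel number $N=p_1^{v_1}\cdots p_k^{v_k}$ for distinct primes $p_1,\dots,p_k$, keeping $N$ in the first counter and using the second as scratch (restored to $0$ between elementary steps). Then $\Inc_i$ is multiplication of $N$ by $p_i$, $\Dec_i$ is division by $p_i$ (sound because a decrement is only issued after a successful positivity test, so $p_i \mid N$), and testing $v_i=0$ is testing $p_i \nmid N$. Each is realised by standard transfer loops between the two counters --- e.g.\ to multiply by $p_i$, repeatedly decrement counter $1$ while adding $p_i$ to counter $2$, then empty counter $2$ back into counter $1$ --- using only a bounded number of control states depending on the $p_i$. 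This yields a 2CM $M$ halting iff $M_k$ halts.

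Composing the two steps produces, uniformly from $T$, a 2CM $M$ that halts iff $T$ halts, so halting for 2CMs is undecidable; and since it is also recursively enumerable (one may simply run $M$), it is $\Sigma^0_1$-complete, as stated in the surrounding discussion. The main obstacle is entirely in the second step: checking that the transfer loops implementing multiplication, division and divisibility by each $p_i$ correctly restore the G\"odel number and re-establish the invariant that the scratch counter is $0$ after every elementary $M_k$-step. This is routine but error-prone bookkeeping, and it is where the restriction to exactly two counters (rather than one) is essential.
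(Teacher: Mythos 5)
Your proposal is a correct reconstruction of Minsky's classical two-stage argument (Turing machine $\to$ $k$-counter machine via base-$b$ encodings of the tape halves, then $k$ counters $\to$ $2$ counters via G\"odel numbering with a scratch counter), which is precisely the proof the paper invokes by citing \cite{Minsky61} rather than reproving. The approach and the observation that the problem is moreover $\Sigma^0_1$-complete both match the paper's treatment.
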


\medskip\subsubsection*{From machines to proofs}

We will now encode the halting problem for 2CM into the problem of deciding whether a preproof is a proof.

We fix a machine $M=(Q,q_0,q_f,\delta)$.

We will build a preproof $P$ such that the leftmost infinite branch can be validated by a bouncing thread if and only if there exists a run of the machine $M$. All the other branches of $P$ will be validated by non-bouncing threads.


We will use throughout the proof the formulas $F,G$, where $F=\nu X.(X\parr X)$, $G=F^\perp=\mu X.(X\tensor X)$, and auxiliary formulas $A=\nu X. (X\parr X)\tensor X$ and $B=\mu X.(X\parr A\parr A)$. Their addresses will sometimes be omitted, keep in mind that a letter can represent different occurrences in a proof tree.

The thread will always follow a formula $F$ when going upwards, and $G$ when going downwards. The formula $A$ will be used to ensure that all infinite branches except the leftmost one are validated by non-bouncing threads.

The conclusion of the proof $P$ is the sequent $G,F,B$.

The idea of the construction is to use a bouncing thread to encode a run of $M$, by storing the current configuration on $M$ in the stack of constraints that the thread must satisfy.

The general shape of the preproof $P$ is given in Figure \ref{fig:mainproof}. By convention, formulas introduced in cuts will always be $F$ on the left and $G$ on the right. This means that a thread going upwards following a formula $F$ will always turn right on cuts, bounce on axioms in the right part, and finally come back to visit the left part of the cut.

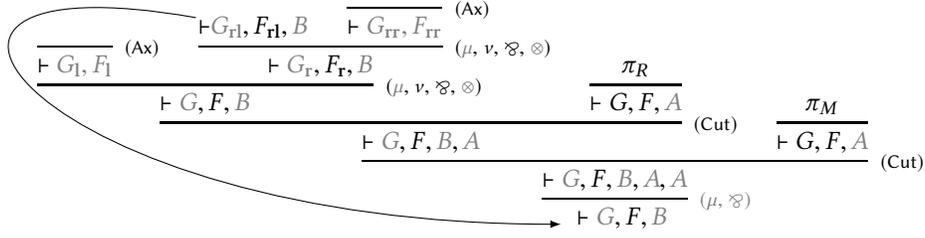
\begin{figure*}[h]
    $$~\hspace{3cm}   \begin{prooftree}
	\Infer{0}[\rax]{\vdash \gray{G_\lc,F_\lc}}
	
        \Hypo{
          \orig{bemainpreproof}\vdash \gray{G_{\rc\lc}}, F_{\rc\lc},\gB}
      \Infer{0}[\rax]{\vdash \gray{G_{\rc\rc},F_{\rc\rc}}}
      \Infer{2}[\mnvwgray]{\vdash \gray{G_\rc},F_\rc,\gB}
      
      \Infer{2}[\mnvwgray]{\vdash \gG,F,\gB}      
      
      \Hypo{\pi_R}
      \Infer{1}{\vdash G,F,\gA}      
      
      \Infer{2}[\rcut]{\vdash \gG,F,\gB,\gA}

	\Hypo{\pi_M}
	\Infer{1}{\vdash G,F,\gA}

      \Infer{2}[\rcut]{\vdash \gG,F,\gB,\gA,\gA}
       \Infer{1}[\gray{\muvee}]{     \begin{tikzpicture}[remember picture,overlay]
        \node[inner sep=0pt,outer sep=0pt] (tgmain)  {\ensuremath{}};
        \draw [->,>=latex] (bemainpreproof.north west) .. controls +(170:5cm) and +(180:7cm) .. (tgmain.west);
         \end{tikzpicture}\;
         \vdash \gG,F,\gB}
    \end{prooftree} $$
  \caption{The main preproof $P$}\label{fig:mainproof}
\end{figure*}

Auxiliary formulas are grayed to emphasize the trajectory of the thread of interest.

The thread on the branch with infinitely many $(\star)$ must use formula $F$, as it is the only one performing a $\nu$-unfolding. This means it has to go through the two cuts and bounce on axioms in $\pi_M$ and $\pi_R$.
All other infinite branches are validated by non-bouncing threads.

As described in the outline, the goal of $\pi_M$ is to use the bouncing thread stemming from $F$ to simulate a run of $M$ in a deterministic way, accumulating ``garbage constraints'' for every test. The role of $\pi_R$ is to erase these garbage constraints, by mirroring the behaviour of $\pi_M$.

After $\pi_R$, the formula $F$ is unfolded twice before looping back to the root. The first unfolding is used to match a leftover constraint, that cannot be erased in $\pi_R$ for technical reasons detailed later. The second unfolding contributes to the visible part of the thread.

\medskip\subsubsection*{Encoding of counters in the constraint stack}\label{sec:encoding}
We describe here how the preproof $\pi_M$ will be able to simulate a run of $M$ via a thread following formula $F$.

When $F$ is unfolded, a thread following $F$ can either go to the left disjunct or to the right, corresponding to weights $\Ai\Al$ or $\Ai\Ar$. We will use the alias $\lc$ for $\Ai\Al$ and $\rc$ for $\Ai\Ar$, since unfoldings will always alternate with left/right choices. 
For instance if $u$ is of the form $\lc v$, then when the thread goes up and encounters an unfolding of $F$, it has to follow the left disjunct, and update the constraint from $u$ to $v$. Constraints are updated according to the stack of the pushdown automaton $\Athread$ described in Section \ref{sec:stack}. Therefore, we will refer to the word $u\in\{\lc,\rc\}^*$ storing the current constraints as the \emph{constraint stack}.



We are now ready to detail how configurations of the machine will be encoded in the constraint stack.
A counter of value $n$ will be encoded by the sequence of constraints $(\rc\lc^n\rc)$. Therefore, when the thread is simulating the run of $M$, its constraint stack is of the form $(\rc\lc^n\rc)(\rc\lc^m\rc)$ to denote that $k[1]=n$ and $k[2]=m$.

The current state is not encoded in the constraint stack, but in the current position of the thread in the preproof: for instance if the thread is in the node labeled $(q_0)$ in the proof graph, then the current state of the corresponding run is $q_0$.

``Garbage constraints'' are constraints that will not be part of the constraint stack during the simulation of the run of $M$, but will be pushed on the stack by the thread when the simulation has succeeded, i.e. after the node labeled $(q_f)$ has been reached by the thread. In order to exit the cut, the thread must go down from $(q_f)$ to $(q_0)$, and that is where garbage constraints may be pushed. These extra constraints will encode the results of all tests performed during the computation. The result of a test is $\rc$ if the tested counter was Zero, and $\lc$ otherwise, so in general it is a letter $X\in\{\rc,\lc\}$. Garbage constraints will be of the form $\rc X$ for a test on the first counter, and of the form $\rc\lc^k\rc\rc X$ for a test on second counter, where $X$ is the result of the test and $k$ is the value of the first counter.

\medskip\subsubsection*{Auxiliary metarules for $\pi_M$}

In order to make the construction readable and modular, we start by describing metarules that will be used as building blocks throughout the section.

\medskip\subsubsection*{Rules relative to $A$}

We will use the $\infty$ notation to denote an infinite tree obtained by unfolding a proof of $A$. This yields a valid cut-free proof tree that contains no axiom.

We explicit this from sequent $\Gamma,A$, where $\Gamma$ can be any sequent.
%
%
%
%
%
%
%
%
%

%
%
$$
 \begin{array}{rl}
\begin{prooftree}
\Infer{0}[\rinf]{\vdash \Gamma,A}
\end{prooftree}
\equiv
& 
 \begin{prooftree}

       \Hypo{(\vartriangle)}
         
		\Hypo {(\spadesuit)}        
		\Hypo {(\spadesuit)}         
         \Infer{2}[\mkrule{\tensor,\parr}]{\vdash A,(A\parr A)\tensor A}
         \Infer{1}[\rnu]{(\spadesuit)\vdash A,A}
		\Infer{2}[\mkrule{\tensor,\parr}]{\vdash \Gamma,(A\parr A)\tensor A}
	\Infer{1}[\rnu]{(\vartriangle)\vdash \Gamma,A}     
      
    \end{prooftree}  
    
 \end{array}
$$
We will often want to duplicate $A$ using its $\parr$ connective. Let us define the following metarule, valid for any sequent $\Gamma,A$:
$$
 \begin{array}{rl}
\begin{prooftree}
\Hypo{\vdash \Gamma,A, A}
\Infer{1}[\rAA]{\vdash \Gamma,A}
\end{prooftree}
\equiv
& 
 \begin{prooftree}

      \Hypo{\vdash \Gamma,A, A}
     
         \Infer{1}[\vv]{\vdash \Gamma, A\parr A}
         \Infer{0}[\rinf]{\vdash A}
		\Infer{2}[\we]{\vdash \Gamma,(A\parr A)\tensor A}
	\Infer{1}[\rnu]{\vdash \Gamma,A}     
      
    \end{prooftree}  
    
 \end{array}
$$
We will define an alias for a cut rule allowing to duplicate the $A$ formula on both sides, noted $(\Acut)$. The effect of rule $(\Acut)$ is simply a cut for formulas $F,G$, so it can be considered as such for threads of interests following $F$ upwards and $G$ downwards.
$$
 \begin{array}{rl}
\begin{prooftree}
\Hypo{\vdash G,F,\gA}
\Hypo{\vdash G,F,\gA}
\Infer{2}[\rAcut]{\vdash G,F,\gA}
\end{prooftree}
\equiv
& 
 \begin{prooftree}
\Hypo{\vdash G,F,\gA}
\Hypo{\vdash G,F,\gA}
	\Infer{2}[\rcut]{\vdash G, F, \gA,\gA}     
      \Infer{1}[\rAA]{\vdash G,F,\gA}
      
    \end{prooftree}  
    
 \end{array}
$$
\medskip\subsubsection*{Copying left and right constraints}

We now describe a helpful metarule: the expanding rule $\rexp$, represented in Fig. \ref{fig:exp}. It allows to unfold $F$ and $G$ once, pairing the left (resp. right) unfolding of $F$ with the left (resp. right) unfolding of $G$. This ensures that if the thread goes left (resp. right) upwards on $F$, it will also go left (resp. right) downwards on $G$. This can therefore be understood as a copying operator: the bit of the constraint stack will be the same before and after bouncing on the current cut. 
As before, we will gray formulas that will always be avoided by the thread of interest.

\begin{figure*}[h]
  $$
  ~\hspace{4cm}
\begin{array}{cc}
\begin{prooftree}
\Hypo{\vdash G_\lc,F_\lc,\gA}
\Hypo{\vdash G_\rc,F_\rc,\gA}
\Infer{2}[\rexp]{\vdash G,F,\gA}
\end{prooftree}
\equiv
& 
 \begin{prooftree}

      \Hypo{\vdash G_\lc,F_\lc,\gA}
      
       \Hypo{\vdash G_\rc,F_\rc,\gA}

	\Infer{2}[\we]{\vdash G_\lc\tensor G_\rc,F_\lc, F_\rc, \gA,\gA}     
   
	\Infer{1}[\mnv]{\vdash G,F,\gA,\gA}
      \Infer{1}[\rAA]{\vdash G,F,\gA}

    \end{prooftree} 
 \end{array}
 $$
 \caption{the metarule $\rexp$}
  \label{fig:exp}
\end{figure*}
In most constructions, the left (resp. right) conjunct of $F$ will be paired with the left (resp. right) conjunct of $G$ when both are expanded, so we will often omit the labels. The left expansions will be represented on the left branch of the proof.
%
%
%

We give metarule $\ml$ (resp. $\mr$) forcing the reading and copying of a left (resp. right) constraint on the stack. This is enforced by preventing the thread from bouncing on an axiom if the forbidden bit is read, thanks to the axiomless infinite proofs described by the $\rinf$ metarules.

$$
\begin{array}{rl}
\begin{prooftree}
\Hypo{\vdash G,F,\gA}
\Infer{1}[\ml]{\vdash G,F,\gA}
\end{prooftree}
\equiv
& 
\begin{prooftree}
 				\Hypo{\vdash G_\lc,F_\lc,\gA}
 				\Infer{0}[\rinf]{\vdash \gray{G_\rc,F_\rc,A}}     			 
     \Infer{2}[\rexp]{\vdash G,F,\gA}
  \end{prooftree}
  
\\
~
\\
  
\begin{prooftree}
\Hypo{\vdash G,F,\gA}
\Infer{1}[\mr]{\vdash G,F,\gA}
\end{prooftree}
\equiv
& 
\begin{prooftree}
				\Infer{0}[\rinf]{\vdash \gray{G_\lc,F_\lc,A}}  
 				\Hypo{\vdash G_\rc,F_\rc,\gA} 				   			 
     \Infer{2}[\rexp]{\vdash G,F,\gA}
  \end{prooftree}
  \end{array}
$$

\medskip\subsubsection*{Constraint introduction}

We might also want to push a right constraint without popping one. For instance this is needed at the beginning when the stack is empty.
This can be done via the following metarules $\pi_{\rci}$ and $(\rci)$, where $\mathsf i$ stands for ''Introduction''. We define both variants because we might want to use one or the other depending on the context.

Let $\pi_{\rci}$ be the following preproof:

$$\begin{array}{cc}
\begin{prooftree}
\Hypo{\pi_{\rci}}
\Infer{1}{\vdash G,F,\gA}
\end{prooftree}
\equiv
&
\begin{prooftree}
	\Infer{0}[\rinf]{\vdash \gray{G_\lc,A}}
	\Infer{0}[\rax]{\vdash G_\rc,F}
	\Infer{2}[\we]{\vdash G_\lc\tensor G_\rc,F,\gA}
	\Infer{1}[\mm]{\vdash G,F,\gA}
\end{prooftree}
\end{array}
$$

This preproof allows a thread to go upwards following $F$ without any event, bounce on an axiom, and go downwards following $G$ while pushing a right constraint.

We now combine $\pi_{\rci}$ with a cut in order to go back to a thread going upwards, having a constraint stack starting with an extra $\rc$.

$$\begin{array}{cc}
\begin{prooftree}
\Hypo{\vdash G,F,\gA}
\Infer{1}[\mri]{\vdash G,F,\gA}
\end{prooftree}
\equiv
&
\begin{prooftree}
\Hypo{\vdash G,F,\gA}
	\Hypo{\pi_{\rci}}
	\Infer{1}{\vdash G,F,\gA}
\Infer{2}[\rAcut]{\vdash G,F,\gA}
\end{prooftree}
\end{array}
$$
This means that the effect of the rule $(\rci)$ on the constraint stack can be summarized by $\epsilon\mapsto \rc$, adding an extra $\rc$ at the top of the constraint stack.

The above rule can be similarly defined to introduce a $\lc$ constraint instead, by simply switching the $G,F$ axiom to the left premiss instead of the right in $\pi_{\rci}$. This dual version will be noted by $\pi_{\lci}$ and metarule $(\lci)$.

%
%
%
%

%
\medskip\subsubsection*{The initialization metarule}

\newcommand{\rinit}{\mkrule{\mathsf{init}}}
\newcommand{\rinitp}{\mkrule{\mathsf{init'}}}

We describe here the first metarule encountered in $\pi_M$. Its role is to initialize the constraint stack to encode two counters with value $0$, and go to the $(q_0)$ node to start the simulation of the run of $M$.

It must therefore allow the thread to enter with empty constraint stack and exit the cut with a constraint stack $(\rc\rc)(\rc\rc)$.
This is straightforward now that we have the $(\rci)$ rule:
$$
\begin{array}{cc}
\begin{prooftree}
\Hypo{(q_0)\vdash G,F,\gA}
\Infer{1}[\rinit]{\vdash G,F,\gA} 
\end{prooftree}
\equiv 
&
   \begin{prooftree}
   \Hypo{(q_0)\vdash G,F,\gA}

 	\Infer{1}[\mri]{\vdash G,F,\gA}
 	\Infer{1}[\mri]{\vdash G,F,\gA}
    \Infer{1}[\mri]{\vdash G,F,\gA}
    \Infer{1}[\mri]{\vdash G,F,\gA}
    \end{prooftree}
\end{array}
$$

In the following, we describe how to build the preproof $\pi_M$, by connecting nodes of the form $(p)$ to their successors in the computation.
So for each $p\in Q$, a preproof of ``local root'' $(p)$ will be built with hypotheses of the form $(q)$ with $q\in Q$. Once such a preproof has been built for each node $(p)$, the hypotheses $(q)$ are connected to their corresponding ``local root'' node trough back loops.

%
\medskip\subsubsection*{Encoding the action $\Inc$}

\newcommand{\rinc}[1]{\mkrule{\mathsf{Inc}_{#1}}}
\newcommand{\counter}{{\mathsf{counter}}}
\newcommand{\rcount}{\mkrule{\counter}}

We now assume that we are at a node of the proof graph labeled by $(p)$, where $p$  is a state of the machine $M$, and that the current constraint stack encodes the counter value as described earlier, i.e. $(\rc\lc^{k[1]}\rc)(\rc\lc^{k[2]}\rc)$.

Assume $\delta(p)=\Inc_1(q)$ with $q\in Q$.
We will build a metarule $(\Inc_1)$ updating the configuration by acting on the constraint stack, and ending up in node $(q)$:

$$
\begin{array}{cc}
\begin{prooftree}
\Hypo{(q)\vdash G,F,\gA}
\Infer{1}[\rinc1]{(p)\vdash G,F,\gA} 
\end{prooftree}
\equiv 
&
   \begin{prooftree}
   \Hypo{(q)\vdash G,F,\gA}
    \Hypo{\pi_{\lci}}
    \Infer{1}{\vdash G_\rc,F_\rc,\gA}
    \Infer{1}[\mr]{\vdash G,F,\gA}
    \Infer{2}[\rAcut]{(p)\vdash G,F,\gA}
    \end{prooftree}
\end{array}
$$

Notice that in order to bounce on the axiom, the thread must see $\rc$ while going up, and $\rc\lc$ on the way down. The rest of the current stack (of the form $\lc^*\rc(\rc\lc^*\rc)$) is left unchanged.
This rule turns a stack of the form $\rc u$ into $\rc\lc u$, thereby incrementing the first counter. We will abbreviate this action $\rc\mapsto \rc\lc$.

We might also need to increment the second counter, which is deeper in the stack.
For this, let us devise another auxiliary metarule $(\counter)$, allowing us to skip the part of the stack encoding the first counter.

$$
\begin{array}{cc}
\begin{prooftree}
\Hypo{\vdash G,F,\gA}
\Infer{1}[\rcount]{\vdash G,F,\gA} 
\end{prooftree}
\equiv 
&
   \begin{prooftree}
   \Hypo{(\dagger)\vdash G_{\rc\lc},F_{\rc\lc},\gA}
   \Hypo{\vdash G_{\rc\rc},F_{\rc\rc},\gA}
    \Infer{2}[\rexp]{(\dagger)\vdash G_\rc,F_\rc,\gA}
    \Infer{1}[\mr]{\vdash G,F,\gA}
    \end{prooftree}
\end{array}
$$

This metarule processes constraints of the form $(\rc\lc^*\rc)$ on the way up, and these same constraints will be copied back when the thread returns downwards from the formula $G$ of the hypothesis.
This means that this gadget allows the right premiss sequent to access the encoding of the second counter, while leaving the first one untouched.

We can now give the pre-proof allowing to increment the second counter, by simply adding the $(\counter)$ metarule at the appropriate place:

$$
\begin{array}{cc}
\begin{prooftree}
\Hypo{(q)\vdash G,F,\gA}
\Infer{1}[\rinc2]{(p)\vdash G,F,\gA} 
\end{prooftree}
\equiv 
&
   \begin{prooftree}
   \Hypo{(q)\vdash G,F,\gA}
    \Hypo{\pi_{\lci}}
    \Infer{1}{\vdash G_\rc,F_\rc,\gA}
    \Infer{1}[\mr]{\vdash G,F,\gA}
    \Infer{1}[\rcount]{\vdash G,F,\gA}
    \Infer{2}[\rAcut]{(p)\vdash G,F,\gA}
    \end{prooftree}
\end{array}
$$

The effect of this metarule is $(\rc\lc^*\rc)\rc\mapsto (\rc\lc^*\rc)\rc\lc$

This achieves the treatment of states performing an increment. For all nodes $(p)$ where $p$ performs an increment of counter $\tau$ before going to $q$, we link node $(p)$ of the proof with node $(q)$ through the metarule $(\Inc_\tau)$.

\medskip\subsubsection*{Encoding the action $\Dec$}\label{sec:Dec}

\newcommand{\rdec}[1]{\mkrule{\mathsf{Dec}_{#1}}}

Assume $\delta(p)=\Dec_1(q)$ with $q\in Q$.

This means we want to build a metarule with action $\rc\lc\mapsto \rc$ on the stack, in order to decrease the value of the first counter by $1$.

This is done by the following metarule $\rdec1$:


$$
\begin{prooftree}
\Hypo{(q)\vdash G,F,\gA}   
    \Infer{0}[\rinf]{\gray{G_\lc,F_\lc, F_{\rc\rc},A}}
    \Infer{0}[\rax]{G_\rc, F_{\rc\lc}}
    \Infer{2}[\we]{\vdash G_\lc\tensor G_\rc, \gray{F_\lc}, F_{\rc\lc},\gray{F_{\rc\rc}}, \gA}
    \Infer{1}[\nuvee]{\vdash G_\lc\tensor G_\rc, \gray{F_\lc}, F_\rc, \gA}
	\Infer{1}[\mnv]{\vdash G,F,\gA}
\Infer{2}[\rAcut]{(p)\vdash G,F,\gA}
 \end{prooftree}
$$

Notice that if the thread does not start with $\rc\lc$, it gets lost in an $(\infty)$ proof, corresponding to a failure of the run.

If $\delta(p)=\Dec_2(q)$, the construction is similar, using again the metarule $(\counter)$ to leave the first counter untouched and access the second one. This is done by the following metarule $\rdec2$.


$$
\begin{prooftree}
\Hypo{(q)\vdash G,F,\gA}   
    \Infer{0}[\rinf]{\gray{G_\lc,F_\lc, F_{\rc\rc},A}}
    \Infer{0}[\rax]{G_\rc, F_{\rc\lc}}
    \Infer{2}[\we]{\vdash G_\lc\tensor G_\rc, \gray{F_\lc}, F_{\rc\lc},\gray{F_{\rc\rc}}, \gA}
    \Infer{1}[\nuvee]{\vdash G_\lc\tensor G_\rc, \gray{F_\lc}, F_\rc, \gA}
	\Infer{1}[\mnv]{\vdash G,F,\gA}
	\Infer{1}[\rcount]{\vdash G,F,\gA}
\Infer{2}[\rAcut]{(p)\vdash G,F,\gA}
 \end{prooftree}
$$

\medskip\subsubsection*{Encoding the action $\T$}\label{sec:Test}

\newcommand{\rT}[1]{\mkrule{\T_{#1}}}

It remains to describe how to modify the constraint stack for actions of type $\T$.

\medskip\subsubsection*{Test on the first counter}\label{sec:test2}

Let us assume first that $\delta(p)=\T_1(q_Z,q_P)$.

Notice that a zero test can be performed by simply testing whether the stack starts with $\rc\rc$ or with $\rc\lc$, i.e. by identifying the second letter of the stack.
Therefore, these first two letters can be used to branch to the result of the test. 
However, since they must be still be part of the encoding, we need to reintroduce them in the stack after having read them.
We define the metarule $(\T_1)$ accordingly:

$$
   \begin{prooftree}
   \Hypo{(q_P)\vdash G,F,\gA}
    \Infer{1}[\mli]{\vdash G,F,\gA}
    \Infer{1}[\mri]{\vdash G_{\rc\lc},F_{\rc\lc},\gA}
   \Hypo{(q_Z)\vdash G,F,\gA}
     \Infer{1}[\mri]{\vdash G,F,\gA}
    \Infer{1}[\mri]{\vdash G_{\rc\rc},F_{\rc\rc},\gA}
    \Infer{2}[\rexp]{\vdash G_\rc,F_\rc,\gA}
    \Infer{1}[\mr]{(p)\vdash G,F,\gA} 
    \end{prooftree}
$$
%

This metarule allows the thread to go to $(q_Z)$ if the counter was zero, or to $(q_P)$ if the counter was strictly positive, leaving the stack unchanged in both cases.

Notice that this metarule also leaves some garbage constraints in the following sense: when going back down from $(q_Z)$ (resp. $(q_P)$) to $(p)$, the thread will push extra constraints $\rc\rc$ (resp. $\rc\lc$) on top of the pile, due to the $(\exp)$ and $(\rc)$ rule in $(\T_1)$.

\medskip\subsubsection*{Test on the second counter}
\newcommand{\pishift}{\pi_{\mathrm{shift}}}
\newcommand{\move}{\mathsf{move}}
\newcommand{\pimove}{\pi_{\move}}
\newcommand{\rmove}{\mkrule{\mathsf{result}}}
\newcommand{\cop}{\mathsf{copy}}
\newcommand{\rcopl}{\mkrule{\cop_\lc}}
\newcommand{\perm}{\mathsf{prep}}
\newcommand{\rperm}{\mkrule{\perm}}

We now assume that $\delta(p)=\T_2(q_Z,q_P)$, and we want to encode the corresponding metarule, linking $(p)$ to $(q_Z)$ and $(q_P)$ in the pre-proof $\pi_M$.

This is more tricky, because we need to access the relevant bit encoding the result of this test, and copy the value of the first counter after it to restore the stack. This corresponds to copying an unbounded amount of information, so this cannot be done directly in the same way as in the previous construction for $\T_1$.

We therefore design auxiliary gadgets allowing us to copy the information bit by bit.
The result $T$ of a test will be encoded by $T=\rc\rc\rc$ for zero and $T=\rc\rc\lc$ for not zero.

The pre-proof $\pishift$, represented Fig. \ref{fig:pishift} has effect $\lc^{k+1} T\mapsto \lc^k T\lc$, with $T\in\{\rc\rc\rc,\rc\rc\lc\}$. 
The proof can be built thanks to the following table, that explicits the transformation of the relevant prefix constraint stack:


%
%
%
%
%
%

$$\begin{array}{|l|c|c|c|}
\hline
\text{before: }&\lc\lc & \lc\rc\rc\lc & \lc\rc\rc\rc \\
\hline
\text{after: }&\lc\lc  & \rc\rc\lc\lc & \rc\rc\rc\lc\\
\hline
\end{array}
$$

\begin{figure*}[h]
$$
\begin{array}{cc}
\scalebox{1}{
\begin{prooftree}
\Hypo{\pishift}
\Infer{1}{\vdash G,F,\gA} 
\end{prooftree}
}
\equiv 
&
\scalebox{1}{
\begin{prooftree}
\Hypo{(\bullet)\vdash \Gl,\Fl,\gA}
\Infer{0}[\rax]{\vdash \Gr,\Fr}
\Infer{2}[\mnvw]{\vdash G,F,\gA}

		\Infer{0}[\rax]{\vdash \Gll, \Fll}
		\Infer{0}[\rinf]{\vdash \gray{\Glr,\Fr, A}}
		\Infer{2}[\mw]{\vdash \Gl, \Fll,\gray{\Fr},\gA}  
    \Hypo{\piaux}	
    \Infer{1}{\vdash \Gr,\Flr,\gA}
    
\Infer{2}[\we]{\vdash \Gl\tensor \Gr,\Fll,\Flr,\gray{\Fr},\gA,\gA}  

    \Infer{1}[\mnvnv]{\vdash G,F,\gA,\gA}
    \Infer{1}[\rAA]{\vdash G,F,\gA}

\Infer{2}[\rAcut]{(\bullet)\vdash G,F,\gA} 
\end{prooftree}
}
\end{array}
$$

$$
\begin{array}{cc}
\scalebox{1}{
\begin{prooftree}
\Hypo{\piaux}
\Infer{1}{\vdash \Gr,\Flr,\gA}
\end{prooftree}
}
\equiv
&
\scalebox{1}{
\begin{prooftree}	
\Infer{0}[\rinf]{\vdash \gray{\Grl,F_{\lc\rc\lc},A}} 	
\Infer{0}[\rax]{\vdash G_{\rc\rc\lc\lc}, F_{\lc\rc\rc\lc}}	 
		\Infer{0}[\rinf]{\vdash \gray{G_{\rc\rc\lc\rc},A}}		 
		\Infer{2}[\mw]{\vdash G_{\rc\rc\lc}, F_{\lc\rc\rc\lc},\gA}
\Infer{0}[\rax]{\vdash G_{\rc\rc\rc\lc}, F_{\lc\rc\rc\rc}}
		\Infer{0}[\rinf]{\vdash \gray{G_{\rc\rc\lc\rc},A}} 
		\Infer{2}[\mw]{\vdash G_{\rc\rc\rc}, F_{\lc\rc\rc\rc},\gA}				
		\Infer{2}[\mnvw]{\vdash \Grr,F_{\lc\rc\rc},\gA}	
		\Infer{2}[\rexp]{\vdash \Gr,\Flr,\gA}
\end{prooftree}
}
\end{array}
$$
\caption{The pre-proof $\pishift$ with its auxiliary $\piaux$}
\label{fig:pishift}
\end{figure*}

Notice that all axioms in the right part of the cut in $\pishift$ are paired according to the table above.

The main interesting phenomenon in $\pishift$ occurs on the $(\bullet)$ loop.
In the case where the result starts with $\lc$,  this loop allows to enter the cut again, after the popping of one $\lc$ constraint. This will therefore perform the wanted transformation on the constraint stack. Let us take an explicit example to see this gadget at work: consider a thread entering $\pishift$ with constraint $u=\lc\lc\rc\rc\rc$.
The thread will bounce on axiom $\Gll,\Fll$, leaving this constraint unchanged, and it will enter the $(\bullet)$ node with constraint $\lc\rc\rc\rc$. This time, the detour will enter $\piaux$, and will pop $\lc\rc\rc\rc$ and push $\rc\rc\rc\lc$ onto the stack. When exiting the cut, the thread will have a constraint starting with $\rc$ and therefore will immediately bounce on the axiom. When going back, it will push back the first $\lc$ on the way down to the original root. It will finally exit with constraint $\lc\rc\rc\rc\lc$, which is the wanted result of the mapping $\lc^k T\mapsto \lc^{k-1} T\lc$.

We can now iterate $\pishift$ in order to move the result $T$ on top of the stack.

Let us start with an auxiliary metarule $(\cop_\lc)$ (Fig. \ref{fig:copl}) copying the first letter of the stack if it is $\lc$, and do nothing if the stack starts with $\rc$. I.e. it has action $\left\{\begin{array}{c} \lc\mapsto\lc\lc \\ \rc\mapsto\rc\end{array}\right.$ on the stack. It will actually be the case that if the constraint starts with $\rc$, it starts with $\rc\rc$.

\begin{figure*}[h]
$$
~\hspace{3cm}\begin{array}{cc}
\begin{prooftree}
\Hypo{\vdash G,F,\gA}
\Infer{1}[\rcopl]{\vdash G,F,\gA} 
\end{prooftree}
\equiv 
&

\begin{prooftree}
\Hypo{\vdash G,F,\gA}
\Infer{0}[\rax]{\vdash \Gll, \Fl,\gA}
\Infer{0}[\rinf]{\vdash \Glr,\gA}	
  
		\Infer{2}[\mw]{\vdash \Gl, \Fl,\gA}
		\Infer{0}[\rax]{\vdash \Gr, \Fr}
\Infer{2}[\we]{\vdash \Gl\tensor \Gr,\Fl,\Fr,\gA}  

    \Infer{1}[\mnv]{\vdash G,F,\gA}
	
\Infer{2}[\rAcut]{\vdash G,F,\gA} 
\end{prooftree}
\end{array}
$$
\caption{The metarule $\rcopl$}
\label{fig:copl}
\end{figure*}
We now build the proof $\pimove$ (Fig. \ref{fig:pimove}), iterating $\pishift$, allowing to copy an unbounded quantity of information past the test result $T$. The proof $\pimove$ has the following action on the stack: $\left\{\begin{array}{c}\lc^{k} T\mapsto \lc^k T \lc^{k}\\ \rc\mapsto\rc\end{array}\right.$. The principle is to first duplicate the leading $\lc$, so that the extra occurrence can be used to test whether we want to perform a shifting using $\pishift$.
If the stack starts with $\rc$ (actually with $\rc\rc$), then the $\rcopl$ metarule will do nothing, and the thread will just bounce on the right axiom. 
This allows us to iteratively call $\pishift$, until we reach the encoded result $T$.
When exiting this gadget, the last $\pishift$ leaves constraint $T\lc^k$, and one extra $\lc$ is collected by each $(\bigstar)$ loop. That is why the constraint afer $\pimove$ is $\lc^kT\lc^k$.

\begin{figure*}[h]
$$
  ~\hspace{3cm}
  \begin{array}{cc}
\begin{prooftree}
\Hypo{\pimove}
\Infer{1}{\vdash G,F,\gA} 
\end{prooftree}
\equiv 
&

\begin{prooftree}
		\Hypo{(\bigstar)\vdash G,F,\gA}
		\Hypo{\pishift}
    \Infer{2}[\rAcut]{\vdash \Gl,\Fl,\gA}
    \Infer{0}[\rax]{\vdash \Gr,\Fr}
\Infer{2}[\mnvw]{\vdash G,F,\gA} 
\Infer{1}[\rcopl]{(\bigstar)\vdash G,F,\gA} 
\end{prooftree}
\end{array}
$$
\caption{The metarule $\pimove$}
\label{fig:pimove}
\end{figure*}
Another auxiliary metarule $(\perm)$ (Fig. \ref{fig:perm}) will allow us to prepare the input for $\pimove$ from the standard counter encoding, i.e. performing action $\rc\lc^k\rc\rc X \mapsto \rc\lc^k(\rc\rc X)\rc\rc X$, with $X\in\{\rc,\lc\}$. The parenthesized expression is the $T$ that we will want to move to the top. Notice that this corresponds to a copying of the counter $\rc\lc^k\rc$, followed by a mapping $\rc\rc X\mapsto \rc X\rc\rc X$. The dots in Fig. \ref{fig:perm} represents $(\infty)$ proofs, not detailed for concision.

\begin{figure*}[h]
$$
  ~\hspace{3cm}
  \begin{array}{cc}
\scalebox{1}{
\begin{prooftree}
\Hypo{\vdash G,F,\gA}
\Infer{1}[\rperm]{\vdash G,F,\gA} 
\end{prooftree}
}
\equiv 
&
\scalebox{1}{
\begin{prooftree}
\Hypo{\vdash G,F,\gA}

	\Infer{0}[\rinf]{\gray{\Gl,\Fl,\Frl,A}}

	\Infer{0}[\rax]{\vdash G_{\rc\lc\rc\rc\lc}, F_{\rc\rc\lc}}	
	\Infer{0}[\rax]{\vdash G_{\rc\rc\rc\rc\rc}, F_{\rc\rc\rc}}
	\Hypo{\dots}
	\Infer{3}[\mkrule{\tensor,\mu, \parr_A}$^*$]{ \vdash \Gr,  F_{\rc\rc\lc}, F_{\rc\rc\rc},\gA}

\Infer{2}[\mw]{\vdash G, \gray{\Fl}, \gray{\Frl}, F_{\rc\rc\lc}, F_{\rc\rc\rc}, \gA,\gA}
\Infer{1}[\mkrule{(\nu,\parr)^3,\parr_A}]{\vdash G,F,\gA}
\Infer{1}[\rcount]{\vdash G,F,\gA}

\Infer{2}[\rAcut] {\vdash G,F,\gA}
\end{prooftree}}
\end{array}
$$
\caption{The metarule $\perm$}
\label{fig:perm}
\end{figure*}

Let us now combine these gadgets to define a metarule $\rmove$, moving the test result at the wanted place, while producing extra stack content before it. This metarule has action $\rc\lc^kT\mapsto \lc^k T\lc^kT$, with $T\in\{\rc\rc\lc,\rc\rc\rc\}$, and leaves a garbage constraint $\rc$ that will be seen later on the way down.

$$
\begin{array}{cc}
\begin{prooftree}
\Hypo{\vdash G,F,\gA}
\Infer{1}[\rmove]{\vdash G,F,\gA} 
\end{prooftree}
\equiv 
&

\begin{prooftree}
	\Hypo{\vdash G,F,\gA}
    \Hypo{\pimove}
    \Infer{1}{\vdash G,F,\gA} 
\Infer{2}[\rAcut]{\vdash G_\rc,F_\rc,\gA} 
\Infer{1}[\mr]{\vdash G,F,\gA} 
\Infer{1}[\rperm]{\vdash G,F,\gA} 
\end{prooftree}
\end{array}
$$

We can detail the stack modifications in $\rmove$: $(\rc\lc^k)T\trans{\rperm}{} \rc\lc^k TT \trans{\mr}{} \lc^k TT \trans{\pimove}{} \lc^k T\lc^k T$.

We can finally build the metarule for $\T_2$, performing the wanted test and leaving garbage constraint of the form $\rc\lc^k\rc\rc X$ with $X\in \{\lc,\rc\}$.
$$
\begin{prooftree}
   
	\Hypo{(\clubsuit)\vdash \Gl,\Fl,\gA}
   
   \Hypo{(q_P)\vdash G,F,\gA}
   \Infer{1}[\mri]{\vdash G_{\rc\rc\lc},F_{\rc\rc\lc},\gA}
   \Hypo{(q_Z)\vdash G_,F,\gA}
   \Infer{1}[\mri]{\vdash G_{\rc\rc\rc},F_{\rc\rc\rc},\gA}
    \Infer{2}[\rexp]{\vdash \Grr,\Frr,\gA}
    \Infer{1}[\mr]{\vdash \Gr,\Fr,\gA}
    \Infer{2}[\rexp]{(\clubsuit)\vdash G,F,\gA}
    \Infer{1}[\rmove]{(p)\vdash G,F,\gA} 
    \end{prooftree}
$$

%
%
%

The principle of this gadget is the following: after preparing the stack via the $\rmove$ metarule, the $(\clubsuit)$ loops and the $\mr$ rule first pop the garbage prefix $\lc^k\rc\rc$. The following bit $X\in\{\lc,\rc\}$ is the wanted test result, and allows us to enter $(q_Z)$ or $(q_P)$ with a remaining stack that encodes the next configuration of the machine (after adding the leading $\rc$ to complete the valid encoding).

\medskip\subsubsection*{Final state $q_f$}
It remains to describe what happens to a thread entering the node labelled by the final state $q_f$. We will simply allow it to finally bounce on an axiom, thereby starting a downwards path that will gather all the garbage constraints, exit $\pi_M$, and enter the second cut and the proof $\pi_R$.

We just need to evacuate the formula $A$ from the sequent. Since we know that the constraint stack starts with $\rc$, this can be done in the following way:

$$
\begin{prooftree}
\Infer{0}[\rinf]{\vdash \gray{ \Gl,\Fl,A}}
\Infer{0}[\rax]{\vdash \Gr,\Fr}
\Infer{2}[\mnvw]{(q_f)\vdash G,F,\gA}
\end{prooftree}
$$

%
%
%
%
%
%
%
%

\medskip\subsubsection*{Exiting the proof $\pi_M$}

This concludes the description of the proof $\pi_M$, starting with $\begin{prooftree}\Hypo{(q_0)\vdash G,F,A}\Infer{1}[\rinit]{\vdash G,F,A}\end{prooftree}$, and built as described by linking state-labelled nodes using rules we defined for the action performed by each state according to $\delta$.

This construction ensures the following Lemma:

\begin{lemma}\label{lem:exitconstraints}
A thread entering $\pi_M$ with empty constraint will be able to exit it if and only if $M$ has a run.
If this is the case, the constraint after the exit is a word $u_1u_2\dots u_k \rc^4$, where $k$ is the number of tests performed, and $u_i$ encodes the results $X\in\{\lc,\rc\}$ of the $i^\mathit{th}$ test in the following way:
\begin{itemize}
\item if the test is on the first counter, then $u_i=\rc X$,
\item if the test is on the second counter, then $u_i=\rc\lc^k\rc\rc X$, where $k$ is the value of the first counter at the time of the test.
\end{itemize}
\end{lemma}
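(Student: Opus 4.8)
The plan is to read the claim off the stack-effects of the individual gadgets and compose them along the unique trajectory of the thread of interest. First I would invoke determinism: since $F=\nu X.(X\parr X)$ is the only formula on which an unfolding is ever performed on the relevant branch, a thread entering $\pi_M$ going upwards on $F$ has no freedom — at each unfolding its direction is dictated by the top of the constraint stack (pop $\lc$ to go left, pop $\rc$ to go right), exactly as recorded by the automaton $\Athread$. Hence fixing the entry point at the root of $\pi_M$ with empty stack determines a single maximal pre-thread, and it suffices to compute its behaviour. Applying the $\rinit$ metarule turns the empty stack into $\rc\rc\rc\rc$, i.e.\ the encoding of the configuration $(q_0,0,0)$.

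The core of the argument is an \emph{upward simulation invariant}, proved by induction on $i$: writing $(p_i,k[1]_i,k[2]_i)$ for the $i$-th configuration of the (unique, as $M$ is deterministic) run of $M$, the thread reaches the node $(p_i)$ going upward on $F$ with constraint stack exactly $(\rc\lc^{k[1]_i}\rc)(\rc\lc^{k[2]_i}\rc)$. The base case is the remark above. For the inductive step I would perform a case analysis on $\delta(p_i)$ and invoke the up-phase effect established for the corresponding gadget: $\rinc\tau$ realises $\rc\mapsto\rc\lc$ (after skipping the first counter with $\rcount$ when $\tau=2$); $\rdec\tau$ realises $\rc\lc\mapsto\rc$ and is \emph{defined precisely when the tested counter is positive} — were it $0$ the thread would be forced into an axiom-free $\rinf$ subproof and could never return, but run-consistency guarantees this does not occur — and $\rT\tau$ branches to $(q_Z)$ or $(q_P)$ according to whether the relevant bit of the stack is $\rc$ or $\lc$, \emph{restoring the stack to the configuration encoding on the way up}. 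Feeding in the induction hypothesis, each gadget sends the encoding of $(p_i,k[1]_i,k[2]_i)$ to that of $(p_{i+1},k[1]_{i+1},k[2]_{i+1})$, closing the induction. In particular the thread reaches the final node $(p_n)=(q_f)$ on $F$ with stack $(\rc\rc)(\rc\rc)=\rc^4$, using the normalisation that the run ends with both counters null.

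Next I would analyse the descent. At $(q_f)$ the dedicated gadget evacuates $A$ and lets the thread bounce from $F$ onto $G$; from there it travels downward on $G$, retracing the gadget path in reverse order $n-1,\dots,0$. The only gadgets that contribute a net change on this descent are the test gadgets: by their documented down-phase behaviour, the gadget simulating the $i$-th test pushes exactly $u_i$ — namely $\rc X$ for a test on the first counter and $\rc\lc^{k}\rc\rc X$ for a test on the second, where $X$ is the stored result and $k$ the first-counter value at the time of that test — whereas the copying and introduction steps of the $\rinc\tau$ and $\rdec\tau$ gadgets are matched on descent and leave no residue. Since the descent meets the tests in the reverse of the order in which they were performed, $u_k$ is pushed first (landing just above $\rc^4$) and $u_1$ last, so that reading the stack from the top gives $u_1u_2\cdots u_k\rc^4$, as claimed; the thread then exits $\pi_M$ on $G$ with this stack. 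This settles the ``if'' direction.

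For the converse I would argue by contraposition, again using determinism: if $M$ does not halt, its unique run is infinite, so by the invariant the thread keeps climbing through infinitely many gadgets, never reaches a $(q_f)$ gadget, hence never bounces back and never returns to the root — it cannot exit $\pi_M$. Equivalently, the only axioms from which the thread can escape downward without being trapped in an $\rinf$ subproof lie on the simulating trajectory, and any bounce that ultimately returns to the root forces the trajectory through $(q_f)$. The main obstacle is the precise bookkeeping of the descent: one must verify, gadget by gadget, which pushes are matched (copies) and which survive as garbage, and confirm the exact shape and order of $u_1\cdots u_k\rc^4$. This is where the explicit stack-transformation tables for the auxiliary proofs $\pishift$, $\perm$, $\pimove$ and the metarule $\rcopl$ are indispensable, the counter-two test — which relocates an unbounded block of constraints past the test result — being the delicate case.
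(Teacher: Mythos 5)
Your proof is correct and takes essentially the same route as the paper's: an induction on the length of the run/thread establishing the invariant that the thread reaches node $(p)$ with constraint stack $(\rc\lc^{n}\rc)(\rc\lc^{m}\rc)$ exactly when a partial run of $M$ reaches $p$ with counter values $n,m$, followed by the observation that the garbage pushed on the descent from $(q_f)$ encodes the test results (in reverse order of traversal, hence $u_1\cdots u_k$ read from the top) above the final $\rc^4$ coming from the terminal all-zero configuration. The paper gives only a two-sentence sketch of this; your version additionally spells out the determinism of the thread, the per-gadget stack effects, and the stuck/non-halting cases, which is precisely the bookkeeping the paper declares ``straightforward'' and leaves implicit.
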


\begin{proof}
It is straightforward to prove by induction of the length of the thread/run: a thread entering $\pi_M$ with empty constraint can reach node $(p)$ with constraint $(\rc\lc^n\rc)(\rc\lc^m\rc)$ if and only if their is a partial run of $M$ reaching state $p$ with counter values $n,m$. Moreover, the garbage constraints that will be pushed back from $(p)$ on the way back to the root of $\pi_M$ encode the results of tests as described in the statement of the Lemma.
The $\rc^4$ following garbage constraints is the constraint stack reached in $(q_f)$ at the end of computation, encoding two counters of value $0$, as we assumed $M$ ends with this configuration.
\end{proof}

However, we want the thread to be back on the main branch with a bounded number of constraints. We therefore must erase all these garbage constraints. This will be the role of the pre-proof $\pi_R$.

\medskip\subsubsection*{The reverse simulation proof $\pi_R$}
\newcommand{\dual}{\mathit{dual}}

The goal of the proof $\pi_R$ will be to erase the garbage constraints instead of creating them.
To achieve this, we will aim at building a dual version of the pre-proof $\pi_M$.
A convenient way to think about it is the following: we try to reproduce the proof $\pi_M$, but considering this time that the thread of interest $t'$ originates in the sequent $G$. Therefore it will always follow $G$ upwards and $F$ downwards. We call such a thread a $G$-thread, and the previous version used in $\pi_M$ a $F$-thread.

The principle is that if the $G$-thread $t'$ creates garbage constraints $u$, i.e. has a visible weight $u\in\Sigma^*$ then its dual, the identical thread considered in reverse and originating in $F$, has a visible weight $\dual(u)$, where the function $\dual:\Sigma^*\to\Sigma^*$ is defined as follows.
Let $v\mapsto\bar{v}:\Sigma^*\to\Sigma^*$ be the length-preserving morphism defined on letters by the following correspondence:

$$\begin{array}{|l||c|c|c|c|c|c|c|c|c|}
\hline
&&&&&&&&&\\[-1em]
x:& ~l~ & ~r~ & ~i~ & ~\bar{l}~ &~\bar{r}~ &~\bar{i}~ & ~A~ & ~C~ & ~W~ \\
\hline
&&&&&&&&&\\[-1em]
\bar{x}:& ~\bar{l}~ &~\bar{r}~ &~\bar{i}~& ~l~ & ~r~ & ~i~  & ~A~ & ~C~ & ~W~ \\
\hline
\end{array}
$$

Let $v^R$ be the reverse of a word $v$, defined by induction: $\epsilon^R=\epsilon$ and if $(u,a)\in\Sigma^*\times\Sigma$ then $(ua)^R=a(u^R)$.

We now define $\dual(u)=\bar{u}^R$.

\begin{lemma}\label{lem:dual}
Let $t$ be a thread from $\varphi_\alpha$ to $\psi_\beta$, and $\dual(t)$ be the identical thread considered in the other direction, from $\psi_\beta$ to $\varphi_\alpha$.
Then $\visP{t^R}=\dual(\visP{t})$
\end{lemma}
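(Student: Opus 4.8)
The whole statement reduces to a single computation about weights: reversing a pre-thread dualises its weight letter by letter and reverses the order, i.e.\ $\weight{t^R} = \dual(\weight{t})$. The plan is to prove this first, then transfer it from full weights to visible weights using the stability of the bracketing grammars of \cref{def:thread} under $\dual$.

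First I would establish $\weight{t^R} = \dual(\weight{t})$ by inspecting the four clauses of \cref{def:pre-thread} against the five clauses defining $\weight{\cdot}$. Reversing a pre-thread means reversing the sequence of tuples and flipping every direction $\uparrow \leftrightarrow \downarrow$; one checks that each local step of $t$ then becomes a legal local step of $t^R$ whose weight letter is the image of the original under the morphism $\bar{\cdot}$ of the table. Concretely: an upward step into a sub-occurrence ($F_i=\phi_\alpha$, $F_{i+1}=\psi_{\alpha x}$, weight $x$) becomes, after order-reversal and direction-flip, the downward step from $(F_{i+1},s_{i+1},\downarrow)$ to $(F_i,s_i,\downarrow)$, whose weight is $\bar x$; symmetrically a closing step $\bar x$ reverses to $x$; an axiom bounce ($\uparrow$ then $\downarrow$) reverses to the pair $(F_{i+1},\uparrow),(F_i,\downarrow)$ in the same axiom conclusion, again an axiom bounce, so $\Aax \mapsto \Aax$; a cut bounce reverses to a cut bounce, $\Ccut \mapsto \Ccut$; and a stationary step $F_i=F_{i+1}$ keeps weight $\Wwait$. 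Since the order is also reversed, the resulting word is $\overline{\weight{t}}^R = \dual(\weight{t})$.

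Next I would pass to visible weights. Writing the unique decomposition of $t$ as $\weight{t} = h_0 v_0 \cdots h_n v_n$ with $h_i = \weight{H_i} \in \mathcal{H}$ and $v_i = \weight{V_i} \in \{\Al,\Ar,\Ai,\Wwait\}^*$, one has $\visP{t} = v_0 \cdots v_n$. A structural induction on the grammar of \cref{def:thread} shows that $\dual$ is stable on $b$-paths, $\dual(\mathcal{B}) = \mathcal{B}$, and sends $h$-paths to their mirror images, $\dual(\mathcal{H}) = \{\epsilon\} \cup \mathcal{B}\Wwait^*\Aax$, while it sends a visible block over $\{\Al,\Ar,\Ai,\Wwait\}$ to a downward block over $\{\bar\Al,\bar\Ar,\bar\Ai,\Wwait\}$. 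Applying $\dual$ to the parse above therefore yields $\weight{t^R} = \dual(v_n)\,\dual(h_n)\cdots \dual(v_0)\,\dual(h_0)$, which is exactly the decomposition of the reversed thread read from its new origin, the $\dual(h_i)$ being the (reversed) hidden blocks and the $\dual(v_i)$ the visible ones. Extracting the visible weight of $t^R$ then returns $\dual(v_n)\cdots\dual(v_0) = \dual(v_0\cdots v_n) = \dual(\visP{t})$, as required.

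The delicate point, and the step I expect to cost the most care, is precisely this transfer: the decomposition of \cref{def:thread} is asymmetric, since visible parts are forced to travel strictly upward (opening brackets) and hidden parts must begin with an axiom bounce, so the reversed segments do not literally satisfy the same clauses. What must be argued is that \emph{visibility is a reversal-invariant structural property}. The stabilities $\dual(\mathcal{B})=\mathcal{B}$ and $\dual(\mathcal{H})=\{\epsilon\}\cup\mathcal{B}\Wwait^*\Aax$ guarantee that the matched hidden blocks of $t$ are in bijection with those of $t^R$, so a position is hidden in $t$ iff its mirror position is hidden in $t^R$; consequently the same positions survive as visible, and by the letterwise identity of the first step the surviving subword is exactly $\dual(\visP{t})$. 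Uniqueness of the decomposition transfers along the involution $\dual$, which closes the argument.
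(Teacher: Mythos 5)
Your central computation is right, and it is worth noting that the paper states this lemma without giving any proof at all, so your argument has to stand entirely on its own. The letterwise identity $\weight{t^R}=\mathit{dual}(\weight{t})$ does follow from the clause-by-clause inspection you describe: reversal plus direction-flip exchanges the $x$- and $\bar x$-steps and preserves $\Aax$, $\Ccut$ and $\Wwait$, and since the order is also reversed one gets exactly $\overline{\weight{t}}^R$. The grammar stabilities $\mathit{dual}(\C{B})=\C{B}$ and $\mathit{dual}(\C{H})=\{\epsilon\}\cup\C{B}\Wwait^*\Aax$ are immediate inductions on the productions, and on the paper's own example thread the complement-of-hidden-blocks reading of visibility gives exactly $\mathit{dual}(\visP{t})$, so you are proving the intended statement.

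You have also put your finger on the one real difficulty, but your resolution of it is asserted rather than carried out. With the paper's definition, $t^R$ is in general \emph{not} a thread: its weight typically begins with a letter in $\{\bar\Al,\bar\Ar,\bar\Ai\}$ or with $\Ccut$, which can open neither a visible block (upward letters only) nor a nonempty hidden block (which must begin with $\Aax$). So $\visP{t^R}$ is, strictly speaking, undefined, and the phrase ``the same positions survive as visible'' presupposes a notion of visibility for reversed threads that nobody has defined. To close the argument you need to (i) introduce the mirror class of decompositions $\odot_i(V'_i\odot H'_i)$ with $\weight{V'_i}$ over $\{\bar\Al,\bar\Ar,\bar\Ai,\Wwait\}$ and $\weight{H'_i}\in\C{B}\Wwait^*\Aax$, (ii) prove uniqueness of that decomposition by the argument symmetric to the one for ordinary threads, and (iii) only then observe that $\mathit{dual}$, being an involution, carries the canonical decomposition of $t$ to the canonical mirror decomposition of $t^R$, so the visible subwords correspond block by block and the identity follows from step one. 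None of this is difficult, and your bijection-of-hidden-blocks remark is precisely (i)--(iii) in compressed form, but as written the proof leans on a definition that neither you nor the paper supplies; making it explicit would also repair the statement of the lemma itself, which currently applies $\visP{\cdot}$ to an object outside its domain.
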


Let $t_M$ be the $F$-thread of the main pre-proof $P$ going through the $\pi_M$ cut and ending with the cut rule, and $t_R$ be the analog $F$-thread for the $\pi_R$ cut. We want to build $\pi_R$ such that $\visP{(t_R)^R}=\visP{t_M}$. By Lemma \ref{lem:dual}, this implies $\visP{t_R}=\dual(\visP{t_M})$, and moreover $\visP{t_M}\subseteq \{\bar{l},\bar{r},\bar{i}\}^*$, and $\weight{t_M}$ ends with $C$. Therefore $\visP{t_Mt_R}=\epsilon$.

In the following, we will describe how gadgets of $\pi_M$ are dualized to create $\pi_R$, where the run of the machine is simulated by a $G$-thread.

\medskip\subsubsection*{Dual auxiliary metarules}

Rules relatives to $A$ are left unchanged, so we will freely used $\rinf, \rAA,\rAcut$.

The rules $\rexp,\mr,\ml$ also stay identical.

The main difference will occur when we want to introduce or delete a constraint, since they will now be reversed.
In particular, in the previous construction, we saw that introducing a constraint could be done at will without any assumption, but removing a constraint as done in $\rdec1$ needed the presence of a known bit occuring before the constraint to be removed.

We therefore redefine constraint introduction gadgets $\pi_{\rci'}$ and $\mrip$, with the notable change that the effect on stack is now $\rc\mapsto \rc\rc$, i.e. we always assume that the stack starts with $\rc$.
We keep the convention for cuts, i.e. the newly introduced cut formulas are written on the inside (left $F$ and right $G$).

$$
\begin{array}{cc}
\begin{prooftree}
\Hypo{\pi_{\rci'}}
\Infer{1}{\vdash G,F,\gA} 
\end{prooftree}
\equiv 
&
\begin{prooftree}
   
    \Infer{0}[\rinf]{\gray{\Gl,\Fl, \Frl,A}}
    \Infer{0}[\rax]{\Gr, \Frr}
    \Infer{2}[\we]{\vdash \Gl\tensor \Gr, \gray{\Fl}, \gray{\Frl},\Frr, \gA}
    \Infer{1}[\nuvee]{\vdash \Gl\tensor \Gr, \gray{\Fl}, \Fr, \gA}
	\Infer{1}[\mnv]{\vdash G,F,\gA}

 \end{prooftree}
\end{array}
$$

$$
\begin{array}{cc}
\begin{prooftree}
\Hypo{\vdash G,F,\gA}
\Infer{1}[\mrip]{\vdash G,F,\gA} 
\end{prooftree}
\equiv 
&
\begin{prooftree}
   
    \Hypo{\pi_{\rci'}}
	\Infer{1}{\vdash G,F,\gA}

\Hypo{\vdash G,F,\gA}
\Infer{2}[\rAcut]{\vdash G,F,\gA}
 \end{prooftree}
\end{array}
$$

It is easily verified that a $G$-thread entering with constraint $\rc$ will exit with constraint $\rc\rc$.

As before, we also define the left analog $\pi_{\lci'}$ and $\mlip$, with effect $\rc\mapsto\rc\lc$.

\medskip\subsubsection*{Initialisation of $\pi_R$}

We now want to initialize the dual thread. However, since the introduction rule needs to assume an $\rc$ constraint already on the stack, we will need to assume this for the $G$-thread entering $\pi_R$. This means that in the end, the $F$-thread exiting $\pi_R$ via $G$ (the real thread of interest) will have a leftover constraint $\rc$, that we will need to evacuate on the main branch, as done in the main pre-proof $P$.

Thus, we only need to add three $\rc$ constraints to the one already assumed:

$$
\begin{array}{cc}
\begin{prooftree}
\Hypo{(q_0')\vdash G,F,\gA}
\Infer{1}[\rinitp]{\vdash G,F,\gA} 
\end{prooftree}
\equiv 
&
   \begin{prooftree}
   \Hypo{(q_0')\vdash G,F,\gA}

 	\Infer{1}[\mrip]{\vdash G,F,\gA}
 	\Infer{1}[\mrip]{\vdash G,F,\gA}
    \Infer{1}[\mrip]{\vdash G,F,\gA}
    \end{prooftree}
\end{array}
$$

We will use the $(p')$ notation with $p\in Q$ for state-labelled nodes of $\pi_R$, to distinguish them from nodes in $\pi_M$.

\medskip\subsubsection*{Dual encoding of action $\Inc$}

\newcommand{\rincp}[1]{\mkrule{\mathsf{Inc'}_{#1}}}

Assume $\delta(p)=\Inc_1(q)$ with $q\in Q$.
We want to define a metarule $(\Inc_1')$ updating the configuration reached in $(p')$ by acting on the constraint stack, and ending up in node $(q')$, with effect $\rc\mapsto\rc\lc$ on the stack. Therefore it suffices to use the $\lc$ introduction defined before:

$$
\begin{array}{cc}
\begin{prooftree}
\Hypo{(q')\vdash G,F,\gA}
\Infer{1}[\rincp1]{(p')\vdash G,F,\gA} 
\end{prooftree}
\equiv 
&
   \begin{prooftree}
   \Hypo{(q')\vdash G,F,\gA}
    \Infer{1}[\mlip]{(p')\vdash G,F,\gA}
    \end{prooftree}
\end{array}
$$

To manipulate the second counter instead, we can use the $\rcount$ rule which is identical as in $\pi_M$. Thus the metarule $\rincp2$ is defined as follows:

%
%

$$
   \begin{prooftree}
   
    \Hypo{\pi_{\lci'}}
    \Infer{1}{\vdash G,F,\gA}
    \Infer{1}[\rcount]{\vdash G,F,\gA}
    
    \Hypo{(q')\vdash G,F,\gA}
    \Infer{2}[\rAcut]{(p')\vdash G,F,\gA}
    \end{prooftree}
$$

The effect of this metarule on a $G$-thread is $(\rc\lc^*\rc)\rc\mapsto (\rc\lc^*\rc)\rc\lc$

\medskip\subsubsection*{Dual encoding of action $\Dec$}
\newcommand{\rdecp}[1]{\mkrule{\mathsf{Dec'}_{#1}}}

We now need to encode a metarule having the effect $\rc\lc\mapsto\rc$ on $G$-thread.

This is done by metarule $\rdecp1$, described here:

$$
   \begin{prooftree}
   
   \Infer{0}[\rax]{\vdash \Grl,\Fr}
      \Infer{0}[\rinf]{\vdash \gray{\Grr,A}}
    \Infer{2}[\mw]{\vdash \Gr,\Fr,\gA}
 	\Infer{1}[\mr]{\vdash G,F,\gA}
    \Hypo{(q')\vdash G,F,\gA}
    \Infer{2}[\rAcut]{(p')\vdash G,F,\gA}
    \end{prooftree}
$$

%
To manipulate the second counter, again we just need to have a $(\counter)$ metarule, as shown in metarule $\rdecp2$:

%

$$
   \begin{prooftree}
   
   \Infer{0}[\rax]{\vdash \Grl,\Fr}
      \Infer{0}[\rinf]{\vdash \gray{\Grr,A}}
    \Infer{2}[\mw]{\vdash \Gr,\Fr,\gA}
 	\Infer{1}[\mr]{\vdash G,F,\gA}
 	\Infer{1}[\rcount]{\vdash G,F,\gA}
    \Hypo{(q')\vdash G,F,\gA}
    \Infer{2}[\rAcut]{(p')\vdash G,F,\gA}
    \end{prooftree}
$$

\medskip\subsubsection*{Dual encoding of action $\T{}$}
\newcommand{\rTp}[1]{\mkrule{\T'_{#1}}}
\newcommand{\rXcop}{\mkrule{\rc X\mathrm{cop}}}
If $\delta(p)=\T_1(q_Z,q_P)$, we can define the Test rule in a similar way as in $\pi_M$.
However, extra care is needed for introduction rules, as they cannot be used as freely as before.

Thus, we first define the metarule $\rXcop$ (Fig. \ref{fig:rXcop}), with effect $\rc X\mapsto \rc X\rc X$ on a $G$-thread, for $X\in\{\lc,\rc\}$.

\begin{figure*}[h]
$$
\begin{array}{cc}
\begin{prooftree}
\Hypo{ \vdash G,F,\gA}
\Infer{1}[\rXcop]{\vdash G,F,\gA} 
\end{prooftree}
\equiv 
&
   \begin{prooftree}
   
	\Infer{0}[\rinf]{\vdash \gray{\Gl,\Gamma_F,A}}
   
	\Infer{0}[\rax]{\Grl,F_{\rc\lc\rc\lc}}
	\Infer{0}[\rax]{\Grr,F_{\rc\rc\rc\rc}}   
   \Infer{2}[\mw]{\Gr,F_{\rc\lc\rc\lc},F_{\rc\rc\rc\rc}}
    \Infer{2}[\mw]{\vdash G,\gray{\Gamma_F},F_{\rc\lc\rc\lc},F_{\rc\rc\rc\rc},\gA}
    \Infer{1}[\mkrule{(\nu,\parr)^4}]{\vdash G,F,\gA}
    \end{prooftree}
\end{array}
$$
\caption{The metarule $\rXcop$}
\label{fig:rXcop}
\end{figure*}

We can now give the metarule $\rTp1$, that uses the first $\rc X$ output by $\rXcop$ to perform the zero test:

$$
   \begin{prooftree}
   \Hypo{(q_P')\vdash \Grl,\Frl,\gA}
   \Hypo{(q_Z')\vdash \Grr,\Frr,\gA}
    \Infer{2}[\rexp]{\vdash \Gr,\Fr,\gA}
    \Infer{1}[\mr]{\vdash G,F,\gA}
    \Infer{1}[\rXcop]{(p')\vdash G,F,\gA} 
    \end{prooftree}
$$

If the $G$-thread enters with constraint prefix $\rc\rc$, this prefix will be copied, the first copy will be read, put in the garbage to be collected later, and the $G$-thread will enter $(q_Z')$. Similarly if the constraint started with $\rc\lc$, the $G$-thread will go to $(q_P')$ with unchanged stack and garbage $\rc\lc$.

To treat the case of the second counter, i.e. $\delta(p)=\T_2(q_Z,q_P)$, we just need to describe dual versions of all gadgets from $(\T_2)$ in $\pi_M$. We design them so that the effect on the $G$-thread is exactly the same as the one of their original version on the $F$-thread.

\newcommand{\rmovep}{\mkrule{\mathsf{result}'}}
\newcommand{\rcoplp}{\mkrule{\cop_\lc'}}
\newcommand{\rpermp}{\mkrule{\perm'}}

The only difficulty to keep in mind is that the introduction rules now assume that the constraint stack starts with $\rc$.
We start with $\pishift'$ described Fig. \ref{fig:pishiftp}, where the pairing between $G$ and $F$ formulas in axioms is recalled in the table:
$$\begin{array}{|l|c|c|c|}
\hline
\text{before: }&\lc\lc & \lc\rc\rc\lc & \lc\rc\rc\rc \\
\hline
\text{after: }&\lc\lc  & \rc\rc\lc\lc & \rc\rc\rc\lc\\
\hline
\end{array}
$$

\begin{figure*}[h]
$$
\begin{prooftree}

    \Hypo{\piaux'}	
     \Infer{1}{\vdash \Gl,\Fll,F_{\rc\rc\lc\lc},F_{\rc\rc\rc\lc},\gA}
    \Infer{0}[\rinf]{\vdash \gray{\Gr,\Flr,\Frl,F_{\rc\lc\rc},F_{\rc\rc\rc},A}}
    \Infer{2}[\mw]{\vdash G,\Fll,\gray{\Flr},\gray{\Frl},\gray{F_{\rc\lc\rc}},F_{\rc\rc\lc\lc},\gray{F_{\rc\rc\lc\rc}},F_{\rc\rc\rc\lc}\gray{F_{\rc\rc\rc\rc}},\gA,}
\Infer{1}[\mkrule{(\nu,\parr)^6}]{\vdash G,F,\gA}

\Hypo{(\bullet)\vdash \Gl,\Fl,\gA}
\Infer{0}[\rax]{\vdash \Gr,\Fr}
\Infer{2}[\mnvw]{\vdash G,F,\gA}	

\Infer{2}[\rAcut]{(\bullet)\vdash G,F,\gA} 
\end{prooftree}
$$
\caption{The pre-proof $\pishift'$}
\label{fig:pishiftp}
\end{figure*}

For readability, we separated $\piaux'$, described Fig. \ref{fig:piauxp}:
\begin{figure*}[h]
$$
\begin{array}{cc}
\scalebox{.9}{
\begin{prooftree}
\Hypo{\piaux'}
 \Infer{1}{\vdash \Gl,\Fll,F_{\rc\rc\lc\lc},F_{\rc\rc\rc\lc},\gA}
 \end{prooftree}}
\equiv
&
\scalebox{.9}{
\begin{prooftree}	
	 
		\Infer{0}[\rax]{\vdash \Gll, \Fll}
		\Infer{0}[\rinf]{\vdash \gray{G_{\lc\rc\lc}, A}}		
		
		\Infer{0}[\rax]{\vdash  G_{\lc\rc\rc\lc},F_{\rc\rc\lc\lc}}
	
		\Infer{0}[\rax]{\vdash G_{\lc\rc\rc\rc},F_{\rc\rc\rc\lc}}
		\Infer{2}[\mw]{\vdash G_{\lc\rc\rc},F_{\rc\lc\lc},F_{\rc\rc\lc}}
		
		\Infer{2}[\mw]{\vdash \Glr,F_{\rc\lc\lc},F_{\rc\rc\lc},\gA}

    \Infer{2}[\mw]{\vdash \Gl,\Fll,F_{\rc\rc\lc\lc},F_{\rc\rc\rc\lc},\gA}
\end{prooftree}}
\end{array}
$$
\caption{The pre-proof $\piaux'$}
\label{fig:piauxp}
\end{figure*}

As before, the effect of $\pishift'$ on the $G$-thread is $\lc^{k+1} T\mapsto \lc^k T \lc$ with $T\in\{\rc\rc\rc,\rc\rc\lc\}$.
We continue with the dual $\rcoplp$ of $\rcopl$, represented Fig. \ref{fig:cop}, with action $\left\{\begin{array}{c} \lc\mapsto\lc\lc \\ \rc\rc\mapsto\rc\rc\end{array}\right.$. Notice that we now use the fact that in the present context, a stack starting with $\rc$ actually starts with $\rc\rc$, as mentioned when defining the gadget $\rcopl$.

\begin{figure*}[h]
$$
\begin{array}{cc}
\scalebox{.9}{
\begin{prooftree}
\Hypo{\vdash G,F,\gA}
\Infer{1}[\rcoplp]{\vdash G,F,\gA} 
\end{prooftree}}
\equiv 
&
\scalebox{.9}{
\begin{prooftree}

\Infer{0}[\rax]{\vdash \Gl, \Fll}
		\Infer{0}[\rinf]{\vdash \gray{\Grl,\Flr,\Frl,\gA}}	
		\Infer{0}[\rax]{\vdash \Grr, \Frr}
		\Infer{2}[\mw]{\vdash \Gr, \gray{\Flr},\gray{\Frl},\Frr,\gA} 
\Infer{2}[\we]{\vdash G,\Fll,\gray{\Flr},\gray{\Frl},\Frr,\gA}  

    \Infer{1}[\mkrule{(\nu,\parr)^2}]{\vdash G,F,\gA}
	
\Hypo{\vdash G,F,\gA}
\Infer{2}[\rAcut]{\vdash G,F,\gA} 
\end{prooftree}}
\end{array}
$$
\caption{The metarule $\rcoplp$}
\label{fig:cop}
\end{figure*}

We turn to $\pimove'$, represented Fig. \ref{fig:pimovep}, iterating $\pishift$ with effect $\lc^{k} T\mapsto \lc^k T \lc^{k}$ and $\rc\rc\mapsto\rc\rc$, functioning along the same principle as $\pimove$:
\begin{figure*}[h]
$$
\begin{array}{cc}
\scalebox{.9}{
\begin{prooftree}
\Hypo{\pimove'}
\Infer{1}{\vdash G,F,\gA} 
\end{prooftree}}
\equiv 
&
\scalebox{.9}{
\begin{prooftree}
		\Hypo{\pishift'}
		\Infer{1}{\vdash G,F,\gA} 
		\Hypo{(\bigstar')\vdash G,F,\gA}
    \Infer{2}[\rAcut]{\vdash \Gl,\Fl,\gA}
    \Infer{0}[\rax]{\vdash \Gr,\Fr}
\Infer{2}[\mnvw]{\vdash G,F,\gA} 
\Infer{1}[\rcoplp]{(\bigstar')\vdash G,F,\gA} 
\end{prooftree}}
\end{array}
$$
\caption{The pre-proof $\pimove'$}
\label{fig:pimovep}
\end{figure*}

%

The rule $\rpermp$ (Fig. \ref{fig:permp}) will again allow us to prepare the input for $\pimove'$, using the same pairing as before.
This rule has effect $(\rc\lc^k\rc)\rc X\mapsto (\rc\lc^k\rc)\rc X \rc\rc X$, with $X\in\{\lc,\rc\}$.
We will note $\Gamma_F$ for a sequent composed of several non-pertinent occurrences of formula $F$, for concision.
\begin{figure*}[h]
$$
\begin{array}{cc}\scalebox{.9}{
\begin{prooftree}
\Hypo{\vdash G,F,\gA}
\Infer{1}[\rpermp]{\vdash G,F,\gA} 
\end{prooftree}}
\equiv 
&
\scalebox{.9}{
\begin{prooftree}
	\Infer{0}[\rinf]{\gray{\Gl,\Gamma_F,A}}

	\Infer{0}[\rax]{\vdash \Grl,  F_{\rc\lc\rc\rc\lc}}	
	\Infer{0}[\rax]{\vdash \Grr, F_{\rc\rc\rc\rc\rc}}
	\Infer{2}[\mw]{ \vdash \Gr, F_{\rc\lc\rc\rc\lc}, F_{\rc\rc\rc\rc\rc}}

\Infer{2}[\mw]{\vdash G, \gray{\Gamma_F}, F_{\rc\lc\rc\rc\lc}, F_{\rc\rc\rc\rc\rc}, \gA}
\Infer{1}[\mkrule{(\nu,\parr)^5}]{\vdash G,F,\gA}
\Infer{1}[\rcount]{\vdash G,F,\gA} 

\Hypo{\vdash G,F,\gA} 

\Infer{2}[\rAcut]{\vdash G,F,\gA} 
\end{prooftree}}
\end{array}
$$
\caption{The metarule $\rpermp$}
\label{fig:permp}
\end{figure*}
We continue with the metarule $\rmovep$, having action $\rc\lc^kT\mapsto \lc^k T\lc^kT$ on the $G$-thread, with $T\in\{\rc\rc\lc,\rc\rc\rc\}$, and leaving a garbage constraint $\rc$ that will be seen later on the way down.
$$
\begin{array}{cc}\scalebox{.9}{
\begin{prooftree}
\Hypo{\vdash G,F,\gA}
\Infer{1}[\rmovep]{\vdash G,F,\gA} 
\end{prooftree}}
\equiv 
&
\scalebox{.9}{
\begin{prooftree}
 \Hypo{\pimove'}
 \Infer{1}{\vdash G,F,\gA} 
	\Hypo{\vdash G,F,\gA}
\Infer{2}[\rAcut]{\vdash \Gr,\Fr,\gA} 
\Infer{1}[\mr]{\vdash G,F,\gA} 
\Infer{1}[\rpermp]{\vdash G,F,\gA} 
\end{prooftree}}
\end{array}
$$
\newcommand{\rXr}{\mkrule{\mathsf{rXr}_i}}
Since the dual form of the $\mri$ rule used in $\rT2$ needs extra care, we will devise another auxiliary gadget rule $\rXr$ (Fig. \ref{fig:rXr}), with effect $\rc X \mapsto \rc X \rc$ with $X\in\{\lc,\rc\}$. This is to restore the stack content after the bit $X$ encoding the test result to its original value starting with $\rc$.
\begin{figure*}[h]
$$
\begin{array}{cc}\scalebox{.9}{
\begin{prooftree}
\Hypo{ \vdash G,F,\gA}
\Infer{1}[\rXr]{\vdash G,F,\gA} 
\end{prooftree}}
\equiv 
&
\scalebox{.9}{
   \begin{prooftree}
   
	\Infer{0}[\rinf]{\vdash \gray{\Gl,\Gamma_F,A}}
   
	\Infer{0}[\rax]{\Grl,F_{\rc\lc\rc}}
	\Infer{0}[\rax]{\Grr,F_{\rc\rc\rc}}   
   \Infer{2}[\mw]{\Gr,F_{\rc\lc\rc},F_{\rc\rc\rc}}
    \Infer{2}[\mw]{\vdash G,\gray{\Gamma_F},F_{\rc\lc\rc},F_{\rc\rc\rc},\gA}
    \Infer{1}[\mkrule{(\nu,\parr)^3}]{\vdash G,F,\gA}
    \end{prooftree}}
\end{array}
$$
\caption{The metarule $\rXr$}
\label{fig:rXr}
\end{figure*}
We can finally build the metarule for $\T_2$ (Fig. \ref{fig:T2p}), performing the wanted test and leaving garbage constraint of the form $\rc\lc^k\rc\rc X$ with $X\in \{\lc,\rc\}$. 
\begin{figure*}[h]
$$
\begin{array}{cc}
\scalebox{.9}{
\begin{prooftree}
   \Hypo{(q_P)\vdash G,F,\gA}
   \Hypo{(q_Z)\vdash G,F,\gA}
\Infer{2}[\rTp2]{(p)\vdash G,F,\gA} 
\end{prooftree}}
\equiv 
&
\scalebox{.9}{
   \begin{prooftree}
   
	\Hypo{(\clubsuit')\vdash \Gl,\Fl,\gA}
   \Hypo{(q_P) \vdash G_{\rc\rc\lc},F_{\rc\rc\lc},\gA}
\Hypo{(q_Z) \vdash G_{\rc\rc\rc},F_{\rc\rc\rc},\gA}

    \Infer{2}[\rexp]{\vdash \Grr,\Frr,\gA}
    \Infer{1}[\mr]{\vdash \Gr,\Fr,\gA}
    \Infer{1}[\rXr]{\vdash \Gr,\Fr,\gA}
    \Infer{2}[\rexp]{(\clubsuit')\vdash G,F,\gA}
    \Infer{1}[\rmovep]{(p)\vdash G,F,\gA} 
    \end{prooftree}}
\end{array}
$$
\caption{The metarule $\rTp2$}
\label{fig:T2p}
\end{figure*}
After entering $\rTp2$ with stack content $u$, the $G$-thread will reach node $(q_Z')$ (resp. $(q_P')$) if the second counter value is zero (resp. not zero), with same stack content $u$.

\medskip\subsubsection*{Dual final state $q_f'$}
As before, when reaching the node $(q_f')$, we just need bounce on an axiom after evacuating the formula $A$ from the sequent. The same gadget as in $\pi_M$ can be used:

$$
\begin{prooftree}
\Infer{0}[\rinf]{\vdash \gray{ \Gl,\Fl,A}}
\Infer{0}[\rax]{\vdash \Gr,\Fr}
\Infer{2}[\mnvw]{(q_f')\vdash G,F,\gA}
\end{prooftree}
$$

\medskip\subsubsection*{Correctness of the pre-proof $P$ and conclusion}

By construction, if the machine $M$ does not halt, then the $F$-thread entering $\pi_M$ will never exit it, so it cannot validate the branch looping through the $(\star)$ nodes, and the pre-proof $P$ is not a proof.
Conversely, if $M$ halts, then the $F$-thread will exit $\pi_M$ with a constraint $u$ encoding the results of the tests as described in Lemma \ref{lem:exitconstraints}. Moreover, the $G$-thread $t'$ entering $\pi_R$ with a single constraint $\rc$ will exit on the same node with the same constraints $u$. This means that the $F$-thread going through $\pi_M$ and $\pi_R$ will exit $\pi_R$ with a single constraint $\rc$. This constraint will be popped on the first unfolding of $F$, and the second (left) unfolding of $F$ will be on the visible part of the thread. The thread then reaches node $(\star)$, and loops back to the root to go through the same path infinitely many times.

Thus the $F$-thread starting in the root validates the branch containing infinitely many $(\star)$ nodes. All other infinite branches of $P$ are validated by non-bouncing thread following formulas $A$, that can also be originated in the root of $P$, in the formula $B$.

Therefore, we showed the following theorem:
\begin{theorem}
The pre-proof $P$ is a proof if and only if the 2CM $M$ has a run.
\end{theorem}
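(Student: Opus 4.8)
The plan is to establish the two directions of the biconditional separately, reusing the structural analysis of $P$ already developed through $\pi_M$, $\pi_R$ and the duality machinery. The pre-proof $P$ has a single ``interesting'' infinite branch, namely the branch $\beta_\star$ that passes through the $(\star)$ nodes infinitely often and loops back to the root; every other infinite branch eventually stays inside an $\infty$-subproof. So the equivalence reduces to: $\beta_\star$ carries a valid thread if and only if $M$ halts.

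First I would treat the easy direction. Suppose $M$ does not halt. Any thread validating $\beta_\star$ must realize the visible $\nu$-unfolding of $F$ performed on that branch; since $F$ is the only formula unfolded there, such a thread has to follow $F$ upward into the first cut, and by the convention fixing cut formulas it turns right, bounces inside $\pi_M$, and can contribute to the loop only if it manages to exit $\pi_M$ and reach $\pi_R$. By Lemma~\ref{lem:exitconstraints}, a thread entering $\pi_M$ with empty constraint can exit it precisely when $M$ has a run. Hence, when $M$ does not halt, no thread can complete the detour, $\beta_\star$ is unvalidated, and $P$ is not a proof.

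For the converse, assume $M$ halts, and trace the candidate thread $t$ through the whole detour. By Lemma~\ref{lem:exitconstraints}, the $F$-thread entering $\pi_M$ with empty constraint exits with a garbage constraint $u=u_1\cdots u_k\,\rc^4$ encoding the test results. This feeds the second cut, where the $G$-thread $t_R$ entering $\pi_R$ with a single $\rc$ is, by design of the dual gadgets, built so that $\visP{(t_R)^R}=\visP{t_M}$; Lemma~\ref{lem:dual} then gives $\visP{t_R}=\dual(\visP{t_M})$. Since $\visP{t_M}\in\{\bar{\Al},\bar{\Ar},\bar{\Ai}\}^*$ and $\weight{t_M}$ ends with $\Ccut$, the two visible parts cancel, $\visP{t_M\,t_R}=\epsilon$: the reverse simulation in $\pi_R$ pops exactly the garbage produced by $\pi_M$, so the combined thread exits $\pi_R$ through $G$ with a single leftover $\rc$. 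That $\rc$ is consumed by the first $\nu$-unfolding of $F$ on the main branch, and the second unfolding lands on the visible part; the thread then loops back to the root. Iterating the loop yields an infinite thread whose visible part consists only of the $\nu$-unfoldings of $F$, so its minimal formula is $F=\nu X.(X\parr X)$, a $\nu$-formula, and the thread is valid with visible part contained in $\beta_\star$.

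It remains to validate the other infinite branches: since they never loop through $(\star)$, each eventually enters one of the cut-free, axiom-free $\infty$-subproofs unfolding $A=\nu X.(X\parr X)\tensor X$, and is therefore supported by a straight thread following $A$ (originating in the formula $B$ of the conclusion), which is valid because $A$ is a $\nu$-formula. Combining both directions gives the stated equivalence. The main obstacle is the converse: one must verify that the constraints pushed by $\pi_M$ are exactly those popped by $\pi_R$, i.e. that each dual metarule of $\pi_R$ has, on a $G$-thread, the reverse effect of the corresponding metarule of $\pi_M$ on an $F$-thread, so that Lemma~\ref{lem:dual} applies uniformly and $\visP{t_M\,t_R}=\epsilon$. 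The careful bookkeeping of the leftover $\rc$ (absorbed by the first $F$-unfolding) and of the $\rc^4$ tail is the delicate point.
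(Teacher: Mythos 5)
Your proof is correct and follows essentially the same route as the paper's: the same reduction to the $(\star)$-branch, the same use of Lemma~\ref{lem:exitconstraints} for both directions, the same cancellation argument via Lemma~\ref{lem:dual} giving $\visP{t_M\,t_R}=\epsilon$ with the leftover $\rc$ absorbed by the first $F$-unfolding, and the same treatment of the remaining branches via straight threads on $A$ rooted in $B$. If anything, you spell out the duality/cancellation step and the necessity argument in the non-halting direction slightly more explicitly than the paper does, but the underlying argument is identical.
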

By Theorem \ref{thm:2CM}, we obtain that deciding whether a circular pre-proof of \muMLLo is a proof is undecidable, and more precisely it is $\Sigma^0_1$-hard.
In the next section, we show that this problem is recursively enumerable, so we obtain Cor. \ref{cor:sigma01}, stating that deciding validity of a circular pre-proof of \muMLLo is $\Sigma^0_1$-complete.

\clearpage
\newpage
\tableofcontents

\end{document}